\newcommand{\parbold}[1]{\vspace{.25em}\noindent\textbf{#1}}
\renewcommand\cite[1]{\citep{#1}}
\newcommand{\Comments}{1}
\newcommand{\mynote}[3]{\ifnum\Comments=1\textcolor{#1}{#2: #3}\fi}
\newcommand\obs{\ensuremath{\text{observed}}}
\definecolor{DarkBlue}{rgb}{0.1,0.1,0.5}
\newcommand\blfootnote[1]{%
	\begingroup
	\renewcommand\thefootnote{}\footnote{#1}%
	\addtocounter{footnote}{-1}%
	\endgroup
}
\newcommand\bbR{\ensuremath{\mathbb{R}}}
 \newcommand\bbP{\ensuremath{\mathbb{P}}}
\newtheorem{lemma}{Lemma}
\newtheorem{theorem}{Theorem}
\renewcommand{\cite}[1]{\citep{#1}}
	\providecommand\BibTeX{{%
			\normalfont B\kern-0.5em{\scshape i\kern-0.25em b}\kern-0.8em\TeX}}}
\begin{document}




    

        \title{Quantifying Spatial Under-reporting Disparities\\in Resident Crowdsourcing}

    \author{
    Zhi Liu\\
    Cornell Tech\\
    \texttt{zl724@cornell.edu} \\
    \and
    Uma Bhandaram\\
    NYC Dept of Parks \& Recreation\\
    \texttt{Uma.Bhandaram@parks.nyc.gov}
    \and
    Nikhil Garg\\
    Cornell Tech\\
    \texttt{ngarg@cornell.edu}
    }
    \date{}
    \maketitle
    

\begin{center}
        \parbold{Keywords:} Crowdsourcing, benchmark problem, disparities, Bayesian statistics 
\end{center}

\begin{abstract}
Modern city governance relies heavily on crowdsourcing to identify problems such as downed trees and power lines. A major concern is that residents do not report problems at the same rates, with heterogeneous reporting delays directly translating to downstream disparities in how quickly incidents can be addressed. Measuring such under-reporting is a difficult statistical task, as, by definition, we do not observe incidents that are not reported or when reported incidents first occurred. Thus, low reporting rates and low ground-truth incident rates cannot be naively distinguished, and reporting \textit{delays} are unobserved. We develop a method to identify (heterogeneous) reporting delays, without using external ground-truth data. Our insight is that rates on \textit{duplicate} reports about the same incident can be leveraged to disambiguate whether an incident has occurred with its reporting rate once it has occurred. Using this idea, we theoretically reduce the question to a standard Poisson rate estimation task---even though the full incident reporting interval is also unobserved.

We apply our method to over \num{100000} resident reports made to the New York City Department of Parks and Recreation and to over \num{900000} reports made to the Chicago Department of Transportation and Department of Water Management, finding that there are substantial spatial disparities in how quickly incidents are reported, even after controlling for incident characteristics -- some neighborhoods report three times as quickly as do others. These spatial disparities correspond to socioeconomic characteristics: in NYC, higher population density, the fraction of people with college degrees, income, and the fraction of the population that is white all positively correlate with reporting rates. We further validate our methods using external data for which ``ground truth'' incident times are known.

Finally, leveraging a collaboration with the NYC Department of Parks and Recreation, we demonstrate how estimating reporting delays leads to \textit{practical} insights and interventions for a more equitable, efficient government service. 

\blfootnote{
			We benefited from discussions with Aaron Schein, Allison Koenecke, Alex Kobald, Ben Laufer, Emma Pierson, Gabriel Agostini,  Kiran Shiragur, Tao Jiang, and Qian Xie. We also thank the New York City Department of Parks and Recreation for their valuable work, inside knowledge, and data, and we especially thank Fiona Watt.	This work was funded in part by the Urban Tech Hub at Cornell Tech, and we especially thank Anthony Townsend and Nneka Sobers. 
	}

\end{abstract}

\section{Introduction}

\label{sec:intro}


Crowdsourcing systems are an essential component of how modern cities are managed \cite{yuan_co-production_2019,brabham_crowdsourcing_2015}. People report problems that they encounter (via phone call, text, app, social media), which are then logged and responded to by city agencies. Such reporting systems are widely used in North American cities, including the four most populous cities in the U.S.; for example, New York City's 311 system receives over 3 million reports a year. Reports provide real-time updates on on-the-ground conditions \citep{schwester2009examination, hacker_spatiotemporal_2020,minkoff_nyc_2016, lee_crowdsourcing_2021-1}; these updates are used to make immediate decisions---such as which potholes to inspect and fix---and longer-term planning decisions, such as which streets to resurface. 

Although reporting systems have gained popularity in recent decades, there are many ongoing concerns regarding their function. One long-standing concern is whether---conditional on experiencing a similar issue---some residents\footnote{We use the term ``residents'' to denote anyone who might report an issue to the 311 system, including visitors.} are more likely to log a report and request government services \citep{thijssen2016you,clark_coproduction_2013,cavallo_digital_2014,minkoff_nyc_2016,kontokosta_bias_2021, hacker_spatiotemporal_2020, pak_fixmystreet_2017, kontokosta_equity_2017}, and even those who do submit reports do so with different delays\citep{obrien_ecometrics_2015,obrien_uncharted_2017,obrien_urban_2018, klemmer2021understanding}. If reporting behavior is heterogeneous (varies by neighborhood), then government services that rely on such reporting will also be inefficient and inequitable; for example, reporting delays may correspond to delays in incidents being addressed, if the agency responds to incidents in the order that they are reported. 

The central methodological challenge in measuring heterogeneous behavior is the same reason cities rely on crowdsourcing: ground truth conditions are unknown. This induces a missing data challenge: researchers do not directly observe what incidents occurred; even when an incident is reported, researchers and the city do not directly observe when they occurred. Thus, it is difficult to determine whether one neighborhood is \textit{under-reporting} issues compared to other neighborhoods, or whether they truly experience fewer issues (due to, e.g., differences in infrastructure). More generally, this challenge regarding identifiability between reporting and ground truth rates is referred to as the \textit{benchmark} problem, and similarly appears in criminal justice, healthcare, and other domains; arrest data for predictive policing systems, for example, may conflate historical policing disparities versus true crime rates \cite{lum2016predict}; \citet{akpinar2021effect} study the downstream consequences of mistakenly using observed incident count as a proxy for incident rate.

To this long-standing literature, we contribute a method with two distinct advantages: (1) whereas previous literature primarily considers the estimation of {whether an incident is reported at all}, we provide estimates for reporting \textit{delays} -- for an incident of a given type and neighborhood, how long after incident occurrence do we expect the incident to be first reported. The accuracy of these estimates are further validated in \Cref{sec:validation_hurricanes}. (2) The method does not require external, ground truth data to estimate incident occurrence rates as distinct from reporting rates. As a result, we can estimate heterogeneous reporting rates as they differ across high-dimensional incident characteristics (incident category, type, risk level, exact location, etc.). Previous literature either (a) assume uniform incident occurrence and so assume that heterogeneity in the observed reports is a result of differential reporting rate \citep{clark_coproduction_2013,cavallo_digital_2014,minkoff_nyc_2016}; or (b) use external data to construct custom, domain-specific proxies for differences in incident occurrence, and then compare the estimates with observed rates \citep{kontokosta_bias_2021, hacker_spatiotemporal_2020, pak_fixmystreet_2017, kontokosta_equity_2017,obrien_ecometrics_2015}. Such an approach requires high-quality data on incident occurrence rates, which may be expensive to obtain for each incident characteristic.\footnote{See \Cref{app:related} for an extended discussion of this literature.}

As we show through a collaboration with the New York City Department of Parks and Recreation (NYC DPR), these advantages directly translate to \textit{operational relevance} in practice: agencies can use our method's estimates to change incident response policies, for more equitable and efficient government service allocation.  

\parbold{A method to identify reporting rates without ground truth data.} We develop a method to identify (heterogeneous) reporting rates, using \textit{just} the reporting data itself and without requiring proxies for ground truth incident rates---and the method can thus be used across crowdsourcing settings without extensive domain-specific model development or data. Our insight is that we can use information on {duplicate} reports about the same incident (which cities commonly log for operational reasons) to {disentangle} the reporting process from the incident occurrence process, thus identifying reporting rates \textit{conditional on an incident having occurred}. This insight originates from the \textit{missing species} estimation literature, where the number of {duplicate} observations provides evidence regarding the existence of unseen species \citep{orlitsky_optimal_2016, wu_chebyshev_2016, han2021competitive, charikar2019efficient}. However, our setting differs from that literature as incidents, unlike species, occur and are resolved over short time scales; and, even for observed incidents, we do not know \textit{when} they occurred, or \textit{how long} the incident existed. 

We show how to theoretically convert the task to a standard \textit{Poisson rate estimation} challenge, for which a large empirical and theoretical literature exists. In particular, one can use the number of reports between the time of the first report (but not including it) and an estimated incident resolution time, to recover a {Poisson} likelihood with a corresponding rate function. 
An advantage of the method is that we can then leverage standard Poisson rate estimation methods to recover the rates that incidents of different types and in various neighborhoods are reported, conditional on the incident having occurred. For example, we empirically fit Bayesian Poisson regression models with spatial smoothing. We further validate our method with external data, including storm events for which incident times are approximately known and voter participation rates.

\parbold{Application of our method to understand and address disparities.} We apply our method to resident reports made in New York City and the City of Chicago. Using primarily public data,\footnote{NYC DPR provided us with anonymized reporter details so that we could filter out duplicate reports made by the same caller. The public data includes each report, its timestamp, report details, and agency response details.} we analyze reports made to NYC DPR and the City of Chicago Departments of Transportation and Water Management. 
%
We estimate how the reporting rate varies as a function of incident characteristics (e.g., the request category), spatial differences, and socioeconomic characteristics of the neighborhood in which the incident occurred. We illustrate that our method is precise enough to find differences in reporting rates; for example, we find that reports for more hazardous incidents are submitted at substantially higher rates, conditional on the incident having occurred.

We then study spatial and socioeconomic heterogeneity in New York City and Chicago. We find that there are substantial inter-location reporting differences; for example, comparing census tract-level fixed effects, we find that incidents are reported at a three times higher rate in downtown Manhattan (denser, higher income Borough) than in Queens, even after controlling for the incident-level characteristics. These differences are correlated with socioeconomic and demographic factors: for example, in NYC, population density, fraction of college graduates, median income, and fraction of the population that is white are all positively associated with reporting rates. These findings suggest that the allocation of government services, when reliant only on incident reporting, is inefficient and inequitable. 

As we discuss, in partnership with NYC DPR, these estimates are being used to inform operational decisions, to reduce the effects of resulting inefficiencies and disparities. 

\smallskip

\textbf{Code availability statement.} Our reproduction code is available at \url{https://github.com/nikhgarg/spatial_underreporting_crowdsourcing}. Code to apply our method to other data sets is available at \url{https://github.com/ZhiLiu724/reporting_rate_estimation}.


\textbf{Data availability statement.} Public 311 data for both NYC and Chicago is available via their Open Data portals. In the reproduction GitHub repository above, we have included official links to the portals and further host the exact public data used in this work. The primary NYC results in \Cref{sec:results} are run on private data confidentially provided to us by NYC DPR that cannot be shared -- that data additionally contains columns for anonymized reporter identification, to allow filtering duplicate reports by the same user. All results replicate with the available public data, as reported in the Appendix. The Chicago analyses are fully based on publicly available data. 


\section{Theoretical results: Identification of reporting rates}

\subsection{Model}
\label{sec:model}

 \begin{figure}[tbh]
	\centering
	\includegraphics[width=.8\textwidth]{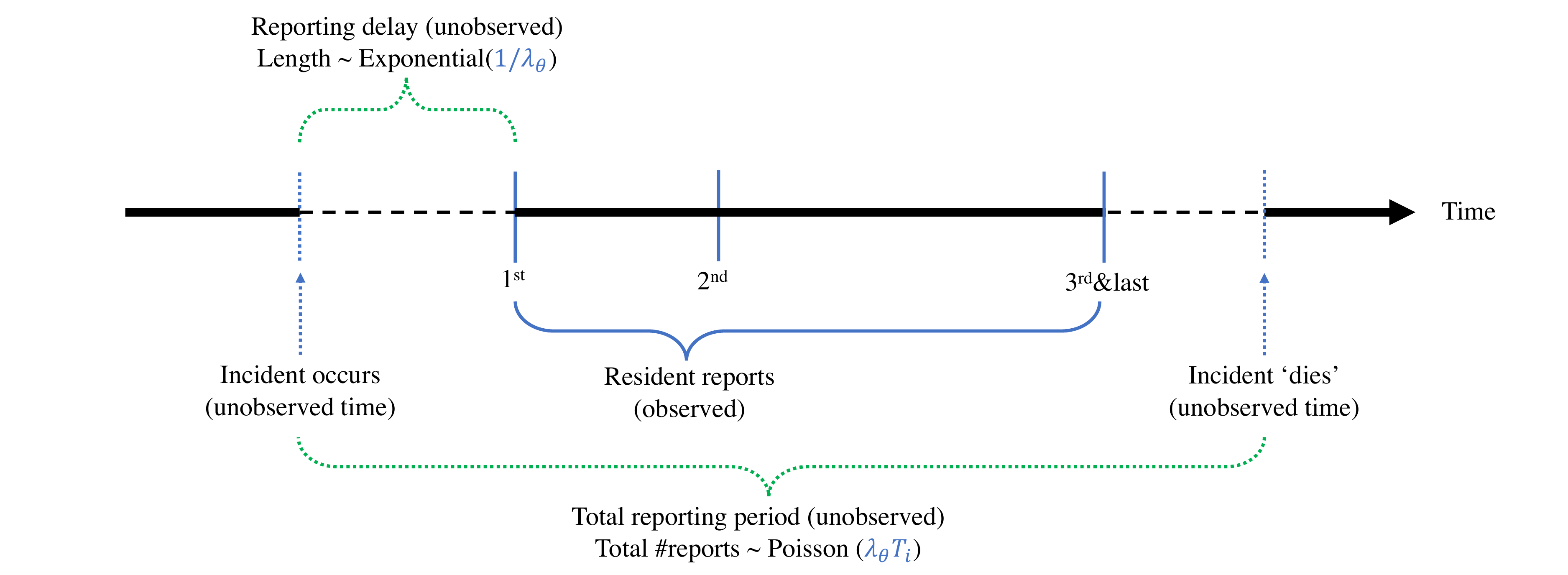}
	\caption{Model of resident reports of an incident $i$ of type $\theta$. Observed and unobserved events are marked with solid and dotted lines, respectively. In this example, there are a total of 3 reports made about this incident. When the reports are generated according to a homogeneous Poisson process with rate $\lambda_\theta$, the average reporting delay is $\frac{1}{\lambda_\theta}$. Incident `death' refers to the incident being resolved or otherwise marked such that no future reports are submitted and logged. The goal is to estimate the reporting rate $\lambda_\theta$ (and consequently, the unobserved reporting delay) for each incident of type $\theta$.}
	\label{fig:reports_gen}
\end{figure}

We directly model and distinguish the two underlying processes that lead to a reported incident: incident occurrence, and the public's reporting behavior conditional on an incident having occurred.  \Cref{fig:reports_gen} illustrates the model for a given incident.

\parbold{Model part 1: incident birth and death.} Incidents of type $\theta$ occur (``are born'') according to a random process parameterized by $\Lambda_\theta$. (For example, a homogeneous Poisson process with rate $\Lambda_\theta$). Type $\theta$ encodes geographic and incident-level (e.g., whether it is a hazard) characteristics. For each incident $i$, let $t_i$ denote its birth time. Furthermore, the incident ``dies'' at some time $t_i + T_i$, and no more reports are logged. The death time reflects, for example, the incident being resolved by the agency or by an outside group, or reports no longer being logged because the issue has been inspected. Crucially, birth times are unobserved in the model; death times may be unobserved or partially observed.\footnote{In practice, death times are partially observed: agency actions such as inspections and work orders are observed by the agency. However, (1) such data may not be available to external researchers; (2) many incidents are resolved by outside community groups, and these times are unobserved even by the agency. Fully observed death times improve our method's accuracy but do not change the problem difficulty: crucially, birth times are always unobserved by both the agency and researchers, by definition.} 

\parbold{Model part 2: reports.} An incident $i$ of type $\theta$ is active in time window $[t_i, t_i + T_i]$. During this time, reports are submitted according to a homogeneous Poisson process, with rate $\lambda_\theta > 0$.
Thus, the number of reports $M_i$ for incident $i$ follows a Poisson distribution, $M_i \sim \text{Poisson}\left(\lambda_\theta T_i\right)$.

\parbold{Observed data.} From the viewpoint of the city and researchers, we observe reports and city actions. The observed data for a time interval $(0, T)$ consists of the following, for each type $\theta$. We observe the subset of incidents for which there was at least one report. For each such incident, we observe $M_i$ and the time of each report. Finally, we observe how the agency responded to the reported incidents, such as when an inspection was completed, and the details of any inspection and completed work order. 

\parbold{Research question.} We are interested in auditing the efficiency and equity of the reporting process: how quickly are incidents reported, and how does that reporting rate $\lambda_\theta$ vary with the incident type $\theta$? Estimating this value directly reveals heterogeneous reporting \textit{delays}, as the expected time between incident occurrence and first report in the model is $\frac{1}{\lambda_\theta}$. The primary methodological task is overcoming the missing data challenge: we do not observe incidents that are never reported, nor birth times $t_i$ of the incidents that are reported. This makes it difficult to distinguish between incident rate $\Lambda_\theta$ and reporting rate conditioned on an incident, $\lambda_\theta$. As we prove in Appendix \Cref{prop:identi}, it is impossible to do so with just the above data, if one throws away the information on duplicates. The idea -- confounding incident occurrence and reporting -- is an example of the canonical \textit{benchmark problem}. Nevertheless, many works in the resident crowdsourcing literature conduct such analyses.


\subsection{Theoretical identification result and empirical method summary}
\label{sec:identification}




We show that $\lambda_\theta$ is identifiable in practically reasonable settings, using \textit{duplicate} reports about the same incidents: the first incident report informs us that this incident exists, and the remaining reports identify the reporting rate. In this section, we assume a \textit{fixed} incident type $\theta$, and so omit it from the notation.

The following theorem reduces the research question to a standard Poisson rate estimation task. The chief challenge is that we do not observe incident birth and death times, and so we must construct an observation interval $(s_i, e_i]$ such that the number of reports within the interval follows a known distribution as a function of $\lambda$. The challenge is that this interval \textit{itself must depend on the reporting rate $\lambda$} -- for example, we can only start the counting interval at the time of the first report, and the agency responds to the incident partially as a function of the number of reports. As a result, the distribution of the entire data, not just the number of reports within the interval (i.e., including interval start and end times) could be a complex, unknown function of $\lambda$. Thus, some choices of the interval are not valid; for example, if $s_i$ is the time of the first report and $e_i$ is the time immediately before the second report, then we would always observe zero reports in the interval, independent of $\lambda$. The theorem formalizes that, as long as the start and end times depend only on the rate $\lambda$ through the number of reports up to those times, we can nevertheless decompose the likelihood and reduce the task to Poisson rate estimation.


\begin{restatable}{theorem}{thmst}\label{thm:stoppingtimes}
Suppose reporting follows a homogeneous Poisson process with reporting rate $\lambda$. For each incident, construct (a) interval start time random variable $S_i \geq t_i$ that is independent of the reporting rate $\lambda$, conditional on the first report; (b) end time random variable $E_i \leq t_i + T_i$ that is independent of $\lambda$ and $S_i$, conditional on the reports observed up to that time. Let data $\mathcal{D}_i$ include $S_i = s_i$, $E_i = e_i$, the number of reports $\tilde{M}_i$ within $(s_i, e_i]$ and their associated report times.

Then, the likelihood of the data $\mathcal{D}_i$ as a function of the reporting rate $\lambda$, conditional on all reports before $s_i$, can be decomposed as
\begin{equation}
    \mathcal{L}(\lambda|\mathcal{D}_i) = p\left(\tilde{M}_i;\lambda(e_i-s_i)\right)f(\mathcal{D}_i), \label{eq:decompose}
\end{equation} 
where
\begin{equation}
    p\left(\tilde{M}_i;\lambda(e_i-s_i)\right) = \frac{\left[\lambda (e_i-s_i)\right]^{\tilde{M}_i}}{\tilde{M}_i!} \exp\left(-\lambda (e_i-s_i)\right) \label{eq:poissonlikelihood}
\end{equation}
is the probability mass function of a Poisson distributed random variable with $\tilde{M}_i$ occurrences and rate parameter $\lambda(e_i-s_i)$, and $f(\mathcal{D}_i)$ is a function that does not depend on $\lambda$.
\end{restatable}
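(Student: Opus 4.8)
The plan is to factor the joint likelihood of $\mathcal{D}_i$ by the chain rule and show that every factor except one is free of $\lambda$. Conditioning throughout on the report history up to and including $s_i$, I would write the likelihood as a product of three pieces: (i) the conditional law of the start time $S_i = s_i$; (ii) the law of the report configuration (the ordered times $\tau_1 < \dots < \tau_{\tilde M_i}$) falling in $(s_i, e_i]$ with no other reports there; and (iii) the conditional law that the end time equals $e_i$. The theorem's two independence hypotheses are tailored precisely to remove the $\lambda$-dependence of pieces (i) and (iii), so the entire $\lambda$-dependence lands on piece (ii), which I then reshape into the claimed Poisson form.

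For piece (ii), the key structural fact is that a homogeneous Poisson process has independent increments, so once we condition on everything up to $s_i$ the process on $(s_i, \infty)$ is a fresh rate-$\lambda$ Poisson process, independent of the pre-$s_i$ past. Because $S_i \geq t_i$ and $E_i \leq t_i + T_i$, the whole window $(s_i, e_i]$ lies inside the incident's active interval, so reporting there is genuinely Poisson$(\lambda)$ with no truncation. The likelihood (Janossy density) of seeing exactly the points $\tau_1, \dots, \tau_{\tilde M_i}$ in $(s_i, e_i]$ and none elsewhere in the window equals $\lambda^{\tilde M_i}\exp(-\lambda(e_i - s_i))$. I would then multiply and divide by a combinatorial constant to write
\[
\lambda^{\tilde M_i} e^{-\lambda(e_i - s_i)} = p\!\left(\tilde M_i; \lambda(e_i - s_i)\right)\cdot \frac{\tilde M_i!}{(e_i - s_i)^{\tilde M_i}},
\]
which exhibits the Poisson pmf of \Cref{eq:poissonlikelihood} times the $\lambda$-free factor $\tilde M_i!/(e_i - s_i)^{\tilde M_i}$. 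This factor is exactly the uniform order-statistics density: conditional on the count, the report times are uniform on the interval and carry no information about $\lambda$.

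The crux, and the step I expect to be the main obstacle, is justifying that conditioning on the random, data-dependent times $S_i$ and $E_i$ does not distort this Poisson likelihood; this is delicate because $E_i$ is correlated with the very reports we are counting. The resolution is the stopping-rule principle: by hypothesis (b), the conditional law of $\{E_i = e_i\}$ given the report path up to $e_i$ does not depend on $\lambda$, so when I factor piece (iii) as $\bbP(E_i = e_i \mid \text{reports up to } e_i)$ it contributes only to $f(\mathcal{D}_i)$, and the analogous argument using hypothesis (a) handles $S_i$. I would make this rigorous by treating $E_i$ (and $S_i$) as a stopping time of the report filtration, possibly with external randomization independent of $\lambda$, and writing the joint density of the path and the stopping decision as an integral of the parameter-free stopping kernel against the Poisson path measure; the kernel pulls out as a $\lambda$-free factor, leaving the path measure restricted to $(s_i, e_i]$ to supply piece (ii). Collecting the $\lambda$-free contributions from (i), (iii), and the combinatorial constant into $f(\mathcal{D}_i)$ then yields the decomposition of \Cref{eq:decompose}.
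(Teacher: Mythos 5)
Your proposal is correct and follows essentially the same route as the paper's proof: both factor the likelihood conditional on the pre-$s_i$ report history so that the two independence conditions strip all $\lambda$-dependence from the start- and end-time factors, both reduce the in-window contribution to the Poisson path likelihood $\lambda^{\tilde M_i}e^{-\lambda(e_i-s_i)}$, and both finish with the rearrangement $\lambda^{\tilde M_i}e^{-\lambda(e_i-s_i)} = p\bigl(\tilde M_i;\lambda(e_i-s_i)\bigr)\cdot \tilde M_i!/(e_i-s_i)^{\tilde M_i}$. The only difference is packaging: the paper carries out the chain-rule decomposition explicitly---interleaving the survival events $\{E_i>t_{m+j}\}$ with the jump times, treating $\tilde M_i=0$, $1$, and $\ge 2$ as separate cases, and proving a lemma that the first inter-arrival after the random start $S_i$ is still Exponential$(\lambda)$---whereas you phrase the same factorization as a $\lambda$-free stopping kernel integrated against the Poisson path (Janossy) measure.
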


The decomposition in \Cref{eq:decompose} enables standard Poisson rate estimation methods to be applied. For example, 
the maximum likelihood estimate for $\lambda$ is:
\begin{equation}
\hat{\lambda}^{MLE} = \frac{\sum_{i}\tilde M_i}{\sum_{i}(e_i-s_i)}. \label{eq:MLE}
\end{equation}


The proof is in \Cref{sec:mainthmproof} and follows from decomposing the likelihood function into full conditional probabilities. The key step is to find the conditions under which one can use data-dependent observation period length $E_i - S_i$, while maintaining a Poisson likelihood for reports within the period.\footnote{Perhaps surprisingly, we could not find rate estimation literature in which $E_i - S_i$ also depends on the sample path. Note that we could leverage classical Poisson process results (see e.g., \cite{resnick1992adventures}) if we had considered a non-data-dependent end time $E_i$, e.g., $E_i - S_i$ is a constant. However, this choice would be data inefficient. Some incidents are inspected and addressed within a day of being first reported, while others may take a year to be inspected. As incident inspection (and death time $t_i + T_i$) may be a function of the number of reports, we would be forced to pick a fixed interval end time of less than a single day, for all incidents.} The result extends to the zero-inflated model we use in our empirical estimation. 

\subsubsection*{Applying the theory to our setting}
First, we need to choose interval $(S_i, E_i]$ that meets the conditions of the theorem. The best choice for $S_i$ is the (observed) time of the first report, and so we count reports starting \textit{after} the first report---this time is the earliest that we know the incident exists. This choice formalizes the idea that our method uses the first report to establish an incident existing (and a timestamp for it), and then duplicate reports to identify the reporting rate. To set $E_i$, we leverage government \textit{agency response} data detailing the time when each incident was inspected or worked on. We choose the minimum of these times and a fixed period after the first report, $\tilde t_i + \bar T$, where $\bar T$ is a researcher choice for which we conduct robustness checks. 

Second, we must empirically estimate reporting rates given high dimensional incident types $\theta$. We leverage Poisson regression, 
i.e., we assume the form
\begin{align}
\label{eq:poissonregression}
 \lambda_\theta &= \exp\left(\alpha + \beta^T\theta\right).
\end{align}
To better match the data distribution, we further fit a \textit{zero-inflation} component (corresponding to the probability that we receive no reports in the given period). To accommodate the high dimensionality of the incident type vector $\theta$, we fit our models in a Bayesian setting and report posterior distributions of regression coefficients $\alpha$ and $\beta$.







\section{Empirical Results: Heterogeneous reporting in NYC and Chicago}
\label{sec:results}

Our method captures fine-grained reporting \textit{heterogeneity}, which is important in both (1) auditing the efficacy and equity of the reporting system, and (2) designing better approaches to respond to reports. We focus on understanding this heterogeneity, across \textit{incident characteristics} and \textit{neighborhoods}, especially regarding potential \textit{socioeconomic equity} concerns. We find substantial (and statistically significant) reporting disparities on all three dimensions.


Our data from NYC contains \num{223416} service requests made by the public between 6/30/2017 and 6/30/2020 to the NYC Department of Parks and Recreation(DPR) about street trees. All statements regarding statistical significance in this section refer to whether 95\% Bayesian credible intervals include $0$. Methodological details are in \Cref{sec:empiricsmethod}, with data and preprocessing in \Cref{sec:data}.

\begin{table}[tbh]
  \caption{(a) Regression coefficients for \textbf{zero-inflated Poisson regression} with our \textbf{Base} covariates: incident-level covariates and Borough fixed effects with a Max Duration of 100 days, and default repeat caller removal. More positive coefficients indicate that the covariate is associated with higher reporting rates. (b) Implied reporting delays calculated according to the coefficient estimates in (a) and a homogeneous Poisson process model, for an average-sized tree. These numbers are calculated by estimating the reporting rate using \Cref{eq:poissonregression} and the learned coefficients, and then calculating the mean delay for Poisson processes with the given rate. We showcase one such calculation in \Cref{app:exampledelaycalc}. For Borough and category, we enforce a \textit{zero-sum} constraint on the coefficients and so can present coefficients for each level, without collinearity issues.}
	\label{tab:basicandcontext}
 
	\centering
 \subfloat[][Posterior distributions for regression coefficients.]{\small
 \centering
	\begin{tabular}{lrrrrrr}
		\toprule
		{} &   Mean &  StdDev &   2.5\% &  97.5\% \\
\midrule
Intercept                           & -3.229 &   0.027 & -3.285 & -3.179  \\
Zero inflation fraction                 &  0.661 &   0.003 &  0.655 &  0.668 \\
Inspection condition: Tree is Dead                & -0.338 &   0.035 & -0.411 & -0.278 \\
Inspection condition: Tree is Fair                & -0.168 &   0.027 & -0.222 & -0.114 \\
Inspection condition: Tree is Good/Excellent    & -0.274 &   0.029 & -0.330 & -0.214 \\
Risk assessment score                 &  0.240 &   0.011 &  0.218 &  0.259 \\
Log(Tree Diameter at Breast Height) & -0.035 &   0.009 & -0.054 & -0.019\\
Borough[Bronx]                      & -0.051 &   0.025 & -0.105 & -0.008 \\
Borough[Brooklyn]                   & -0.382 &   0.019 & -0.420 & -0.345 \\
Borough[Manhattan]                  &  0.438 &   0.049 &  0.333 &  0.535 \\
Borough[Queens]                     & -0.249 &   0.019 & -0.291 & -0.215 \\
Borough[Staten Island]              &  0.245 &   0.032 &  0.182 &  0.304 \\
Category[Hazard]                    &  1.418 &   0.017 &  1.380 &  1.450 \\
Category[Illegal Tree Damage]       &  0.224 &   0.034 &  0.155 &  0.285 \\
Category[Prune]                     & -0.087 &   0.028 & -0.148 & -0.033 \\
Category[Remove Tree]               &  0.034 &   0.023 & -0.012 &  0.078 \\
Category[Root/Sewer/Sidewalk]       & -1.589 &   0.035 & -1.664 & -1.525 \\
\bottomrule
\end{tabular}
\label{tab:basicinfrefcoef}
}
\newline
\subfloat[Contextualized average reporting delays for average-sized trees.]{\small
	\begin{tabular}{lcc}
		\toprule
		{Incident characteristics}              & {Manhattan} & {Queens} \\ \midrule
		Hazard, tree in Poor condition, risk assessment score 12         & 2.2 days           & 4.3 days        \\
		Illegal Tree Damage, tree in Poor condition, risk assessment score 9                          & 15.9 days            & 30.7 days         \\
		Root/Sewer/Sidewalk issue, tree in Fair condition, risk assessment score 5 & 111.3 days           & 221.2 days\\
		\bottomrule
	\end{tabular}
		\label{tab:ratescontext}
}
\vspace{-.5cm}
\end{table}

\subsection{Heterogeneous reporting in NYC}

\paragraph{Reporting heterogeneity across incident type} We first study how reporting rates vary by incident-level characteristics. \Cref{tab:basicinfrefcoef} shows the coefficients resulting from the zero-inflated regression, using our \textbf{Base} analysis covariates, which include incident-level covariates and \textit{Borough} level fixed effects (full list included in \Cref{tab:basicinfrefcoef}). \Cref{tab:ratescontext} contextualizes the coefficient estimates into average estimated reporting delays, assuming that the reporting rate coefficients correspond to a homogeneous reporting process.

We find that the reporting rate substantially (and statistically significantly) varies by incident characteristics. In particular, we have several columns characterizing the results of an inspection by a forester. The following incident characteristics (as marked by the inspector) are associated with higher resident reporting rates: higher risk ratings, trees in Poor or Critical condition (as opposed to Excellent, Good, Fair, or Dead), and incidents deemed as Hazards (as opposed to e.g., a Root/Sewer/Sidewalk issue) by the first reporter. These differences are practically relevant: the difference in coefficients between the Hazard and Illegal Tree Damage categories is $1.2$, and so Hazard incidents are reported at approximately a  $e^{1.2} \approx 3.3$ times higher rate than Illegal Tree Damage incidents.\footnote{The interpretation is not exact, as we learn a single zero-inflation coefficient. However, coefficients in the non-zero-inflation model are similar.}
 

 These incident-level associations are as expected -- more serious incidents are reported more quickly, providing evidence that our method is precise enough to capture such heterogeneity. For example, NYC DPR inspectors suggest that Root/Sewer/Sidewalk issues are particularly difficult for most residents to find, as the problem is often underground. This level of precision is not available for other methods to measure under-reporting, as one would need to construct equally precise estimates on the ground truth rate at which incidents of different types occur.


\begin{table}[tb]
	\centering
 \caption{Coefficients on socioeconomic covariates. These coefficients are each estimated \textit{alone} in a regression alongside the incident-level covariates. A positive coefficient means that an increase along this covariate is associated with an increase in reporting rate and vice versa. Coefficients marked with (*) indicate that their 95\% posterior credible intervals do not contain 0. Full table presented in Appendix \Cref{tab:coeffull}.}
  \label{tab:censuscoefficients} \small


\begin{tabular}{lrlr}
\toprule
                      &   Mean &    &   Mean  \\
\midrule
                    Median age & -0.033* &  
             Fraction Hispanic &  0.030*     \\
                Fraction white &  0.057* &   
                Fraction Black & -0.039*    \\
Fraction no high school degree & -0.031* &   
       Fraction college degree &  0.043*      \\
              Fraction poverty & -0.010\phantom{*} &   
               Fraction renter &  0.054*    \\
               Fraction family & -0.081* &   
       Log(Median house value) &  0.065*     \\
        Log(Income per capita) &  0.048* &   
                  Log(Density) &  0.077*   \\
\bottomrule
\end{tabular}

\end{table}

\paragraph{Socioeconomic and spatial reporting inequities} There are substantial socioeconomic and demographic disparities in reporting rates, beyond those that can be explained via incident-level characteristics. To analyze these disparities, we first fit our model with socioeconomic characteristics from Census data, where each socioeconomic covariate is included \textit{alone} in a regression, alongside the incident-level covariates. Table \ref{tab:censuscoefficients} contains the corresponding coefficients. One of the strongest associations is with population density: the more people per unit area, the more people who might encounter and report an issue. Similarly, a higher fraction of people with college degrees, log income per capita, and a fraction of the population that is white all positively correlate with reporting rates, with the Bayesian Credible Intervals not overlapping with $0$.  For example, one standard deviation increase in the proportion of white people is associated with a $5.8\%$ increase in reporting rate, even conditional on incident-level covariates.\footnote{This is obtained by calculating $e^{0.057}$, according to the Poisson regression model. Since all covariates are standardized, a unit increase in each covariate corresponds to one standard deviation increase in the original data. Causal interpretation of these variables is out of scope, given the co-linearity of such socioeconomic characteristics. }

We next study the relationship between socioeconomic characteristics and reporting rates in each census tract, by fitting our model with the incident-level characteristics and a set of socioeconomic characteristics jointly (log income per capita, fraction of white residents, fraction of renter, median age, and fraction of residents with college degree, but \textit{not }population density); then, we use the socioeconomic coefficients to learn the cumulative association for each census tract.\footnote{We give one example of such calculation in \Cref{app:socioeconomic}.} \Cref{fig:spatialheterogproj} shows the resulting map---there are substantial spatial differences in reporting across census tracts as explained by demographics. For example, downtown and midtown Manhattan (the blue region in the middle) have substantially higher values than Harlem and the Bronx (the red region to the north of the blue region), and the latter are substantially socioeconomically disadvantaged compared to the former. In \Cref{app:voterparticipation}, we further show that these census tract-level values further correlate with voter participation rates in each tract -- i.e., that disparities in 311 system usage further relate to other forms of civic participation and representation. We note that these associations cannot be explained via correlations with population density -- as shown in Appendix \Cref{tab:nycmultidemonodensity,tab:nycmultidemo}, further controlling for population density does not substantially affect the other coefficients. These reporting disparities suggest substantial downstream effects in how quickly incidents are addressed, even if the agency does not prioritize one demographic group over another after receiving reports.

 Finally, we study spatial heterogeneity beyond which may occur due to socioeconomic information. We analyze this heterogeneity by including indicator variables for each of the over \num{2000} census tracts (with spatial smoothing \cite{morris2019bayesian}), alongside incident-level covariates. 
 \Cref{fig:spatialmap} shows the coefficient associated with each census tract---there are substantial (and statistically significant) spatial effects. Notably, these coefficients are \textit{on the same order} as the incident-level characteristics, further suggesting substantial inefficiencies and inequities, even beyond those that are associated with socioeconomic characteristics. 

 \begin{figure}[bt]
 \centering
 \includegraphics[width=.6\textwidth]{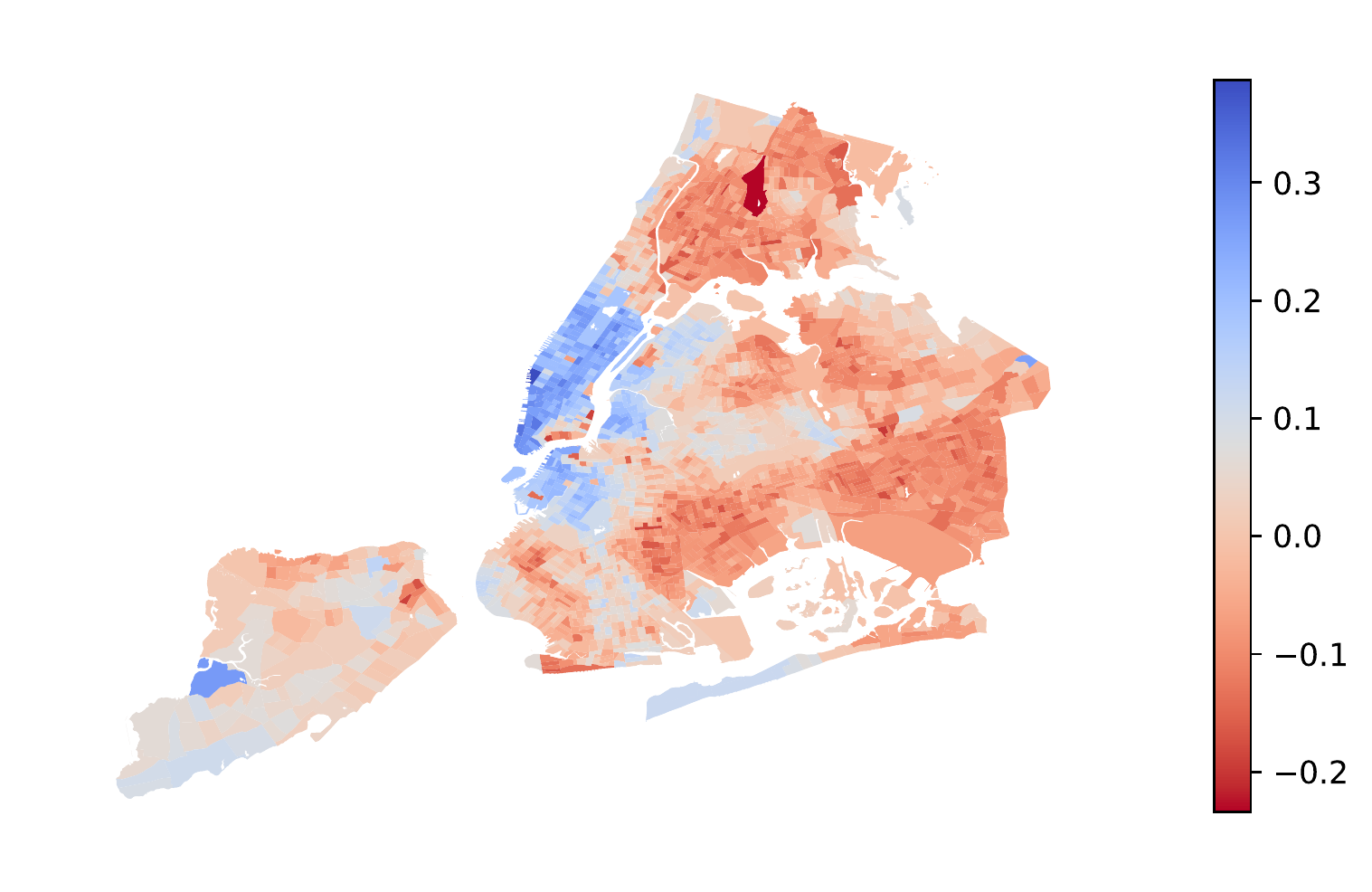}
 \caption{Coefficients for each census tract, representing the combined association of socioeconomic variables (log income per capita, fraction of white residents, fraction of renter, median age, and fraction of residents with college degrees) on reporting rates. More positive coefficients indicate higher reporting rates. Overall, more socioeconomically privileged neighborhoods have substantially higher reporting rates in NYC, and there is substantial spatial heterogeneity.}
 	\label{fig:spatialheterogproj}
 \end{figure}

\paragraph{Comparison to methods from prior work} The spatial effects further demonstrate that our methods -- correcting for heterogeneous incident type distributions -- give different results than those of prior work. Appendix \Cref{fig:reginccoef} shows the relationship between our measured census tract coefficient and the number of unique incidents per tree observed for that census tract in the period. Perhaps surprisingly, there is a slight negative relationship. These differences may emerge for two reasons: (a) the non-identifiability between the rates incidents occur and are reported (\Cref{prop:identi}), and (b) that tree counts do not control for `legitimate' incident-level characteristics (such as risk), in ways that may correlate with geography. Similar results emerge when restricting the analysis to Hazard-type incidents, and comparing the census tract coefficients to the total number of observed incidents or reports per census tract, without normalizing by number of trees. While it may be possible to construct accurate normalizing estimates for how many incidents of each type (severity, tree characteristics, location) are expected to occur, our approach does not require doing so.

\subsection{Heterogeneous reporting in Chicago}
Our data from Chicago comprise service requests made to the City of Chicago's Department of Transportation and Department of Water Management. Results from Chicago are largely consistent with those from NYC. More urgent incidents get reported more quickly (``Open Fire Hydrant'' incidents have the highest reporting rate), there is substantial spatial heterogeneity, and these spatial differences correspond to socioeconomic disparities. \Cref{fig:spatialmapchicago} illustrates the results of the {Spatial} analysis in Chicago. Full results are in the Appendix; \Cref{tab:censuscoefficientschicago} summarizes the findings of the {socioeconomic} analysis. Population density and fraction of the population that is Black directionally remain the same, while some other demographic factors in Chicago differ in their relationship with reporting rates. 
Causal explanations for city- and neighborhood-specific heterogeneity are out of the scope of this paper but may be of particular interest for future work and city governments.

\section{Results: applications of quantifying reporting delays
}
\label{sec:conclusion}

  \begin{figure}[bt]
\subfloat[][End-to-end median delays in Boroughs.]{
		\includegraphics[width=.45\textwidth]{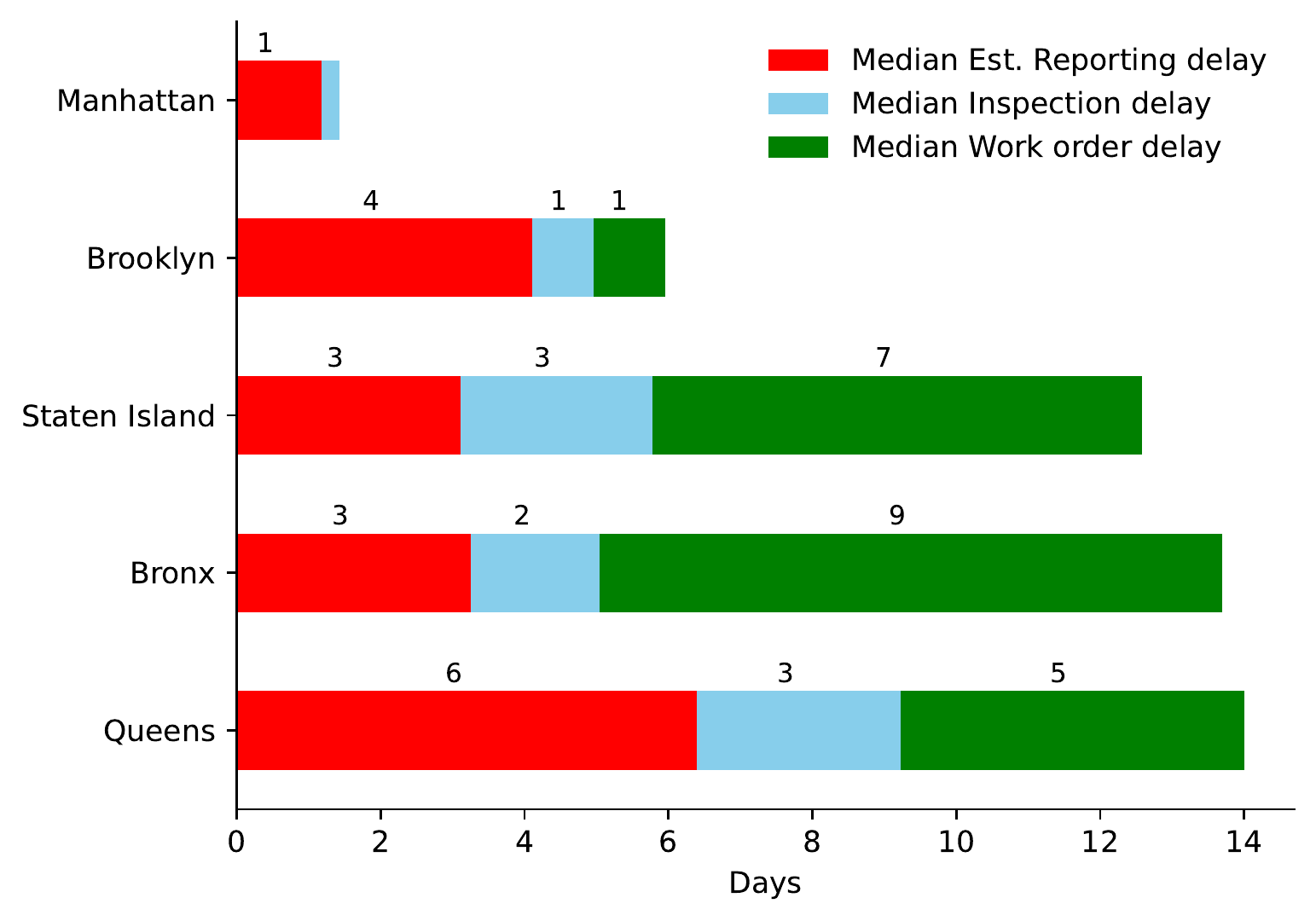}
		\label{fig:delaysall_raw}
	}
	\hfill
	\subfloat[][Delays in Boroughs relative to city median.]{
		\includegraphics[width=.45\textwidth]{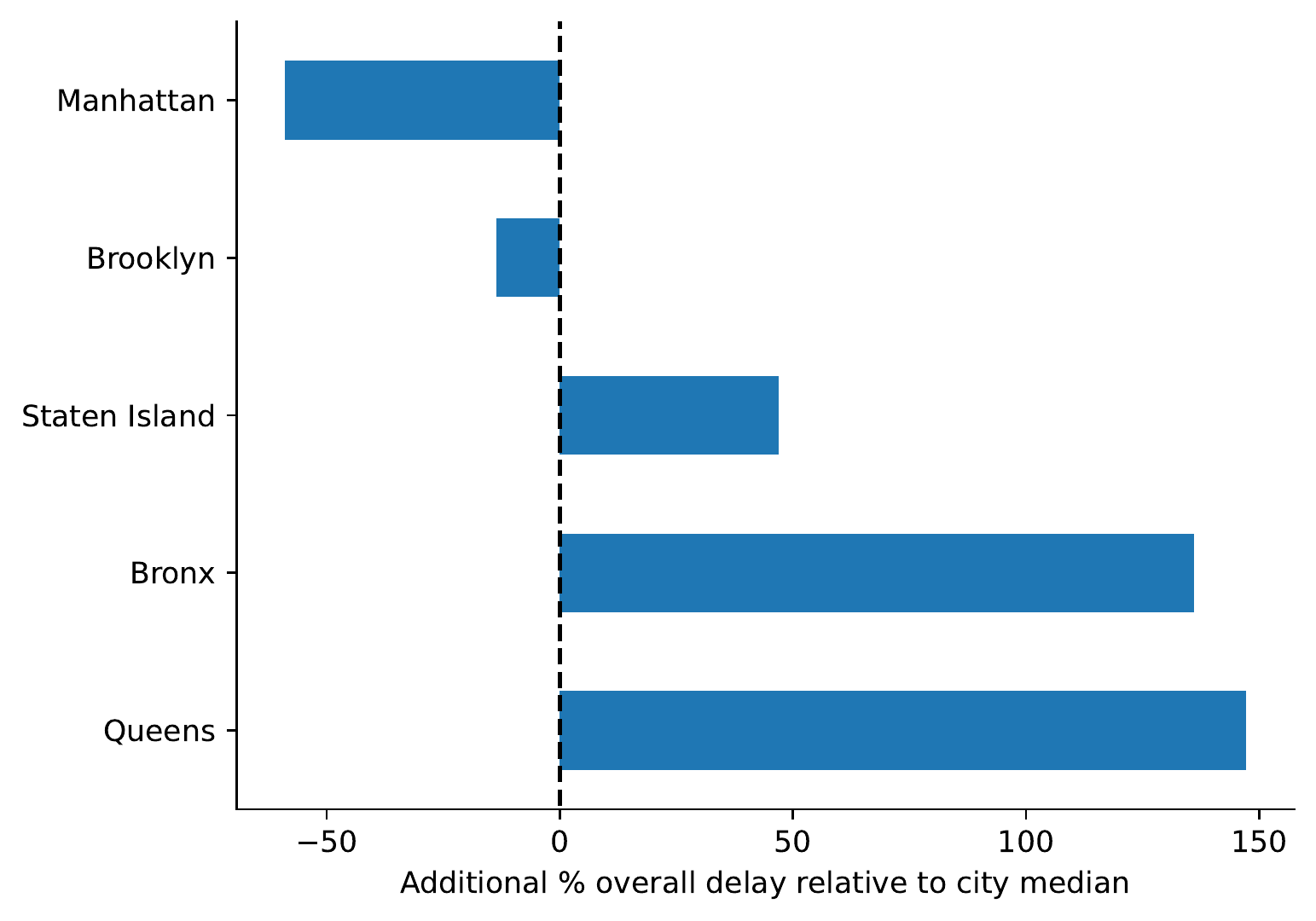}
		\label{fig:delaysall_equity}
	}
 	\caption{(a) Median estimated reporting delay, inspection delay, and work order delay in each Borough in NYC DPR data, for the \textit{highest risk} incidents. (Note that Manhattan median inspection and work order delays for such incidents are each less than 1 day.)  Together, these delays represent the end-to-end delay for an incident to be resolved after it occurs. These proportions indicate that reporting delays are a substantial part of the time between incident occurrence to being resolved for such incidents. (b) What is the relative difference between the status quo and if all incidents in the city were reported and responded to at the same rate? For the \textit{highest risk} incidents and each Borough, we calculate the relative difference between the total historical end-to-end delays and the total hypothetical end-to-end delays if each incident had the same delay as the city-wide median for incidents with that risk profile. These comparisons suggest that there are substantial equity concerns in this pipeline. In \Cref{app:alternativeimpactanalyis}, we present analysis details and results with alternative analysis choices for both (a) and (b), and the results are qualitatively similar.}
 	\label{fig:delaysall}
 \end{figure}

We develop a method to quantify reporting delays in crowdsourcing systems and apply it to data from the 311 system in New York City and Chicago. We believe that two aspects of our method lead to uniquely actionable insights for municipal agencies: (a) the measurement of reporting \textit{delays}, even for incidents that are eventually inspected; (b) not relying on ground-truth data on incident occurrence rates, which enables measuring reporting rates as they differ as a function of fine-grained incident characteristics.


In particular, reporting is just the first stage of a response pipeline. After an incident is reported, NYC DPR may schedule an inspector to travel to the report site to analyze the incident. This scheduling decision is made by the agency based on reports -- more hazardous report types are prioritized, as are incidents with more reports and incidents that were reported earlier. Then, based on the inspector's risk determination, the incident may be scheduled for a maintenance crew to address the issue. In each case, operational concerns such as locations and schedules of individual employees and incidents also play a role.

The overall risk and welfare loss an incident poses is a function of the \textit{total} delay between the time that an incident \textit{occurs} and when it is \textit{resolved}. Measurement of reporting delays thus contextualizes \textit{where} in the incident response pipeline is the most inefficiency or inequity -- are reporting delays large compared to downstream delays in agency responses? What are the most impactful \textit{intervention} steps, attempting to improve reporting rates (such as through advertising) or modifying how incidents are addressed once they are reported? Measuring how these values differ by incident characteristics is important: some incidents must be more urgently addressed than other incidents. In this section, in partnership with our collaborators in the DPR, we demonstrate how to use our methods to answer such questions. 

In this section we use the \textbf{zero-inflated Poisson regression} model with all the \textbf{Base} covariates (i.e., the model shown in \Cref{tab:basicinfrefcoef}), with two differences: (a) Risk assessment scores are discretized into four bins (\textit{A-D}), following DPR practice in how risk levels lead to work order prioritization; (b) we include an interaction term between the risk category and the Borough, to highlight heterogeneity. As above, we analyze service requests between 6/30/2017 and 6/30/2020; however, we include inspections and work orders up to 9/01/2022, the latest update before we retrieved the public data. 

We now compare reporting delays to inspection and maintenance delays: for each incident, we estimate its expected reporting delay and compare this value to the inspection delay (after the first report) and work order delay (after inspection). \Cref{fig:delaysall_raw} shows the results: median reporting, inspection, and work order delays, in each Borough, \textit{restricted to the highest risk incidents group A} (which are also prioritized for work orders). \Cref{fig:delaysall_equity} converts the end-to-end delays to relative differences from the city-wide median for such incidents -- we calculate incident-level overall days and then compare Borough-specific median delays to the citywide median. Note that, because the analysis is restricted to a single work order priority level, incidents across Boroughs are comparable; ideal, risk-aware reporting and responses would equalize delays across Boroughs. \Cref{app:alternativeimpactanalyis} contains similar analyses for other priority groups and incident categories.

\paragraph{Substantial end-to-end resolution inequities} There are large differences in end-to-end median delays to address incidents across Boroughs. \textit{Even though the analysis is restricted to incidents that all received the highest work order priority level,} there are differences in how long incidents take to be reported, inspected, and worked on. The cumulative differences are meaningful and point to substantial inequity: highest-risk incidents are resolved within 2 days in Manhattan, and only within 14 days in Queens. If the city instead had \textit{resolution parity} -- incidents of similar risk levels being addressed at similar delays after incident occurrence -- then incidents in Manhattan would be addressed approximately 2x slower, and in Queens approximately 2.5x more quickly than in the status quo.

\paragraph{Reporting delays are substantial compared to downstream delays} 
These estimates also suggest that reporting delays are substantial, on the same scale as both inspection delays and work order delays. Reporting delays further contribute to overall heterogeneity across boroughs: for example, on average, incidents are already resolved in Manhattan by the time they are first reported in any of the other Boroughs. However, it may also be misleading to consider one stage in isolation -- for example, incidents in the Bronx are reported quickly, but not resolved quickly, leading to large overall disparities.

\paragraph{Estimating relative potential of different interventions} These estimates further lend insight on the potential effects of \textit{interventions}: how much would the public's reporting behavior or agency response policies have to change, to improve system performance? For example, while public reporting behavior may be difficult to change (as it is a complex function of many factors, including access factors and trust in government), it may be relatively easy to change agency actions to mitigate the undesirable effects of heterogeneous reporting. Agencies could incorporate reporting delay estimates into their inspection and maintenance scheduling policies: (a) prioritize \textit{individual incidents} which are estimated to have had substantial reporting delays -- for example, for two incidents of the same type, prioritizing the incident with the earlier estimated time of occurrence, instead of report time as in the status quo; (b) invest more resources overall in \textit{neighborhoods} with higher overall estimated reporting delays to respond to reported incidents -- doing so would shorten inspection and work order delays to render overall time from incident to resolution more equitable; (c) designing a proactive inspection program which emphasizes inspections in areas of the city that are known to have lower rates of reporting. NYC DPR is actively considering these interventions, 
in response to this work.

Our work provides a principled approach to compare reporting and response delays, enabling such analysis. For example, reducing work order delays (and to a lesser extent, inspection delays) in Staten Island, The Bronx, and Queens would substantially mitigate Borough-level inequity -- e.g., reducing inspection and work order delays down to one day each (as in Manhattan and Brooklyn) would make end-to-end delays in these Boroughs comparable to those in Brooklyn. However, to reach parity with Manhattan (without introducing new delays there), the city would need to increase the reporting rate in every other Borough, such as through advertising or proactive inspections. Similar insights hold for other risk prioritization levels in \Cref{app:alternativeimpactanalyis}. 

\section{Discussion and Conclusion}


\parbold{Model applicability to resident crowdsourcing.} Our model captures several key aspects of resident crowdsourcing systems. First, incidents are not directly observed by the city. Rather, the city observes {reports} made by the public, once an incident has occurred. Second, reports usually do not come in immediately after an incident occurs -- it takes a (random) amount of time for someone to first notice an incident and consider it worth reporting. Third, there often are \textit{duplicate} reports, about the same incident. Multiple residents may encounter the same incident and report it. Crucially, city agencies often identify and log such duplicates, as it is important to do so to decide which reports to address and to audit response rates and times---e.g., \citet{kontokosta_equity_2017} mention filtering out duplicates in their work. 


\paragraph{Applying the method to other settings. }In what other settings is the model applicable? First, while the incident occurrence process $\Lambda_\theta$ can be arbitrary, the reporting 
process must be Poisson with rates $\lambda_\theta$ --  this means reports about the same incident must be independent; i.e., our method primarily works for \textit{public} incidents such as potholes and downed trees that may be encountered and reported by separate people. However, the method does not easily handle \textit{private} incidents, such as bed bugs within an apartment or food poisoning, in which potential reporters are likely to communicate with each other. Second, the analysis requires that the agency log information about duplicate reports and that this data be available to researchers -- our empirical analyses were restricted to agencies that make this data publicly available via Open Data portals; applying the methods to other cities or agencies may require data sharing agreements or automated methods for researchers to identify likely duplicate reports for the same incident. Third, it is important to note that our method does \textit{not} require external estimates for how many incidents of each type $\theta$ one expects to see, which may be extremely difficult to estimate for high-dimensional types.

While our data application focuses on reports made in NYC and Chicago to specific city agencies, these requirements are widely met in other crowdsourcing systems in other cities, and in other domains such as software bug reporting (for example, users can report issues and bugs in open source repositories such as GitHub, and there may be heterogeneous reporting based on who the bug affects). Similarly, it is possible to apply our methods to 911 or other emergency systems, for types of incidents where there may be multiple independent witnesses. Such broad applicability is especially important because reporting behavior (and its association with socioeconomic factors) may vary by context. For example, in India, marginalized communities tend to use \textit{formal} channels (such as reporting systems) to access government resources, while dominant groups use informal ones \cite{kruks2011seeking}.

\paragraph{Conclusion.} We believe that our method provides a powerful, general approach to precisely quantifying reporting efficiency and equity in the use of resident crowdsourcing systems, without needing external ground-truth data. In both NYC and Chicago, we find that there is substantial spatial heterogeneity in resident reporting and that this heterogeneity further corresponds to socioeconomic disparities. These estimates directly provide actionable insights, which we are deploying alongside our agency collaborators.  Our work further provides a promising avenue to study the use of reporting systems more broadly in future work.


\section{Empirical method}
\label{sec:empiricsmethod}
\subsection{Methods for parameter estimation: a Bayesian Poisson regression approach}
\label{sec:poissonreg}

\Cref{thm:stoppingtimes} transforms our setting into a standard Poisson rate estimation task. However, that does not mean that the estimation is computationally or statistically tractable. In particular, in our setting, incident type $\theta$ is high-dimensional, but incidents of similar types may have similar rates $\lambda_\theta$. We adopt a tractable Bayesian Poisson regression approach. 

First, we assume that the reporting rate is affected by the type vector $\theta$ as in a Poisson regression: for each model, we have a \textbf{standard} reporting rate $\lambda_\theta$, defined as:
\begin{align}
\lambda_\theta &= \exp\left(\alpha + \beta^T\theta\right), \label{eqn:lambdaregressionassumption}
\intertext{and so, for each incident $i$, the likelihood follows from \Cref{eq:decompose} and satisfies the following:}
\mathcal{L}(\lambda_{\theta_i}|\mathcal{D}_i) &\varpropto p\left(\tilde M_i ; \lambda_{\theta_i}(e_i-s_i)\right),
\end{align}
where $\beta$ is a coefficient vector of appropriate dimension, $(\alpha, \beta)$ are coefficients to be estimated, and $p(\cdot )$ is the Poisson likelihood function as defined in \Cref{eq:poissonlikelihood}.

We further fit a \textbf{zero-inflated Poisson regression} \cite{lambert1992zero}, which corresponds to a data generating process in which, independently for each incident, only one report can be generated with probability $\gamma$, and thus there are 0 duplicate reports: $\tilde M_i = 0$.\footnote{The framework allows for type-dependent $\gamma_\theta$. We do not include type-dependence in the main text, for ease of interpretation. Appendix \Cref{fig:zifposterior_bycat} contains results with type-dependence; results are qualitatively similar.} For example, perhaps those incidents are personal requests for service (e.g., to plant a tree somewhere or to prune a tree to enable certain construction), for which duplicates are unlikely; alternatively, officials may physically mark the location of an incident as ``report received; full inspection pending,'' preventing future reports. Under this case, the likelihood of the data satisfies:
\begin{equation}
    \mathcal{L}(\lambda_{\theta_i}|\mathcal{D}_i) \varpropto 
    \begin{cases}
        (1-\gamma)p\left(\tilde M_i ; \lambda_{\theta_i}(e_i-s_i)\right), & \tilde M_i \ge 1;\\
        \gamma + (1-\gamma)p\left(0 ; \lambda_{\theta_i}(e_i-s_i)\right), & \tilde M_i = 0. 
    \end{cases}
\end{equation}

Such a model is typically used when one observes more zeros than would be expected from a Poisson distribution.

As detailed in \Cref{sec:datapplication}, these models reasonably match the observed distribution of the number of reports received in the observation period, and we run analyses both with and without zero inflation. The feature set can differ across analyses. For example, in the spatial analyses, we add coefficients for each census tract and then include spatial smoothing \cite{morris2019bayesian}. 

Note that our choices are not exhaustive; other specifications may also be appropriate, preferable in various dimensions, or correspond to more realistic data-generating processes; the empirical strategy developed in \Cref{sec:identification} allows a large Poisson rate estimation literature (spatial, time-dependent, feature-dependent) to be applied, off-the-shelf. Our approaches here are chosen for simplicity in model interpretation.

We analyze our methods using simulated data in \Cref{app:simulation}. The simulation results confirm that: (1) under our model, evaluating data within an observation interval correctly disentangles reporting rates from incident rates, unlike a naive approach; and (2) a Poisson regression approach is more data-efficient than a naive MLE approach, yielding estimates with lower variance.

\subsection{Empirical validation of model results}
\label{sec:validation_hurricanes}
We now empirically validate our model estimates. A distinct advantage of our method is that it provides estimates of the reporting delay for specific incidents of each type. In this section, we validate our model-estimated reporting delays from our spatial model -- the model with the richest set of covariates -- against two sets of approximate ground truth on the reporting delays: when the incident time is approximately known due to storm events, and when we predict the known period between the first and second incident. \Cref{app:additionalvalidation} contains additional validations, including comparing our demographic coefficient model to voter participation rates. When reporting correlation significance in this section, we report $p$ values as is standard for Pearson correlations, testing the null hypothesis that the relevant samples are drawn from uncorrelated normal distributions.


\subsubsection*{Reports after hurricanes: approximately observed incident time}

Storms cause substantial serious tree damage, including knocking down trees; crucially, \textit{we can approximate the true incident time as the time that the storm in question arrived}. Thus for incidents reported in the days following a storm, we approximately know the true reporting delay. On August 4\textsuperscript{th}, 2020, Tropical Storm Isaias hit New York City, which caused major damage and triggered a travel advisory \cite{isaiasnyc}, predicting that the strongest winds and rains would be from 12 PM to 2 PM on that day. DPR received 15,266 service requests on 8/4 alone, far exceeding their daily average of around 200. We analyze service requests for incidents first submitted to DPR from 12 PM on 8/4 to 12 PM on 8/14 and calculate the true reporting delay of an incident as the time between the submission of the first report associated with it, and 12 PM on 8/4. We then use results from our spatial model to calculate the model-estimated reporting delays for each incident.\footnote{To calculate the model estimated reporting delay, we take the reporting rate for each incident implied by the model and calculate the average reporting delay for that rate, \textit{conditional} on the reporting delay being less than 10 days (to reflect our data filtering to only include incidents reported in that period).} 

\begin{figure}
    \centering
	\subfloat[][Validation on incidents caused by Tropical Storm Isaias.]{
		\includegraphics[width = .44\textwidth]{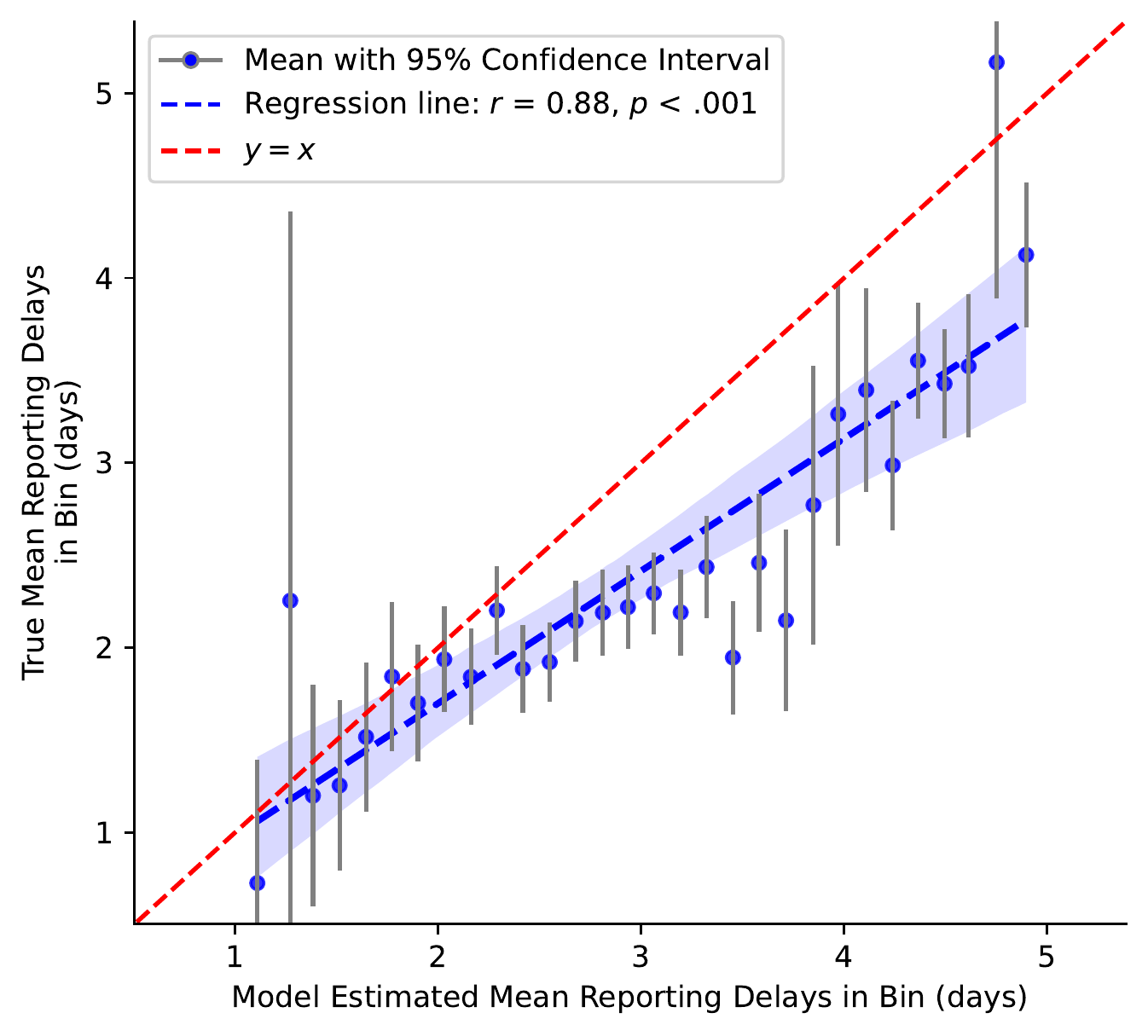}
		\label{fig:isaiasdelay}
	}
	\subfloat[][Validation on time between observed first and second reports.]{
		\includegraphics[width=.48\textwidth]{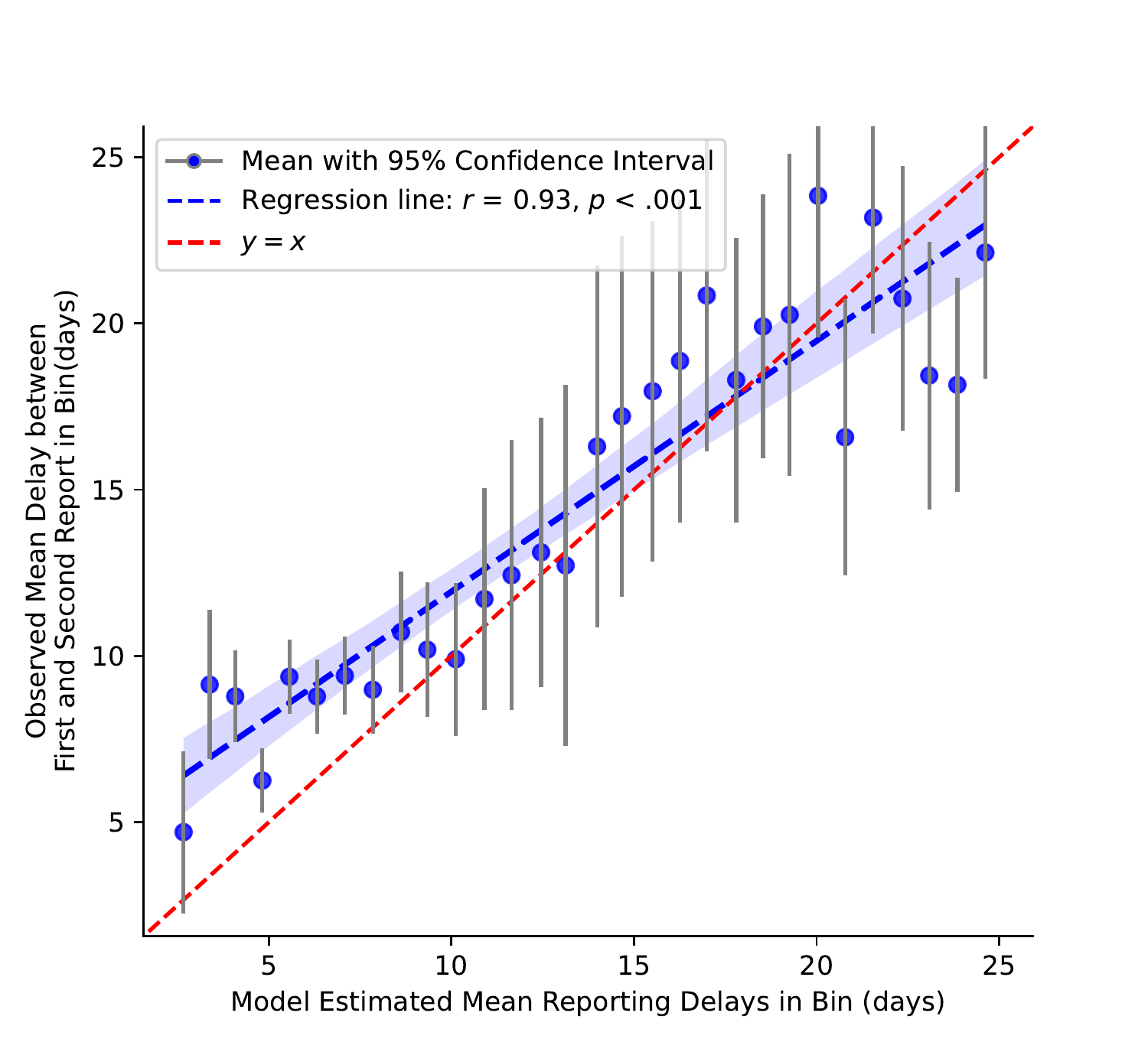}
		\label{fig:oospredictions}
	}
 	\hfill
    \caption{(a) Comparison of true reporting delays and model estimated reporting delays, based on data after Tropical Storm Isaias hit New York City on 8/4/2020 until 12 PM on 8/14/2020. True reporting delays for these incidents are calculated as the time between the first report of an incident and 12 PM on 8/4; model-estimated reporting delays are obtained using coefficients learned using the spatial model. (b) Comparison of observed delay between first and second reports and model estimated reporting delays, based on data between 9/1/2020 and 8/31/2022. Model-estimated delays are obtained using the spatial model. For both (a) and (b), all incidents are then categorized into 30 bins based on their estimated reporting delays, and we calculate the means of true and estimated delays within each bin. Model estimates remain correlated without binning. For the storm analysis, at the individual incident level, we observe Pearson $r = 0.20$ with $p<.001$. For true first-to-second report delays at the individual level, Pearson $r = 0.21$ with $p<.001$. Together, these results indicate that our model's estimates of reporting delay are accurate.} 
\end{figure}

\Cref{fig:isaiasdelay} shows the relationship between model-estimated delays and true reporting delays. There is a strong correlation (Pearson $r = 0.88$, $p < .001$) between binned model estimated reporting delays and the true reporting delays, and predictions are approximately on the $y=x$ line suggesting that our model accurately recovers heterogeneity in incident-level reporting delays.\footnote{We note that our model slightly \textit{over-predicts} reporting delays: reporting behavior likely changes after natural disasters, due to the amount of damage and city communication asking the public to report damage.} \Cref{app:hurricane} contains similar results from analyses using different specifications of the end date, using data on service requests induced by Tropical Storm Ida, and without binning.

\subsubsection*{Delays between first and second reports: substitute for true reporting delay}

Next, we use another form of validation: can our model predict (out-of-sample) true delays between the \textit{first} and \textit{second} report between incidents? If reporting rates are homogeneous (not changing over time for a given incident), then our model should be able to accurately recover these rates, and the model would similarly be accurate for estimating the (unknown) delay before the first report. 
We analyze service requests submitted to DPR from 9/1/2020 to 8/31/2022 (after the model training data period). For each incident that received 2 or more service requests, we calculate the time between the first and second reports and estimate their reporting delay using the spatial model. 

\Cref{fig:oospredictions} shows the relationship between model-estimated delays and observed delays between the first and second reports when model-estimated delays are binned. As above, there is a strong correlation, and predictions are approximately on the $y=x$ line.

\subsubsection*{Discussion}

These results illustrate the power of our model results in predicting reporting delay, by comparing model estimates to two sets of inferred ground truth: (a) using the arrival of the hurricane as an indication of incident occurrence time; (b) using the observed delay between the first and second reports as a substitute of the unobserved reporting delay. Both validations demonstrate that our model estimates are accurate -- model estimates correspond to actual delays, and  \textit{differences} in model estimates for different incident characteristics correspond to actual differences in reporting delays. We note that both these validation tasks are performed on \textit{out-of-sample} data: the data we trained on only includes service requests up to 6/30/2020, whereas the predicted delays are of incidents first reported after this date. This further suggests that our results are robust to changes over short time periods. \Cref{app:oosmodel} contains further validations of our empirical approach (a) training a model on data that is constructed by defining the time of the second report as the start of the observation interval and then comparing the estimated reporting delay obtained through this model and the observed time between first and second report; (b) comparing our demographic estimates to voter participation rates.

\section{Data and pre-processing}
\label{sec:data}
\label{sec:datapplication}

This section details data and methods used to derive results in \Cref{sec:results}, focusing on NYC analysis.

\subsection{Problem setting, dataset, and preprocessing}

The NYC dataset is composed of partially public data,\footnote{The publicly available part is here: \url{https://data.cityofnewyork.us/Environment/Forestry-Service-Requests/mu46-p9is/data}. The NYC DPR additionally provided us with internal data on inspection and work order outcomes and hashed, anonymized identifiers for the caller's name, email, and phone number, if provided by the caller. We primarily focus on this dataset given a working relationship with NYC DPR, who provided domain expertise in terms of data pre-processing and interpretation.} from the NYC Department of Parks and Recreation (NYC DPR). It contains \num{223416} service requests made by the public between 6/30/2017 and 6/30/2020 about street trees. For each request, we have location and incident-level covariates. We further observe whether the service request led to an inspection (an NYC DPR inspector attempting to find and evaluate the issue), the outcome of the inspection (e.g., an assessment of the risk the problem poses), and what type of work order was created (if any). For each event, we further observe time stamps. Crucially, for inspected requests (63\%), the NYC DPR marks which other requests refer to the same incident. See \Cref{tab:Summary} for summary statistics.

There are two potential sources of bias not captured by the model: (1) duplicate reports may be reported by the same person, and so would not be independent; 
(2) due to practices by NYC DPR, we only observe whether a report is a duplicate for inspected incidents. We analyze these biases in \Cref{appendix:bias}, arguing that they should have minimal impact on the validity of our results. Notably: (a) using data provided by NYC DPR, we can partially filter out duplicate reports by the same person and so mitigate bias (1); (b) the Chicago data does not suffer from bias (2). 

Before applying our Bayesian Poisson regression approach, we need to construct a dataset where each row corresponds to an incident and contains all relevant covariates. \Cref{app:nycpreprocessing} details this procedure. In short, our pre-processing proceeds in the following steps: (1) we filter uninspected service requests for which we have no duplicate information, duplicate reports by the same caller, and incidents in which the first inspection is logged to be before the first service request (due to, e.g., data errors); (2) we construct an observation interval for each incident, according to  \Cref{thm:stoppingtimes} and the discussion that follows, and then count the number of service requests in this observation interval; (3) for each incident, we join relevant information about it from the service requests it receives, the inspection outcomes logged by NYC DPR inspectors, and 2020 Census data.

After pre-processing, we are left with a dataset of \num{81638} incidents on which we conduct our main analyses; for each incident $i$, we have the duration of observance $e_i-s_i$, the number of total reports $\tilde M_i$, and all the various geographic and demographic covariates associated with it. Appendix \Cref{tab:desc} presents all the covariates in this dataset.

\begin{table}[tb]
\caption{Summary statistics: service requests and inspections statistics are directly from the NYC DPR reports data; unique incidents statistics are derived from the inspected service requests, where the average duration is calculated using the stopping times definition. The \textit{Others} category includes four other categories: Rescue/Preservation, Remove Stump, Pest/Disease, Planting Space, that together account for less than 0.4\% of the service requests; we exclude them in the analysis. Categories are reported by the person raising the service request: for example, Prune reflects a request to prune the leaves of an over-grown tree, and Hazard reflects a request for NYC DPR to attend to a potentially hazardous condition concerning trees.}
\label{tab:Summary}
\centering
\resizebox{.9\textwidth}{!}{%

\begin{tabular}{r|r|rr|rrr}
\multicolumn{1}{c|}{\textbf{}}             & \multicolumn{1}{c|}{\textbf{\begin{tabular}[c]{@{}c@{}}Service \\ requests\end{tabular}}} & \multicolumn{2}{c|}{\textbf{Inspections}}                                                                                                                          & \multicolumn{3}{c}{\textbf{\begin{tabular}[c]{@{}c@{}}Incidents\\ (from inspected SRs)\end{tabular}}}                                                                                                                                             \\
\multicolumn{1}{l|}{}                      & \multicolumn{1}{l|}{}                                                                     & \multicolumn{1}{c}{\begin{tabular}[c]{@{}c@{}}Inspected \\ SRs\end{tabular}} & \multicolumn{1}{c|}{\begin{tabular}[c]{@{}c@{}}Fraction\\ inspected\end{tabular}} & \multicolumn{1}{c}{\begin{tabular}[c]{@{}c@{}}Unique\\ incidents\end{tabular}} & \multicolumn{1}{c}{\begin{tabular}[c]{@{}c@{}}Avg. reports\\per incident\end{tabular}} & \multicolumn{1}{c}{\begin{tabular}[c]{@{}c@{}}Median Days\\to Inspection\end{tabular}} \\ \hline
\multicolumn{1}{l|}{\textbf{Total number}} &

    \num{223416} &              \num{140057} &                0.63 &            \num{98994} &                  1.41 &                              5.79 \\
\multicolumn{1}{l|}{\textbf{By Borough}}   &                                                                                           &                                                                              &                                                                                     &                                                                                &                                                                                &                                                                                     \\
      \textit{Queens} &     \num{90930} &               \num{55904} &                0.61 &             \num{42724} &                  1.31 &                              5.50 \\
\textit{Brooklyn} &     \num{67852} &               \num{45666} &                0.67 &             \num{27368} &                  1.67 &                             8.68 \\
\textit{Staten Island} &     \num{27263} &               \num{15601} &                0.57 &             \num{11755} &                  1.33 &                              4.07 \\
\textit{Bronx} &     \num{22629} &               \num{14928} &                0.66 &             \num{10059} &                  1.48 &                              6.76 \\
\textit{Manhattan} &     \num{14702} &                7,938 &                0.54 &              7,072 &                  1.12 &                              2.25 \\
\multicolumn{1}{l|}{\textbf{By Category}}  &                                                                                           &                                                                              &                                                                                     &                                                                                &                                                                                &                                                                                     \\
\textit{Hazard} &     \num{87864} &               \num{59667} &                0.68 &             \num{40167} &                  1.49 &                              1.93 \\
\textit{Prune} &     \num{48649} &               \num{26706} &                0.55 &             \num{20589} &                  1.30 &                             10.29 \\
\textit{Remove Tree} &     \num{44177} &               \num{29307} &                0.66 &             \num{22275} &                  1.32 &                              7.73 \\
\textit{Root/Sewer/Sidewalk} &     \num{30856} &               \num{17392} &                0.56 &             \num{14694} &                  1.18 &                             17.86 \\
\textit{Illegal Tree Damage} &     \num{11061} &                6,533 &                0.59 &              5,525 &                  1.18 &                             22.87 \\
\textit{Other} &       \num{809} &                 \num{452} &                0.56 &               \num{429} &                  1.05 &                              5.48 \\
\end{tabular}%
}

\end{table}

\subsection{Analyses, training, and evaluation}
\label{sec:analysis}

We fit the models using Stan \citep{carpenter2017stan}, a Bayesian probabilistic programming language that estimates models using Hamiltonian Monte Carlo. All models were run with 150 warm-up iterations and 300 sampling iterations on 4 chains. Priors are generally zero mean Normal distributions, with variance chosen according to the dimensionality and expected scale of the parameters. Models (with full prior information) are in \Cref{app:stancode} and the code repository. 

\paragraph{Covariates included in the analyses} We use three sets of covariates for the three analyses introduced below. Every analysis includes our incident-level covariates (inspection results which include a risk assessment score and the condition of the tree, report \textit{Category}, tree size) and an intercept. 

Our \textbf{Base} analysis includes just these covariates along with \textit{Borough} level fixed effects. As the covariates are standardized, these coefficients all have a zero-mean Normal prior. \textit{Borough} and \textit{Category} are categorical variables; instead of dropping one level for each variable, to maintain identifiability, we enforce a sum-zero constraint\footnote{\url{https://mc-stan.org/docs/2_28/stan-users-guide/parameterizing-centered-vectors.html}} on their coefficients to ease interpretability. For the other categorical covariates, we drop one level each.

In our \textbf{Spatial} analysis, we include coefficients for each of the (over 2,000) census tracts. We use an Intrinsic Conditional Auto-Regressive (ICAR) model~\cite{morris2019bayesian} for the spatial coefficients, alongside the incident level coefficients. In such models, each census tract coefficient is assumed to be normally distributed with a mean equal to the average of its neighbors -- i.e., spatial smoothness is encoded such that census tracts with few incidents borrow information from neighboring census tracts.

In our \textbf{Socioeconomic} analyses, we first include each census \textit{socioeconomic} covariate \textit{individually} with the incident-level covariates (e.g., the regression with \textit{Median age} contains as coefficients the base covariates, but not \textit{Fraction of Hispanic residents}; this choice is due to the census covariates being collinear). These covariates are also standardized, and coefficients' prior means are set to zero. The result of this analysis is presented in \Cref{tab:censuscoefficients}. We further conduct an analysis where we include a subset of these socioeconomic covariates jointly, including log income per capita, the fraction of white residents, the fraction of renters, median age, the fraction of residents with college degrees, and log population density. Note that we do not, for example, further include the fraction of Hispanic residents, which would be highly collinear with the fraction of white residents.


%

\paragraph*{Evaluation of model fit} For the \textbf{Base} covariates, we fit both the \textbf{standard} and \textbf{zero-inflated} Poisson regressions and evaluate model fit. Both models perform similarly in terms of overall predictive power, with Pearson correlations of $0.1335$ and $0.1248$, respectively, for the mean prediction for the number of reports for each incident versus the observed number of reports. Next, we perform a standard check for Bayesian models: whether the posterior prediction distribution matches the observed data distribution, as a test for whether the model captures the underlying data generating process; Appendix \Cref{fig:posteriors} compares the distribution for each model specification. We observe that each model roughly matches the observed distribution, but differs in various ways. The \textbf{standard} Poisson regression under-estimates the (large) fraction that $\tilde M_i = 0$, but more accurately captures the distribution tail. Inversely, the \textbf{zero-inflated} Poisson regression matches the fraction for small $\tilde M_i$ (where the majority of the mass is), at the cost of misestimating the tail. We further experiment with a \textbf{zero-inflated} Poisson regression where the zero-inflation coefficient depends on incident \textit{Category}, with qualitatively similar results. We primarily report results with the \textbf{zero-inflated} Poisson regressions, choosing it because it better matches the primary distribution mass while keeping the model succinct. Results with the \textbf{standard} model are similar.

\paragraph{Reproducibility with public NYC data} The NYC results presented in the main text leverage non-public data provided to us by the NYC DPR. In \Cref{app:nycpublic}, we replicate all analyses with data that are publicly available from the NYC Open Data platform. The results are qualitatively similar. 

\subsection{Analyses using Chicago data} To demonstrate the generalizability of our method, we additionally apply it to data from the City of Chicago. This dataset comes from resident reports made to the City of Chicago's Department of Transportation (CDOT) and Department of Water Management (CDWM) through Chicago's 311 system. The data can be accessed through Chicago Data Portal,\footnote{\url{https://data.cityofchicago.org/Service-Requests/311-Service-Requests/v6vf-nfxy}} which logs all 311 service requests. This dataset is actively maintained by the Chicago Data Portal, and at the time of our access, it contained \num{949352} service requests made by the public in the period between 3/1/2019 and 7/4/2022. \Cref{app:chicagopreprocessing} details the preprocessing. We repeat the analysis done on the NYC DPR data. We use the same sets of covariates for analysis, using the \textbf{zero-inflated} Poisson regression model: a {Base} analysis using all incident-level type characteristics, Spatial model with census block group coefficients and spatial smoothing, and socioeconomic analysis.  


\FloatBarrier

\FloatBarrier
\begin{footnotesize}  
\bibliographystyle{plainnat}
	\bibliography{311}
\end{footnotesize}

\newpage
\FloatBarrier
\appendix
	 \normalsize

\section{Detailed Related Work}
\label{app:related}
This section discusses the literature on equity and efficiency in crowdsourcing systems, and briefly outlines our contribution to this literature. Given the importance of 311-like systems in allocating government services, there has been much interest in quantifying how efficient and equitable these systems are. Previous works mainly aim to answer two questions: (1) is \textit{reporting} behavior similar across socioeconomic groups, and (2) do governments \textit{respond} to the reporting equitably, or do they prioritize some neighborhoods for the same types of requests? If there are disparities in either stage of the process, then government services may be allocated inequitably.

Towards the second question, there has been a long line of work since the late 1970s, documenting how the allocation of government services has changed since the adoption of co-production systems. Early works warned of the potential of biases in how governments responded \citep{jones1977,thomas1982}. Researchers have found response time differences between neighborhoods, e.g., during Hurricane Katrina \citep{elliott_race_2006}; other recent studies suggest that the practical differences induced by these differential response times are negligible \citep{clark_advanced_2020} or explained by other factors \citep{xu_closing_2020}.

This work considers the first question, of how reporting behavior varies across demographic and incident characteristics. The literature here focuses on two types of reporting behavior: \textit{under-reporting}, and \textit{misreporting}. The latter refers to when residents report problems that, upon inspection by the agency, are not found or are less severe than reported; there is some evidence that such misreporting occurs and is heterogeneous across areas \citep{mclafferty_placing_2020}.

On the other hand, differential \textit{under-reporting} is when people, faced with similar problems, differentially report those problems to the government. Such differences might emerge, for example, due to access to communications technologies, familiarity with the system, or trust in government \cite{buell2021surfacing}. As discussed above, the biggest challenge to studying \textit{under-reporting} is that researchers rarely directly observe an incident \textit{unless it is reported}. It is thus difficult to disentangle whether certain areas have fewer reports because problems truly occur less frequently there, or whether people in those areas are reporting less frequently given a similar distribution of problems.

One line of work in answering this question does not attempt to distinguish between these two possibilities. Such work entails regressing the number of reports (or the number of unique incidents, if some incidents are reported multiple times) as a function of socioeconomic and report characteristics, as well as potentially space and time \citep{clark_coproduction_2013,cavallo_digital_2014,minkoff_nyc_2016}. Such works have found that the number of reports may (but does not always) differ by wealth, race, and education level. However, as we formalize in \Cref{sec:identification}, such a method cannot identify whether these fewer reports are due to fewer true incidents, or less reporting for the same incidents. Rather, these works must assume that the latter is the cause. \citet{akpinar2021effect} point out that such an assumption may not always hold, affecting downstream models, in the context of crime reporting and predictive police systems.

Another approach to quantifying under-reporting is to construct a proxy for the true incident rate and then compare the estimates with the observed rates. \citet{kontokosta_bias_2021} analyze pothole complaints in Kansas City, Missouri, at a fine-grained spatial level; they leverage additional street assessment data (resulting from scheduled visual inspections) that grade the quality of roadways to construct relative predictions for the true number of potholes. They find that low-income and minority neighborhoods are less likely to report street conditions or ``nuisance'' issues while prioritizing more serious problems. \citet{hacker_spatiotemporal_2020} use a similar metric, with the addition of temporal trends geared more towards an epidemiological study, and \citet{pak_fixmystreet_2017} use roadway length as part of their proxy. \citet{kontokosta_equity_2017} analyze residential building problem complaints in New York City's 311 system---using building conditions as a proxy---and also find that many socioeconomic factors contribute to differential reporting behavior. \citet{obrien_ecometrics_2015} use administrative records from active broken streetlight inspections. While the approach can disentangle reporting behavior from true incident rates, it heavily relies on the accuracy of the `ground-truth' proxy estimate. It may be difficult to find such proxies or to validate their accuracy, especially for different types of incidents within the same class (e.g., more vs less serious potholes), or more generally for high dimensional types: for example, estimating ground truth rates in every census tract for every kind of incident. In particular, note that each of the above examples estimates ground truth rates for a specific type of incident, in a specific time period. 

The challenge between identifying occurrence rates as opposed to reporting rates is also present in the ecology literature when counting the number of animals of a given species in an area. Some works similarly use proxy methods -- such as by normalizing by occurrence rates in neighboring geographies or using reference species whose occurrence rates are approximately known \cite{hill2012local,renner2013equivalence,kery2008hierarchical}.

To this literature, our work contributes a statistical technique to identifying under-reporting---that does not require such external data or the construction of proxies for the ground truth incident rate. Rather, it relies only on report data already logged by many agencies and connects this task to a large literature on Poisson rate estimation. In applying our method to 311 data, we find reporting rate differences across neighborhoods associated with socioeconomic status and show that these differences are beyond what one would expect from just incident-level characteristics.

\section{Omitted technical results and proofs}

\subsection{Non-identifiability of reporting rate with observed incidents count}

\begin{restatable}{proposition}{propnonidenti}\label{prop:identi} Consider the simplest setting: both incident occurrence and reporting follow time-homogeneous Poisson processes, with $\Lambda_\theta(\tau) = \Lambda_\theta$ and $\lambda_\theta(\tau) = \lambda_\theta$. Further, let incident reporting duration $T_i$ be distributed according to $F$. Then using just the number of observed incidents $N_\theta^\obs(T)$ of type $\theta$ in a known time duration $[0, T]$, the reporting rate $\lambda_\theta$ is not identifiable. In other words, $N_\theta^\obs(T)$ is a function of both $\Lambda_\theta$ and $\lambda_\theta$:
    \begin{align*}
        &\lim_{T \to \infty} \frac{N_\theta^\obs(T)}{T} = \Lambda_{\theta}'\\
        \text{where\,\,\,\,\,\,\,\,\,\,\,\,\,\,\,\,\,} &\Lambda_{\theta}' = \Lambda_{\theta}\left[1-\int_{0}^\infty\exp\left(-\lambda_\theta t\right)\text{d}F(t)\right].
    \end{align*}
\end{restatable}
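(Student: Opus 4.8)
The plan is to compute the long-run density of observed incidents directly, via a thinning (marking) argument applied to the incident-birth Poisson process, and then to observe that the resulting limit depends on the generative parameters $\Lambda_\theta$ and $\lambda_\theta$ only through the single scalar $\Lambda_\theta'$.

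First I would compute the probability $p$ that a single incident is ever observed. Fix an incident with reporting duration $T_i = t$. By the model of \Cref{sec:model}, its reports arrive as a Poisson process of rate $\lambda_\theta$ over a window of length $t$, so the number of reports is $\text{Poisson}(\lambda_\theta t)$ and the incident generates at least one report with probability $1 - \exp(-\lambda_\theta t)$. Marginalizing over the duration distribution $F$ gives
\begin{equation*}
p = \int_0^\infty \left[1 - \exp(-\lambda_\theta t)\right]\,\text{d}F(t) = 1 - \int_0^\infty \exp(-\lambda_\theta t)\,\text{d}F(t),
\end{equation*}
which is exactly the bracketed factor appearing in $\Lambda_\theta'$.

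Next I would invoke the thinning (marking) theorem for Poisson processes. Incident births form a homogeneous Poisson process of rate $\Lambda_\theta$; attach to each birth the mark consisting of its duration $T_i$ and its report arrivals, drawn i.i.d.\ across incidents and independent of the birth times. Whether an incident is observed is a function of its mark alone, so each birth is retained independently with probability $p$. By the thinning theorem, the observed incidents form a homogeneous Poisson process of rate $\Lambda_\theta p = \Lambda_\theta'$. Hence, up to boundary corrections, $N_\theta^{\obs}(T)$ is Poisson distributed with mean $\Lambda_\theta' T$, and the strong law of large numbers for Poisson processes yields $N_\theta^{\obs}(T)/T \to \Lambda_\theta'$ almost surely. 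The non-identifiability is then immediate: the only feature of the parameters recoverable from $N_\theta^{\obs}(T)$ is the scalar $\Lambda_\theta'$, and the map $(\Lambda_\theta, \lambda_\theta) \mapsto \Lambda_\theta'$ is many-to-one (e.g.\ increasing $\Lambda_\theta$ while decreasing $\lambda_\theta$ so as to hold $\Lambda_\theta p$ fixed leaves the limiting law unchanged), so $\lambda_\theta$ cannot be disentangled from the observed count alone.

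The step I expect to be the main obstacle is the careful treatment of boundary effects and the precise meaning of ``observed in $[0,T]$'': incidents whose active window straddles an endpoint (born before $0$ but first reported after it, or born near $T$ with reports spilling beyond $T$) must be shown not to affect the normalized limit. These straddling contributions are $O(1)$ in expectation -- assuming $F$ is not too heavy-tailed, e.g.\ has finite mean -- and therefore vanish after dividing by $T$; making this rigorous requires bounding the number of such incidents and confirming the almost-sure convergence is unaffected. The thinning step itself is routine given the independence built into the model.
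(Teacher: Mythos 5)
Your proposal is correct and follows essentially the same route as the paper: compute the per-incident observation probability $p = 1 - \int_0^\infty e^{-\lambda_\theta t}\,\mathrm{d}F(t)$, thin the incident-birth Poisson process by $p$ to get an observed process of rate $\Lambda_\theta p = \Lambda_\theta'$, and pass to the long-run limit (the paper's Lemma phrases the thinning step as a conditional Binomial argument, which is the same fact). If anything, your explicit handling of boundary/straddling incidents is more careful than the paper's informal ``close to steady state'' appeal.
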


The proof of Proposition~\ref{prop:identi} follows directly from the following Lemma.

\begin{lemma}\label{observed_rate}
Suppose each incident gets reported independently, and the distribution of the interval of reporting duration $T_i$ has density function $f(\cdot):[0,\infty)\mapsto \mathbb{R}$. Then under steady state, $N_\theta^{\obs}$ follows a Poisson process with rate $\Lambda_{\theta}'$, where
$$
\Lambda_{\theta}' = \Lambda_{\theta}\left[1-\int_{0}^\infty\exp\left(-\int_{0}^t\lambda_{\theta}(u)du\right)f(t)dt\right].
$$

In the simplest time-homogeneous case, this rate simplifies to:
$$
\Lambda_{\theta}' = \Lambda_{\theta}\left[1-\int_{0}^\infty\exp\left(-\lambda_\theta t\right)f(t)dt\right].
$$
\end{lemma}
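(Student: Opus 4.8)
The plan is to recognize this as an \emph{independent thinning} of the incident birth process, followed by a displacement argument to account for the gap between an incident's birth and its first report.

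First I would model incident births of type $\theta$ as a homogeneous Poisson process on the time axis with rate $\Lambda_\theta$, and attach to each birth $i$ an independent mark consisting of (a) its reporting duration $T_i \sim F$ and (b) the Poisson report process with (possibly time-varying) rate $\lambda_\theta(\cdot)$ active during $[t_i, t_i + T_i]$. The crucial structural fact is that whether incident $i$ is \emph{observed}---i.e., generates at least one logged report---depends only on its own mark, and is therefore independent across incidents.

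Second I would compute the per-incident observation probability $p$ by conditioning on the duration. Given $T_i = t$, the report count on $[0,t]$ (with time measured from birth) is Poisson with mean $\int_0^t \lambda_\theta(u)\,du$, so the probability of at least one report is $1 - \exp\left(-\int_0^t \lambda_\theta(u)\,du\right)$. Integrating against $f$ yields
$$
p = 1 - \int_0^\infty \exp\left(-\int_0^t \lambda_\theta(u)\,du\right) f(t)\,dt ,
$$
whence $\Lambda_\theta\, p = \Lambda_\theta'$ exactly as claimed; the time-homogeneous formula then follows by substituting $\int_0^t \lambda_\theta(u)\,du = \lambda_\theta t$.

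Third I would invoke the thinning (coloring) theorem for Poisson processes: coloring each birth ``observed'' independently with probability $p$ leaves the observed births as a Poisson process of rate $\Lambda_\theta\, p = \Lambda_\theta'$. The main subtlety---and the step I expect to be the real obstacle---is that $N_\theta^{\obs}$ is naturally indexed by each incident's \emph{first report} time rather than its (unobserved) birth time. Since an observed incident's first report is a random displacement of its birth by a delay that is independent across incidents and almost surely finite, the displacement (mapping) theorem for Poisson processes applies; because the observed-birth intensity is constant in time, convolving it with the delay distribution (which integrates to one) leaves the intensity unchanged at $\Lambda_\theta'$. This is precisely where the ``steady state'' hypothesis does its work: it guarantees the birth process has run long enough that boundary effects from incidents born before time $0$ are negligible, so the resulting first-report process is stationary Poisson with the stated rate. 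I would close by noting that for the asymptotic claim of Proposition~\ref{prop:identi} one needs only the flow-rate consequence---every incident is resolved in a.s.\ finite time, so the long-run rate of observed incidents equals $\Lambda_\theta\, p = \Lambda_\theta'$---which follows from thinning alone, without the displacement refinement.
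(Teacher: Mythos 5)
Your proposal is correct and follows essentially the same route as the paper: the identical conditioning computation gives the per-incident observation probability $p = 1-\int_0^\infty\exp\left(-\int_0^t\lambda_\theta(u)\,du\right)f(t)\,dt$, and the paper's Poisson--binomial mixture step is exactly the thinning theorem you invoke. Your additional displacement-theorem step, handling the fact that observed incidents are registered at their first-report times rather than their (unobserved) birth times, makes explicit a point the paper absorbs informally into its ``steady state'' hypothesis, and is a welcome refinement rather than a different argument.
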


\begin{proof}[Proof of \Cref{observed_rate}]
Let $m(t)$ be the number of times an incident is reported in an interval of $t$, starting from its birth. We know from the model assumption that $m(t)$ follows a Poisson distribution:
$$
m(t)\sim \text{Poisson}\left(\int_0^t\lambda_\theta(u)du\right).
$$

In steady state, each unique incident gets reported with probability
\begin{align*}
p &= \int_{0}^{\infty}\text{Pr}[m(t)\ge 1|T_i=t]f(t)dt\\
&=\int_{0}^\infty \left[1-\exp\left(-\int_{0}^t\lambda_{\theta}(u)du\right)\right]f(t)dt\\
&= 1-\int_{0}^\infty\exp\left(-\int_{0}^t\lambda_{\theta}(u)du\right)f(t)dt,
\end{align*}
which simplifies under time-homogeneity to
\begin{align*}
p &= 1-\int_{0}^\infty\exp\left(-\lambda_{\theta}t\right)f(t)dt.
\end{align*}

Over a time interval of $t$, the total number of incidents of type $\theta$ that happen, $N_\theta(t)$ follows a Poisson process with rate $\Lambda_\theta$. Conditional on $N_\theta(t)=n, n=0,1,\dots$, under steady state, $N_{\theta}^\obs(t)$ follows a binomial distribution with parameters $(n,p)$. Thus $N_{\theta}^\obs(t)$ follows a Poisson distribution with rate $\Lambda_\theta p$, which completes the proof.

\end{proof}

\begin{proof}[Proof of Proposition~\ref{prop:identi}.]
Lemma~\ref{observed_rate} establishes that the rate at which we observe unique incidents depends on a lot of various aspects. Under steady state, $N_{\theta}^\obs$ follows a Poisson process with parameter $\Lambda_\theta'$, where $\Lambda_\theta'$ is a function of the incident occurrence rate $\Lambda_\theta$, the (potentially non-homogeneous) reporting rate $\lambda_\theta(\cdot)$ and the distribution of reporting duration $f(\cdot)$.

In practice, when the observation period length $T$ is large, we can safely assume that for each period $[\tau,\tau+1), \tau=0,\dots,T$, the observed reports  $N_\theta^\obs\left([\tau,\tau+1]\right)$ are close to steady state, and thus follow independent and identical Poisson$(\Lambda_\theta')$ distribution. Following the law of large numbers we get
$$
\lim_{T \to \infty} \frac{N_\theta^\obs(T)}{T} = \lim_{T \to \infty} \frac{\sum_{\tau=0}^{T-1}N_\theta^\obs\left([\tau,\tau+1]\right)}{T} = \mathbb{E}\left[N_\theta^\obs\left(1\right)\right]=\Lambda_{\theta}'.
$$

Thus, if we are only using information about the unique incidents, it is impossible to determine $\lambda_\theta$ without having full knowledge about both $\Lambda_\theta$ and $f(\cdot)$.

\end{proof}

\subsection{Identifiability of reporting rate with duplicate reports}
\label{sec:mainthmproof}
In this subsection, we restate \Cref{thm:stoppingtimes} formally in the notation of stochastic processes. Then, we show that when an observation interval is properly defined, evaluating the likelihood of the data within this observation interval identifies the reporting rate.

The core challenge is that, because we do not observe true incident birth (and potentially true incident death), our ``observation interval'' -- when we observe the Poisson reporting process -- \textit{is itself random and dependent on the Poisson reporting process}. For example, we can only start the counting interval at the time of the first report, and the agency responds to the incident partially as a function of the number of reports. Thus, the distribution of the entire data, not just the number of reports within the interval (i.e., including start and stop times) could be a complex, unknown function of $\lambda$. The theorem formalizes that, as long as the start and end times depend only on the rate $\lambda$ through the number of reports up to those times, we can nevertheless decompose the likelihood and reduce the task to Poisson rate estimation.

Formally, we separate terms involving $\lambda$ from terms independent of $\lambda$, which requires the two conditions; then, we arrange the likelihood function to exhibit the structure of a Poisson distribution likelihood. 

The following \Cref{thm:decomp} is a restated version of \Cref{thm:stoppingtimes}, where we use standard Poisson process notation, and state mathematically the two conditions we introduce in \Cref{thm:stoppingtimes}. 

\begin{theorem}
\label{thm:decomp}
Consider a Poisson process $\{X(t), T \ge t \ge 0\}$ with rate $\lambda$ (where $X(t)$ denotes the number of jumps up to time $t$), and denote the jumping times $\{T_1, T_2, \dots\}$. Denote the random variables of the start and end of an observation period $S$ and $E$, and suppose $S$ and $E$ satisfy the following conditions, respectively:
    
\noindent\textbf{Condition 1:} Given $T_1 = t_1$, $S$ is defined on $[t_1, T]$, and 
\[
\bbP(S = t|T_1 = t_1) = g(t), \forall T \ge t \ge t_1,
\]
where $g(t)$ is not dependent on $\lambda$ or the sample path of the Poisson process after $t_1$.

\noindent \textbf{Condition 2:} Given the jumping times of the Poisson process up to any time $t$, that is, for some $m\in \mathbb{Z}+$, given $\{T_1 = t_1, \dots, T_m = t_m, T_{m+1}>t\}$, the distribution of $E$ is
\[
\bbP[E = t | \{T_1 = t_1, \dots, T_m = t_m, T_{m+1}>t\}] = h_m(t), \text{ for some } m\in \mathbb{Z}+, \forall T\ge t\ge t_m,
\]
where $h_m(t)$ is a function independent of $\lambda$. Furthermore, given $T_1 = t_1$, the distribution of $E$ is independent of the realization of $S$:
\[
\bbP[E = t| T_1 = t_1, S = s] = \bbP[E = t|T_1 = t_1], \forall T\ge t \ge t_1,  \forall s\ge t_1.
\]

Now, suppose we observe data which contains for each incident, $S = s, E = e, X(e) - X(s) = M$, where $X(s)  = m \ge 1$, and all the jumping times within interval $(s, e]$: $\mathcal{J}_{M} = \{T_{m+1} = t_{m+1}, \dots, T_{m+M} = t_{m+M}\}$, where $t_{m+1}>s$ and $T_{m+M}\le e$. Then, conditional on the observed jumping times before $s$: $\mathcal{H}:=\{T_1 = t_1, \dots, T_m = t_m \le s\}$, the likelihood of the data can be decomposed as follows:
\begin{equation}
\bbP[X(e) - X(s) = M, E = e, S = s, \mathcal{J}_M | \mathcal{H}] = f(e,s, \mathcal{H}, \mathcal{J}_{M})p(M|\lambda(e-s)), \label{eq:thm}
\end{equation}
where $p(M|\lambda(e-s))$ denotes the likelihood of a Poisson distribution with $M$ occurrences and rate parameter $\lambda(e-s)$, and $f(e,s, \mathcal{H}, \mathcal{J}_{M})$ is a function that does not depend on $\lambda$.
\end{theorem}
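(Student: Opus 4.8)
The plan is to evaluate the joint probability on the left-hand side of \Cref{eq:thm} by the chain rule, peeling off one factor for $S$, one for the Poisson increment on $(s,e]$, and one for $E$, and then to show that only the middle factor carries any dependence on $\lambda$. Throughout I read the conditioning event $\mathcal{H}$ as the \emph{complete} history of the process up to time $s$, so that $\mathcal{H}$ fixes $T_1=t_1,\dots,T_m=t_m$ together with the fact that no further jump occurs before $s$; this is precisely what lets the memoryless (strong Markov) property restart the process afresh at $s$. Noting that the event $\{X(e)-X(s)=M,\mathcal{J}_M\}$ is, given $\mathcal{H}$, the same as $\{T_{m+1}=t_{m+1},\dots,T_{m+M}=t_{m+M},\,T_{m+M+1}>e\}$, I would write
\[
\bbP[X(e)-X(s)=M, E=e, S=s, \mathcal{J}_M \mid \mathcal{H}] = \bbP[S=s\mid\mathcal{H}]\,\bbP[\mathcal{J}_M, T_{m+M+1}>e\mid S=s,\mathcal{H}]\,\bbP[E=e\mid \mathcal{J}_M, T_{m+M+1}>e, S=s, \mathcal{H}],
\]
and then treat the three factors in order.

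For the first factor, Condition 1 gives $\bbP[S=s\mid\mathcal{H}]=g(s)$: since $S$ depends on the path only through $T_1=t_1\in\mathcal{H}$ and $g$ is independent of $\lambda$, this factor is $\lambda$-free. For the middle factor I would first invoke the conditional-independence content of Conditions 1 and 2 — that the jumps after $s$ and the variable $E$ are each independent of $S$ given $\mathcal{H}$ — to discard the conditioning on $S=s$. The strong Markov property at $s$ then makes $\{X(s+u)-X(s)\}_{u\ge 0}$ a fresh rate-$\lambda$ Poisson process, so the density of observing jumps exactly at $t_{m+1}<\cdots<t_{m+M}$ in $(s,e]$ with none in $(t_{m+M},e]$ is the standard sample-path density $\lambda^{M}\exp(-\lambda(e-s))$. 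The algebraic heart of the argument is the identity
\[
\lambda^{M}\exp(-\lambda(e-s)) = \frac{M!}{(e-s)^{M}}\,p\big(M\mid \lambda(e-s)\big),
\]
which exhibits the path density as the Poisson mass function times the $\lambda$-free factor $M!/(e-s)^{M}$. Finally, once $\mathcal{H}$, $\mathcal{J}_M$, and $T_{m+M+1}>e$ are fixed, the whole path up to $e$ is determined with exactly $m+M$ jumps, so Condition 2 gives the third factor as $h_{m+M}(e)$, again $\lambda$-free and independent of $S$. Collecting terms yields \Cref{eq:thm} with $f(e,s,\mathcal{H},\mathcal{J}_M)=g(s)\,\tfrac{M!}{(e-s)^{M}}\,h_{m+M}(e)$.

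I expect the main obstacle to be the circularity created by $E$ being a stopping-time-like endpoint whose law depends on the very jumps we are trying to count: the interval $(s,e]$ over which reports are tallied is itself random and path-dependent, so a careless factorization can smuggle $\lambda$ into the start or end factors. Handling this cleanly hinges on two points. First, $\mathcal{H}$ must be read as the full pre-$s$ history, so that memorylessness produces the clean length $e-s$ rather than $e-t_m$; splitting the survival term at $t_m$ instead of $s$ would leave a stray $\exp(-\lambda(s-t_m))$ that genuinely depends on $\lambda$. Second, I must verify the exact conditional-independence statements guaranteed by Conditions 1 and 2 — that $(\mathcal{J}_M,E)\perp S \mid \mathcal{H}$ and that the conditional law of $E$ depends on the data only through $h_{m+M}(e)$ — before the $S=s$ conditioning may be dropped. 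The remainder is routine bookkeeping with the Poisson sample-path density, and the extension to the zero-inflated model follows by mixing this likelihood with a point mass at $\tilde M_i=0$, which leaves the decomposition intact.
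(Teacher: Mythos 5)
Your proposal is correct in substance and reaches the theorem's conclusion, but it takes a noticeably different factorization from the paper's. The paper decomposes the event sequentially in time, interleaving the survival of $E$ past each successive jump with the inter-jump waiting times; this produces extra $\lambda$-free factors $\prod_{i=1}^{M}\int_{t_{m+i}}^{T}h_{m+i-1}(t)\,dt$ in its $f$, and it treats the cases $M=0$, $M=1$, and $M>1$ separately. You instead peel off $S$, then the entire path event on $(s,e]$ (jumps at $t_{m+1},\dots,t_{m+M}$ and none afterwards up to $e$), and put $E$ last, so that your final factor is a single direct application of Condition 2 at $t=e$ and your $f$ omits the product of integrals. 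Since the theorem only asserts the existence of \emph{some} $\lambda$-free $f$, this discrepancy is harmless: both factorizations isolate the $\lambda$-dependence as $\lambda^{M}e^{-\lambda(e-s)}=\frac{M!}{(e-s)^{M}}\,p\bigl(M\mid\lambda(e-s)\bigr)$. Your route is cleaner and avoids the case split, at the cost of concentrating all of the probabilistic subtlety into the single claim that, conditional on $S=s$ and $\mathcal{H}$, the jumps on $(s,e]$ have the restarted Poisson sample-path density $\lambda^{M}\exp(-\lambda(e-s))$.

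That claim is the one place where you assert rather than prove. $S$ is not a stopping time of the process --- it is auxiliary randomness whose conditional law $g$ is tied to $T_1$ --- so ``the strong Markov property at $s$'' does not apply off the shelf. The paper devotes \Cref{lem:exponentialtime} to exactly this point: it integrates over the law of $S$, uses that $g$ depends on neither $\lambda$ nor the sample path after $t_1$, and invokes memorylessness to conclude that $T_{m+1}-S$ is Exponential$(\lambda)$. You correctly name the conditional-independence statements that must be checked before the $S=s$ conditioning can be dropped, but you should actually carry out that verification for the first inter-jump gap (the later gaps follow from ordinary memorylessness at genuine jump times). With that lemma supplied, your argument is complete, and your remark about the zero-inflated extension matches the paper's.
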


The proof of the theorem relies on the following lemma, which establishes the distribution of the (random) time between $S$ and the first jump time after $S$.

\begin{lemma}
\label{lem:exponentialtime}
If the distribution of $S$ satisfies \textbf{condition 1}, then the distribution of the time between any realization of $S$ and the next jump time $T_{X(S)+1}$ is an exponential random variable with rate $\lambda$.
\end{lemma}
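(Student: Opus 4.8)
The plan is to establish the claim through the memorylessness of the Poisson process, reducing the random start time $S$ to a deterministic one by conditioning on the first jump. The core idea is that once we condition on $T_1 = t_1$, Condition~1 guarantees that the selection of $S$ uses no information about the process after $t_1$; hence conditioning on $\{S = s\}$ leaves the post-$t_1$ evolution a fresh Poisson process with rate $\lambda$, and the forward recurrence time from a deterministic time is then exactly $\mathrm{Exp}(\lambda)$.

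First I would restart the process at the first jump. By the standard structure of Poisson processes, conditional on $T_1 = t_1$ the shifted process $N(u) := X(t_1 + u) - X(t_1)$ for $u \ge 0$ is again a Poisson process with rate $\lambda$ and is independent of $t_1$ (the successive interarrival times $T_2 - T_1, T_3 - T_2, \dots$ are i.i.d.\ $\mathrm{Exp}(\lambda)$). I would then invoke Condition~1: since $\bbP(S = t \mid T_1 = t_1) = g(t)$ with $g$ not depending on the sample path after $t_1$, I can write $S = t_1 + V$, where $V \ge 0$ has law determined by $g$ and is independent of $\{N(u): u \ge 0\}$. This is the step that actually uses the assumption and is the crux of the argument.

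Next I would condition on $V = v$, equivalently $S = s = t_1 + v$. Because $V$ is independent of $N$, this conditioning does not alter the law of $N$, which remains Poisson with rate $\lambda$. Noting that $S \ge T_1$ so $X(S) \ge 1$ and the ``next jump'' $T_{X(S)+1}$ is well defined, the quantity $T_{X(S)+1} - S$ equals the forward recurrence time of $N$ measured from the deterministic instant $v$. By the memoryless property, $\bbP(N \text{ has no jump in } (v, v+w]) = e^{-\lambda w}$ for every $w \ge 0$, so this forward recurrence time is $\mathrm{Exp}(\lambda)$, and crucially its law depends neither on $v$ nor on how many jumps of $N$ occurred in $(0, v]$.

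Finally, since the conditional law of $T_{X(S)+1} - S$ given $\{T_1 = t_1, V = v\}$ is $\mathrm{Exp}(\lambda)$ for every $t_1$ and $v$, integrating over the joint law of $(T_1, V)$ shows the unconditional law is $\mathrm{Exp}(\lambda)$ as well. I expect the main obstacle to be the second step: one must argue carefully that Condition~1's phrase ``not dependent on $\lambda$ or the sample path after $t_1$'' genuinely yields conditional independence of $S$ from the shifted process $N$, so that conditioning on $\{S = s\}$ preserves the Poisson law on $(t_1, \infty)$. Once that independence is secured, the memoryless property delivers the exponential conclusion routinely.
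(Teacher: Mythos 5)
Your proposal is correct and follows essentially the same route as the paper's proof: both use Condition~1 to decouple $S$ from the post-$t_1$ evolution of the process (the paper does this by writing $\bbP[X(s)=m\mid S=s,T_1=t_1]=\bbP[X(s-t_1)=m-1]$ inside an explicit sum over $m$ and integral over $s$, which is exactly your ``$V$ independent of the shifted process $N$'' step), and then both invoke memorylessness to get the $e^{-\lambda\epsilon}$ tail. The paper's version is just the fully expanded computation of the argument you sketch, with the sum over $m$ collapsing to $1$ and $\int g(s)\,ds=1$ doing the bookkeeping you defer to ``integrating over the joint law of $(T_1,V)$.''
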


\begin{proof}
    Following the definition of jump times, we have for any $\epsilon > 0$, 
\begin{align}
    &\bbP[T_{X(S) + 1} - S > \epsilon|T_1 = t_1] \\
    =& \sum_{m = 1}^{\infty}\bbP[X(S+\epsilon) = m, X(S) = m|T_1 = t_1]\\
    =& \sum_{m = 1}^\infty \int_{s \ge t_1} \bbP[X(s+ \epsilon) = m, X(s) = m|S = s, T_1 = t_1] \bbP[S = s|T_1 = t_1] ds\\
    = & \sum_{m = 1}^\infty \int_{s \ge t_1} \bbP[X(s + \epsilon) = m|X(s) = m, S = s, T_1 = t_1] \bbP[X(s) = m| S = s, T_1 = t_1] g(s) ds \label{eq:fullconditional}\\
    = & \sum_{m = 1}^\infty\int_{s \ge t_1} \bbP[X(\epsilon) = 0] \bbP[X(s-t_1) = m-1] g(s) ds \label{eq:memoryless}\\
    = & \bbP[X(\epsilon) = 0] \int_{s\ge t_1} g(s) \left[\sum_{m = 1}^{\infty}\bbP[X(s-t_1) = m-1]\right]ds \label{eq:exchange}\\
    = & \exp(-\lambda\epsilon) \int_{s\ge t_1} g(s) ds \label{eq:sumone}\\
    = & \exp(-\lambda \epsilon) \label{eq:intone}
\end{align}
where from \Cref{eq:fullconditional} to \Cref{eq:memoryless} we use the memoryless property of Poisson processes jump times, and denote $\{X(t), t\ge 0\}$ as a new Poisson process with rate $\lambda$, starting from $t = 0$. From \Cref{eq:memoryless} to \Cref{eq:exchange} we use $\bbP[X(\epsilon) = 0] = \exp(-\lambda \epsilon)$ is independent of $s$, $g(s)$ is independent of $m$ (as a result of it being independent of the sample path after $t_1$), and $\bbP[X(s-t_1) = m-1] \ge 0, \forall m\ge 1, s\ge t_1$, thus we can exchange the summation and integral, and put $g(s)$ outside of the summation. From \Cref{eq:exchange} to \Cref{eq:sumone} we use the fact that the summation is of the probability mass function of a Poisson random variable with rate $\lambda(s-t_1)$ over its range, thus evaluates to 1, and finally the last equation follows from the definition of $g(s)$ as a probability density function. We note that the final equation corresponds to the tail probability of an exponential random variable with rate parameter $\lambda$, and thus conclude our claim.
\end{proof}

\begin{proof}[Proof of Theorem 2]

We now prove \Cref{thm:decomp}. The main idea of the proof is to decompose the likelihood into full conditional probabilities, and then express each part using the conditions we laid out and \Cref{lem:exponentialtime}. The difficulty comes from separating terms involving $\lambda$ from terms independent of $\lambda$, which requires the two conditions, and arranging the likelihood function to exhibit the structure of a Poisson distribution likelihood, where we rely on \Cref{lem:exponentialtime} above.

We prove the result in three cases, based on how many reports are observed within $(s,e]$: 0, 1, and 2 or more, as these scenarios have varying level of complexity.

As reference, the likelihood of a Poisson distribution with $M\ge 0$ occurrences and rate $\lambda(e-s)>0$ is as follows:
\[
p(M|\lambda(e-s)) = \frac{\left[\lambda (e-s)\right]^M}{M!}\exp(-\lambda(e-s)).
\]

First, for $M = 0$, since we do not observe any jump times within $(s,e]$ in this case, $\mathcal{J}_0$ is an empty set and we omit it from the equation.
\begin{align}
    &\bbP[S = s, E = e, X(e) = X(s) = m |\mathcal{H}] \label{eq:m01}\\
    = &\bbP[S = s|\mathcal{H}]\bbP[E = e| S = s, \mathcal{H}] \bbP[T_{m+1} - s > e - s|E = e, S = s, \mathcal{H}]\label{eq:m02}\\
    = & g(s) h_m(e) \exp(-\lambda (e-s)) \label{eq:m03}\\
    = & f(e,s,\mathcal{H}, \mathcal{J}_{M}) p(0|\lambda(e-s)) \label{eq:m0res}
\end{align}

From \Cref{eq:m01} to \Cref{eq:m02} we decompose the likelihood by the full conditional probabilities. The likelihood of the data contains three parts: (i) conditional on the history of reports $\mathcal{H}$, the likelihood of observing a realization of $S=s$; (ii) conditional on the history of reports $\mathcal{H}$ and $S=s$, the likelihood of observing a realization of $E = e$; (iii) conditional on the history of reports $\mathcal{H}$, $S=s$ and $E=e$, the likelihood of \textit{not} observing the next jump time, that is $T_{m+1}> e$.

From \Cref{eq:m02} to \Cref{eq:m03} we use \textbf{condition 1} to express (i) as $g(s)$, \textbf{condition 2} to express (ii) as $h_m(e)$, and \Cref{lem:exponentialtime} to express (iii) as $\exp(-\lambda(e-s))$. The final equation follows by defining 
\[
f(e,s, \mathcal{H}, \mathcal{J}_{M}) = g(s)h_m(e),
\]
which is independent of $\lambda$.

For the case with $M = 1$, we cannot omit $J_1$, but similarly have:
\begin{align}
    &\bbP[S = s, E = e, X(e) = m + 1, X(s) = m, \mathcal{J}_1 |\mathcal{H}] \label{eq:m11}\\
     = & \bbP[S = s, E = e > t_{m+1}, T_{m+1} = t_{m+1}, T_{m+2}>e |\mathcal{H}] \label{eq:m1insert}\\
    = &\bbP[S = s|\mathcal{H}] \times \bbP[E > t_{m+1}| S = s, \mathcal{H}] \times \bbP[T_{m+1} - s = t_{m+1} - s|S = s, E > t_{m+1}, \mathcal{H}] \nonumber\\
    &\times \bbP[E = e| S = s, \mathcal{H}, \{T_{m+1} =t_{m+1}\}]\times \bbP[T_{m+2}> e| E = e, S = s, \mathcal{H}, \{T_{m+1} = t_{m+1}\}]\label{eq:m12}\\
    =& g(s)\times \left[\int_{t_{m+1}}^{T} h_{m}(t)dt\right]\times \lambda\exp(-\lambda (t_{m+1} - s)) \times h_{m+1}(e)\times \exp(-\lambda (e-t_{m+1}))\label{eq:m13}\\
    = &\lambda \exp(-\lambda(e-s)) g(s)h_{m+1}(e)\left[\int_{t_{m+1}}^T h_m(t)dt\right] \label{eq:m14}\\
    = & f(e,s, \mathcal{H}, \mathcal{J}_{M}) p(1|\lambda(e-s)) \label{eq:m1res}
\end{align}

From \Cref{eq:m11} to \Cref{eq:m1insert} we express the data in a different but equivalent manner. Note that the fact $\mathcal{J}_1$ contains the first jump time $t_{m+1}$. It (along with $X(e) = m + 1$ equivalently can be stated as two events: first, trivially, we know $T_{m+1} = t_{m+1}$; second, we also know that $T_{m+2}$ arrives later than $e$, which is why that jump is not observed. From \Cref{eq:m1insert} to \Cref{eq:m12} we similarly decompose the likelihood into full conditional probabilities. The likelihood now contains five parts: (i) conditional on the history of reports $\mathcal{H}$, the likelihood of observing $S=s$; (ii) conditional on the history of reports $\mathcal{H}$ and $S=s$, the likelihood of observing $E$ to be greater than $t_{m+1}$; (iii) conditional on the history of reports $\mathcal{H}$, $S=s$ and $E$ still not realized, the likelihood of observing $T_{m+1} = t_{m+1}$; (iv) conditional on the history of reports $\mathcal{H}$, $S=s$, $T_{m+1} = t_{m+1}$, the likelihood of observing a realization $E = e$; (v) conditional on the the history of reports $\mathcal{H}$, $S=s$, $E = e$, and $T_{m+1} = t_{m+1}$, the likelihood of \textit{not} observing the next jump time, that is $T_{m+2}>e$. 

From \Cref{eq:m12} to \Cref{eq:m13} we use \textbf{condition 1} to express (i), \textbf{condition 2} to express (ii), \Cref{lem:exponentialtime} to express (iii), \textbf{condition 2} again to express (iv), and then the memoryless property of Poisson process jump times to express (v). \Cref{eq:m14} follows by collecting terms. The final equation follows by defining
\[
f(e,s, \mathcal{H}, \mathcal{J}_{M}) = \frac{1}{(e-s)}g(s)h_{m+1}(e)\int_{t_{m+1}}^{T}h_m(t)dt,
\]
which is independent of $\lambda$.

Next, for any $M> 1$, we similarly have:
\begin{align}
    &\bbP[S = s, E = e, X(e) = m + M, X(s) = m, \mathcal{J}_M |\mathcal{H}]\label{eq:m21}\\
    = & \bbP[S = s, E = e > t_{m+M} > \dots > t_{m+1}, T_{m+1} = t_{m+1}, \dots, T_{m+M} = t_{m+M}, T_{m+M+1} > e |\mathcal{H}] \label{eq:m2insert}\\
    = &\bbP[S = s|\mathcal{H}]
    \nonumber\\
    &\times \prod_{i = 1}^{M} \bbP[E > t_{m+i}|S = s, \mathcal{H}, \{T_{m+1} = t_{m+1}, \dots, T_{m+i-1} = t_{m+i-1}\}] \nonumber\\
    &\times \prod_{i = 1}^{M} \bbP[T_{m+i} = t_{m+i}|S = s, \mathcal{H}, \{T_{m+1} = t_{m+1}, \dots, T_{m+i-1} = t_{m+i-1}\}] \nonumber\\
    &\times \bbP[E = e| S = s, \mathcal{H}, \{T_{m+1} =t_{m+1}, \dots, T_{m+M} = t_{m+M}\}] \nonumber\\
    &\times \bbP[T_{m+M+1}> e| E = e, S = s, \mathcal{H}, \{T_{m+1} =t_{m+1}, \dots, T_{m+M} = t_{m+M}\}] \label{eq:m22}\\
    =&g(s)
    \nonumber\\
    &\times \prod_{i=1}^{M} \lambda \exp(-\lambda (t_{m+i}- t_{m+i-1}))\nonumber\\
    &\times \prod_{i=1}^{M} \int_{t_{m+i}}^T h_{m+i-1}(t)dt \nonumber\\
    &\times h_{m+M}(e) \times \exp(-\lambda (e-t_{m+M})) \label{eq:m23}\\
    = & \lambda^{M} \exp(-\lambda (e-s))  \times g(s)h_{m+M}(e)\prod_{i = 1}^M \int_{t_{m+i}}^T h_{m+i-1}(t)dt \label{eq:m24}\\
    = & f(e,s, \mathcal{H}, \mathcal{J}_{M}) p(M|\lambda(e-s)). \label{eq:m2res}
\end{align}

From \Cref{eq:m21} to \Cref{eq:m2insert} we similarly express the data in an equivalent manner, also noting that the data provides information that $T_{m+M+1} >e$. From \Cref{eq:m2insert} to \Cref{eq:m22} we decompose the likelihood by the full conditional probabilities, which similar to the previous case contains five parts: (i) conditional on the history of reports $\mathcal{H}$, the likelihood of observing $S=s$; (ii) for $i = 1,\dots, M$, conditional on the history of reports $\mathcal{H}$, $S=s$ and all jump times $T_{m+1}$ to $T_{m+i-1}$,\footnote{Where $i=1$, this degenerates to be an empty set, the same for part (iii) below.} the likelihood of observing $E$ to be greater than $t_{m+i}$; (iii) for $i = 1,\dots, M$, conditional on the history of reports $\mathcal{H}$, $S=s$, all jump times $T_{m+1}$ to $T_{m+i-1}$ and $E$ still not realized, the likelihood of observing $T_{m+i} = t_{m+i}$; (iv) conditional on the history of reports $\mathcal{H}$, $S=s$, all jump times $T_{m+1},\dots, T_{m+M}$, the likelihood of observing a realization $E = e$; (v) conditional on the the history of reports $\mathcal{H}$, $S=s$, $E = e$, and $T_{m+1}\dots T_{m+M}$, the likelihood of \textit{not} observing the next jump time, that is $T_{m+M+1}>e$.

From \Cref{eq:m22} to \Cref{eq:m23} we use \textbf{condition 1} to express (i), \textbf{condition 2} to express (ii), \Cref{lem:exponentialtime} to express (iii), \textbf{condition 2} again to express (iv), and then the memoryless property of Poisson process jump times to express (v). \Cref{eq:m24} follows by collecting terms, and the last step follows by defining
\[
f(e,s, \mathcal{H}, \mathcal{J}_{M}) = \frac{(e-s)^M}{M!}g(s)h_{m+M}(e)\prod_{i = 1}^M \int_{t_{m+i}}^T h_{m+i-1}(t)dt,
\]
which is independent of $\lambda$.

Combining Equations \ref{eq:m0res}, \ref{eq:m1res} and \ref{eq:m2res} concludes our claim.

\end{proof}

\FloatBarrier

\section{Comparing estimation methods via simulation}
\label{app:simulation}

\begin{table}[tb]
\centering
\caption{Simulation results comparing five estimates of reporting rate $\lambda_\theta$ under different incident rates $\Lambda_\theta$. The true reporting rates are all $\lambda_\theta = 2$. We find that with correctly specified stopping times, the Poisson regression estimators are more precise than the MLE, especially as the incident rate decreases. The naive estimator, or the incorrectly specified observation ending time both introduce bias into the estimation. Estimate standard deviations are in parentheses.}
\label{tab:sim}
\resizebox{\textwidth}{!}{%
\begin{tabular}{l|rrrrr}
                                     & \multicolumn{5}{c}{\textbf{Incident rate $\Lambda$}}                                                                                                                                   \\
\textbf{Estimates for reporting rate $\lambda$}                   & \multicolumn{1}{l}{1.0}          & \multicolumn{1}{l}{2.0}          & \multicolumn{1}{l}{3.0}          & \multicolumn{1}{l}{4.0}          & \multicolumn{1}{l}{5.0}          \\ \hline
Naive                          & 1.201(0.016)                     & 2.394(0.032)                     & 3.589(0.051)                     & 4.788(0.066)                     & 5.999(0.084)                     \\
MLE, correct                   & 2.041(0.188)                     & 2.018(0.091)                     & 2.005(0.057)                     & 2.005(0.045)                     & 2.013(0.035)                     \\
MLE, incorrect                 & 8.992(4.276)                     & 8.686(1.696)                     & 8.649(1.221)                     & 8.615(0.859)                     & 8.596(0.635)                     \\
Poisson regression, correct    & \multicolumn{1}{l}{2.033(0.118)} & \multicolumn{1}{l}{2.014(0.058)} & \multicolumn{1}{l}{2.003(0.036)} & \multicolumn{1}{l}{2.004(0.029)} & \multicolumn{1}{l}{2.011(0.022)} \\
Poisson regression, incorrect  & 8.826(2.356)                     & 8.613(1.008)                     & 8.599(0.755)                     & 8.581(0.539)                     & 8.572(0.402)
\end{tabular}%
}
\end{table}

Before applying our methods to real-world 311 data, we demonstrate the effectiveness of our methods via simulated data---in such simulations, we have full control over the data-generating process and thus can compare the estimation results with the true parameters. In particular, we use the simulator to illustrate (a) \Cref{prop:identi}, that attempting to recover reporting rates $\lambda_\theta$ from $N_\theta^\obs(T)$ is prone to bias; (b), that in \Cref{thm:stoppingtimes}, the conditions on $S$ and $E$ are essential, (c), that in a correctly-specified model setting, i.e., following the assumptions in \Cref{thm:stoppingtimes}, both the MLE in \Cref{eq:MLE} and the homogeneous Poisson regression recover the ground truth; and (d), that the regression approach is more data-efficient than the MLE, leading to tighter parameter estimates.



\paragraph{Simulator setup} We simulate a basic time-homogeneous system, as follows. We set the incident type to be a two-dimensional vector $\theta \in \bbR^2$. 
Our simulator needs parameters for three processes: the incident birth process governed by homogeneous Poisson rate $\Lambda_\theta$, the reporting process governed by homogeneous Poisson rate $\lambda_\theta$, and a lifetime $T_i$ of each incident, generated in the following way: after each report, we sample two competing exponential clocks, one representing incident death and another the next report; the incident death rate can depend on the number of reports so far, reflecting, for example, that the agency prioritizes inspections for incidents with more reports. If the report happens before death, we increment the number of reports for the incident and repeat; otherwise, the incident has ``died'' and no more reports are logged.

Formally, for each type of incident, we specify a parameter $\mu_\theta$ that is independent of $\lambda_\theta$ as the ``death rate'' of incidents; conditional on there has been $m$ reports of an incident of type $\theta$, we generate two exponentially distributed random variables, $d^m_\theta\sim \text{Exponential}(\mu_\theta\times (\gamma_\theta)^m)$, where $\gamma_\theta$ is a scaling parameter by our choice, and $r^m_\theta \sim \text{Exponential}(\lambda_\theta)$; if $d^m_\theta \le r^m_\theta$, we consider the incident dead, and let $T_i=\sum_{j=0}^{m-1}r^j_\theta + d^m_\theta$, otherwise increment $m$ and repeat the process.
%
Both the incident and reporting rates are set as a function of the type covariates, as in \Cref{eqn:lambdaregressionassumption}:
\begin{align*}
\Lambda_\theta = \exp\left(\alpha_\text{incident} + \beta_\text{incident}^T\theta\right) && \lambda_\theta = \exp\left(\alpha_\text{report} + \beta_\text{report}^T\theta\right)
\end{align*}
where $\alpha_{incident}\in \mathbb{R}$, $\beta_{incident}\in \mathbb{R}^2$, $\alpha_{report}\in \mathbb{R}$ and $\beta_{report}\in \mathbb{R}^2$ are varied across simulations. For simplicity in comparing the various methods, here we report simulation results for the case in which there are only 5 distinct types, and the reporting rate for each type is $2$, i.e., $\alpha_{report}=\log(2)$ and $\beta_{report} = \textbf{0}$, but where incident rates $\Lambda_\theta$ vary by type.

Next, we need to set $\mu_\theta$ and $\gamma_\theta$, which governs the duration that the incident is alive. Given the above parameters, we do so in a manner that matches the distribution of the number of reports received per incident to the real data studied in the next section:  that 18.7\% of the received reports in the reporting period are duplicates of incidents already reported. For simplicity, we set these uniformly across all types. This calibration results in $\mu_\theta =  0.065$ and $\gamma_\theta = 100$ for all $\theta$.



Finally, we fix the time span of our observation to be $T=300$ days, which by our parameter settings, is long enough for the system to converge to its long-run stationary distribution (in terms of the number of active incidents). Incidents and reports are generated according to the simulator setup. It is possible that some death times and consequently reports generated may be beyond the 300-day observation period. We discard any such reports.

The output of the simulator consists of all the incidents that occur and are reported at least once during this 300-day period. For all these incidents, the available data for our estimators is the time that each incident was reported the first time, the times and number of the subsequent reports, and the times of the incident death if they occur before the end of the time span. To specify the observation period of each incident, in the setting of \Cref{thm:stoppingtimes}, we let $S_i$ be the time of the first report of such incident, and $E_i$ be the incident death time, or the end of the time span if the death time is greater than it. This specification satisfies the assumption in \Cref{thm:stoppingtimes}. As a comparison, we add an incorrectly specified version of this observation period by letting $E_i$ be the time of the last report of this incident, in which case it no longer satisfies the conditions.

\paragraph{Simulator results} We compare five estimators: a ``naive'' estimator, that calculates $\frac{N_\theta^\obs(T)}{T}$, the ratio between the observed number of reports and the observation period; the MLE as derived in \Cref{eq:MLE} with correctly specified stopping times; the MLE with incorrectly specified observation ending times; a Poisson regression with correctly specified stopping times, and finally, a Poisson regression with incorrectly specified observation ending times. The regression methods were implemented using Scikit-learn \cite{scikit-learn}. We run each of the methods on the same simulated datasets and iterated 300 times. \Cref{tab:sim} summarizes our results: showing, for each distinct type with a differing incident rate, the estimates for $\lambda_\theta$ for each of the methods.

 These results indeed illustrate \Cref{prop:identi}, that attempting to recover reporting rates $\lambda_\theta$ from $N_\theta^\obs(T)$ is prone to bias: the naive method of counting the number of observed incidents conflates the incident rate with the reporting rate. Second, the stopping times assumption in \Cref{thm:stoppingtimes} is indeed crucial to a valid result, and in a correctly specified model setting, both the MLE in \Cref{eq:MLE} and the homogeneous Poisson regression recover the ground truth reporting rate $\lambda_\theta$, regardless of the incident rate. However, third, the regression approach is more data-efficient than the MLE, leading to tighter parameter estimates -- especially as the incident rate decreases, i.e., as the sample size in terms of the number of incidents decreases. This data efficiency is important in high dimensions, as in our real-world data application in the next section.

\section{Supplementary Empirical analysis}
\subsection{Two potential sources of bias}
\label{appendix:bias}

There are two potential sources of bias in the NYC data, due to ways in which the data differs from the model assumed in \Cref{sec:model}. Here, we evaluate these concerns and conclude that they likely do not substantially affect our measurements.

\paragraph{Repeat callers about the same incident.} One potential worry is that duplicate reports are a mirage: they are primarily generated by the same resident repeatedly calling about an incident until it is addressed. If that is the case, our method does not work: we rely on a Poisson rate assumption for the reporting behavior (that reporting behavior is memory-less and so that one report does not affect the likelihood of another for that incident), which is likely violated if the same person makes multiple reports about the same incident. In theory, such repeats should be minimal: NYC makes available a portal to check the status of past reports, so a reporter does not need to call again to remain up-to-date. In practice, however, our contacts at NYC DPR indicated that repeat calls occur.

To mitigate the effect of such repeat callers on our analysis, we obtain anonymized (hashed) caller information from NYC DPR for each report: if the caller provided it, their name, phone number, and/or email address. We then filter out the duplicate reports for each incident where either the phone number or emails match, or both the first and last names match. Our analyses are run on the resulting filtered dataset.
The above approach may not filter out all repeat callers: if callers choose not to leave their information, but call multiple times. Thus, we also run our main analyses on a filtered dataset where we additionally assume that any caller who did not leave their information is a repeat of a previous caller with no information. Our estimates are largely the same on this more conservative dataset, suggesting that repeat callers do not substantially bias our estimates. Results are given in \cref{sec:robustnesstables}.

\paragraph{Censored data: incidents that were not inspected}
The NYC DPR only marks duplicate reports corresponding to incidents that were inspected: for the service requests not connected to an inspected incident, we do not know which (if any) other reports also refer to the same incident. As our method relies on the rate of duplicate reports, we must discard service requests that were not inspected.

This censoring may limit the generalizability of our findings, from measuring the reporting rates of all incidents to measuring the reporting of incidents that tend to be inspected. This limitation to external validity may be acceptable: if the inspection decisions are correlated with incident importance (likely), then studying the heterogeneous reporting behavior for these incidents is a more important task than is studying that of minor incidents not deemed worth inspecting.

There is a second reason we believe that the censoring is relatively acceptable. In particular, we measure reporting rates as a function of incident type $\theta$, where the type includes characteristics such as report category and incident risk. If our models are correctly specified, and $\theta$ is rich enough to capture inspection decisions (there is no confounding), then this censoring does not affect our estimates.\footnote{We aim to model $\tilde{Pr}(Y | \theta)$, where $Y$ is reporting behavior. However, with just data on inspected incidents, we can only estimate $\tilde{Pr}(Y | \theta, \text{inspected})$. If $Y$ is independent of the inspection decision given $\theta$, then $\tilde{Pr}(Y | \theta, \text{inspected}) = \tilde{Pr}(Y | \theta)$. One potential source of bias is if, even conditional on $\theta$ (which includes the \textit{content} of the reports), NYC DPR is making decisions that strongly correlate with the \textit{number} of reports. Then, all of our rate estimates would be biased upwards, as we selectively observe data for incidents with many reports. However, as we primarily care about \textit{heterogeneous} reporting rates across types, such a bias matters to the extent that it heterogeneously affects different types of incidents or geographic locations. Furthermore, according to NYC DPR, the primary drivers of inspection decisions are the report characteristics, which are included in $\theta$. Nevertheless, an important direction for future work is directly addressing this censoring challenge.} While a seemingly strong assumption, we note that the $\theta$ we have available is the same data that the NYC DPR sees about a report through their portal when making an inspection decision; any confounding would have to come from another source.

Nevertheless, to the extent that the above (likely small) bias affects practice, it may be valuable for 311 systems to systematically tag duplicates for all reports and then apply our methods. (Relative to the missing data challenges that we centrally tackle in this work, i.e., incidents not reported and birth and death times, this duplicate censoring is cheaply addressable by city agencies). We note that the Chicago data \textit{does} mark duplicates even for open/uninspected incidents, and so this bias does not appear there.

\subsection{NYC data preprocessing}
\label{app:nycpreprocessing}

Before training models, we need to construct a dataset in which each row corresponds to an incident, and where we have the number of reports $\tilde M_i$ in an observation interval, the duration $\tilde T_i$ of that interval, and covariates $\theta$. We separated out an exploratory dataset of 8,000 unique incidents, on which we conducted covariate selection as detailed below.\footnote{While the exploratory data was used to filter variables for ultimate analysis, and to develop and fine-tune our models, we note that we did not hold out a separate test set at the outset of the project; it was not clear how to cluster assign reports to test and train before developing our empirical strategy, and there were initial (ultimately resolved) data errors on how reports were tagged to unique incidents. Thus, our overall approach was selected and developed using the NYC DPR data on which we ultimately report results, but not using the Chicago dataset. The exploratory dataset was ultimately composed of 4463 unique incidents, after filtering.}

We filter out the reports corresponding to the uninspected service requests (as we do not have duplicate information for these) and then use the provided incident label to group all service requests for the same incident. Then, we remove repeat caller reports, comparing each caller to previous (ordered by time) callers for the same incident. Next, we must construct a valid observation interval for each incident.

\paragraph{Constructing an observation interval $(S_i, E_i]$} As outlined in \Cref{thm:stoppingtimes}, we must be careful in how we choose an observation interval $(S_i, E_i]$ in which we count reports -- we need that the interval is inside the incident lifetime, i.e., we must end the interval before the incident is addressed, $E_i \leq t_i + T_i$. Both endpoints of the interval must also satisfy the conditions outlined in \Cref{thm:stoppingtimes}. As discussed above, the best choice to start the observation period $S_i$ is the time of the first report, but choosing the observation end is a design choice.

We make the following choice. Let $t_i^{\text{INSP}}$ be the inspection time of incident $i$, and, $t_i^{\text{WO}}$ be the time that a work order is placed for incident $i$, if applicable. Then, $E_i$ of each incident $i$ is:
\begin{equation}
	E_i = \min \left\{100\text{ days} + S_i,\  t_i^{\text{INSP}},\  t_i^{\text{WO}}\right\}. 	\label{eq:duration}
\end{equation}
The maximum duration of 100 is a design choice, for which we perform robustness checks (with 30 and 200 days); a maximum mitigates -- for incidents not inspected for a long time -- the inclusion of a time period in which an issue might have been resolved before an inspection, which would bias our estimates downward. \Cref{eq:duration} requires that inspection and work order times are stopping times; that they do not depend on the future trivially holds, and it is likely that they do not depend on the reporting {rate} except through the type $\theta$ and the sample path number of reports received up to that time.\footnote{We observe largely the same dashboard data that the inspector does when making inspection decisions, and so there is minimal unobserved confounding; the exception is that we do not process report free-form text, though in our conversations with NYC DPR these hold secondary importance after the structured fields.} (It is not a problem that incidents with more reports are inspected sooner).

\vspace{1em}

\noindent Appendix \Cref{fig:numreportshist} shows the histogram of the number of reports per incident during the observation interval; Appendix \Cref{fig:durationhist} shows the distribution of durations; and \Cref{tab:Summary} shows how the average duration differs by Borough and report category. The heterogeneity in duration length (due to the speed of being inspected or worked on) demonstrates the value of \Cref{thm:stoppingtimes}, which allows us to maximally utilize the data without introducing bias.  For example, compared to `Prune' incidents, `Hazard' incidents tend to have more reports on average and shorter duration: residents have a higher reporting rate for hazardous incidents, and these incidents tend to be addressed more quickly. Suppose we had to use a fixed duration $D$, instead of an inspection/work order dependent time. If $D$ is large (e.g., $D \approx 15$ days), then we bias our estimates downward, as we're including time periods after an incident has already been addressed -- and the bias heterogeneously affects incident types, since some incident types are typically addressed more quickly than are other types. On the other hand, a much shorter duration would substantially limit the data. Finally, Appendix \Cref{fig:daysfirstcomplaint} shows the average number of days after the first report that the $\ell$th duplicate report was submitted for an incident, for incidents with at least $k \geq \ell$ reports. The plots are largely linear (i.e., the average delay between the first and second report is the same as that between the third and fourth report), consistent with reporting rates being approximately homogeneous Poisson within the interval.

\paragraph{Covariate selection and processing} Next, we select the covariates that compose type $\theta$. The data given to us by NYC DPR includes a set of \textit{report} covariates (e.g., report \textit{Category}),\footnote{Occaisonally, different reports about the same incident disagree on the report covariates. We select the first report characteristics in those cases.} \textit{inspection} results (e.g., condition of the tree at inspection time), and \textit{tree} characteristics (e.g., the diameter of the tree at breast height, tree species). We augment this data with \textit{socioeconomic characteristics} as follows. Most of the reports in our data contain latitude-longitude coordinates for each inspection (and thus incident), using which we identify which of the over 2000 census tracts in New York City the incident is in, through an FCC API.\footnote{FCC Area API, https://geo.fcc.gov/api/census/} We then join this information with 2020 Census data obtained from the IPUMS NHGIS \cite{manson2020ipums}, which include socioeconomic characteristics such as race/ethnicity, education, income, and population density for each census tract.

Next, we perform covariate selection using the exploratory dataset. We remove report and inspection variables that are highly collinear, have low variance, or with a high number of missing values. Conversations with NYC DPR also played a role in the selection. Finally, we log transform several variables and standardize all data. Appendix \Cref{tab:desc} contains the covariates we use.

\smallskip

Starting with the dataset discussed above, we filter out the incidents for which any of the covariates are missing and those with short-logged reporting periods (Duration less than 0.1 days). We are left with a dataset of \num{81638} incidents on which we conduct our main analyses; for each incident $i$, we have the duration of observance $e_i -s_i$, the number of total reports $\tilde M_i$, and all the various geographic and demographic covariates associated with it.
\subsection{Additional information}


In this section, we provide some additional information about our dataset and results. \Cref{fig:numreportshist} shows the histogram of the number of reports per incident during the observation interval; \Cref{fig:durationhist} shows the distribution of durations. \Cref{tab:desc} provides a description of the covariates selected; \Cref{fig:reginccoef} shows the relationship between the number of unique incidents observed versus the census tract fixed effect; \Cref{fig:posteriors} shows the posterior distribution of the number of reports as estimated by the basic Poisson regression model, and the zero-inflated Poisson regression model, with reference to the observed distribution; \Cref{tab:coeffull} and \Cref{tab:coeffullborough} lists the full information of the coefficients for census tract socioeconomic covariates as estimated alone in a regression alongside the incident-specific covariates and the borough fixed effects. \Cref{tab:nycmultidemo} lists the information of the coefficients for a subset of census tract socioeconomic covariates as estimated together in a regression alongside the incident-specific covariates.

\begin{figure}[tb]
	\centering
	\subfloat[][Number of reports per incident.]{
		\includegraphics[width=.45\textwidth]{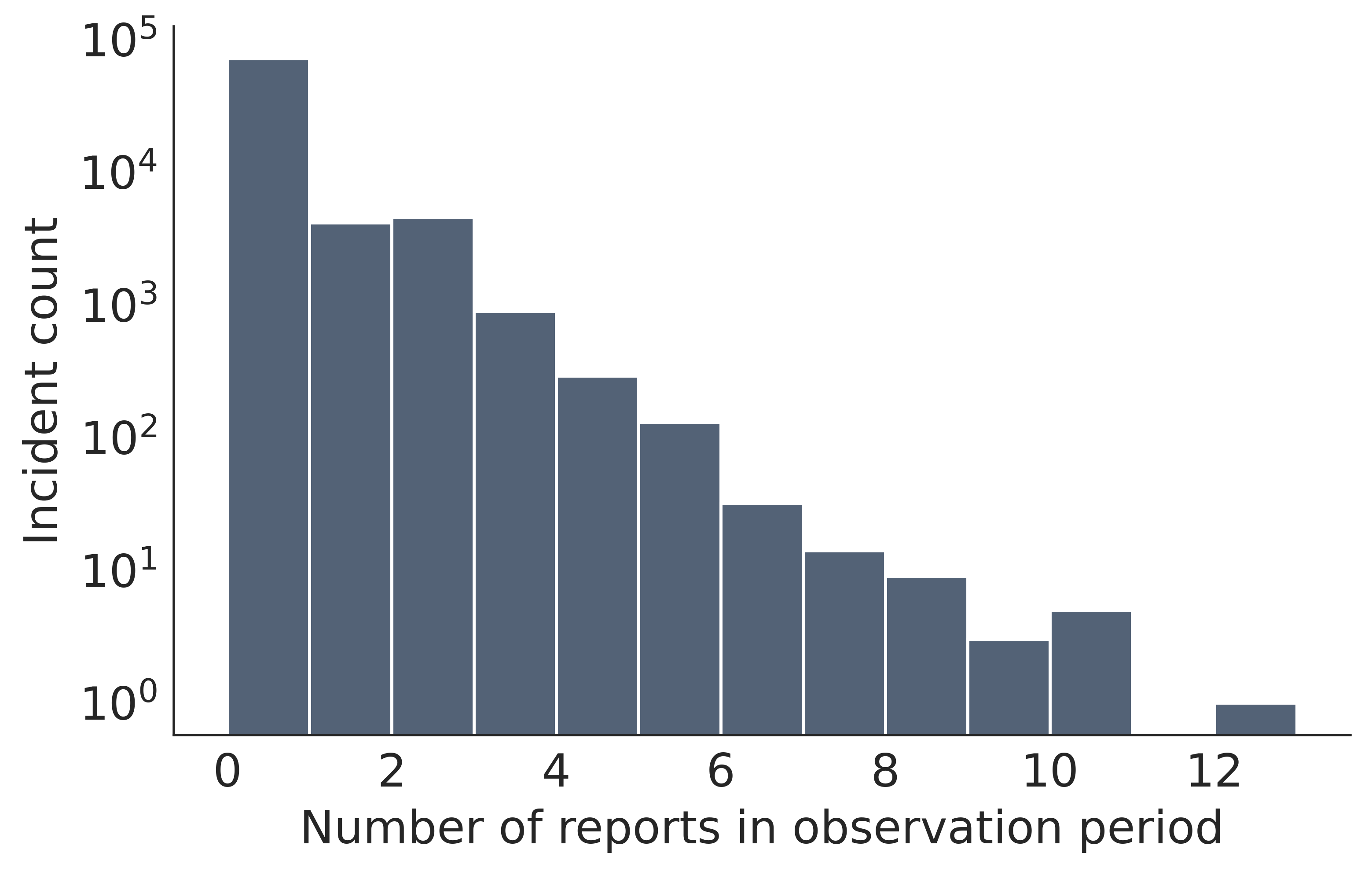}
		\label{fig:numreportshist}
	}
	\hfill
	\subfloat[][Length of observation period.]{
		\includegraphics[width=.45\textwidth]{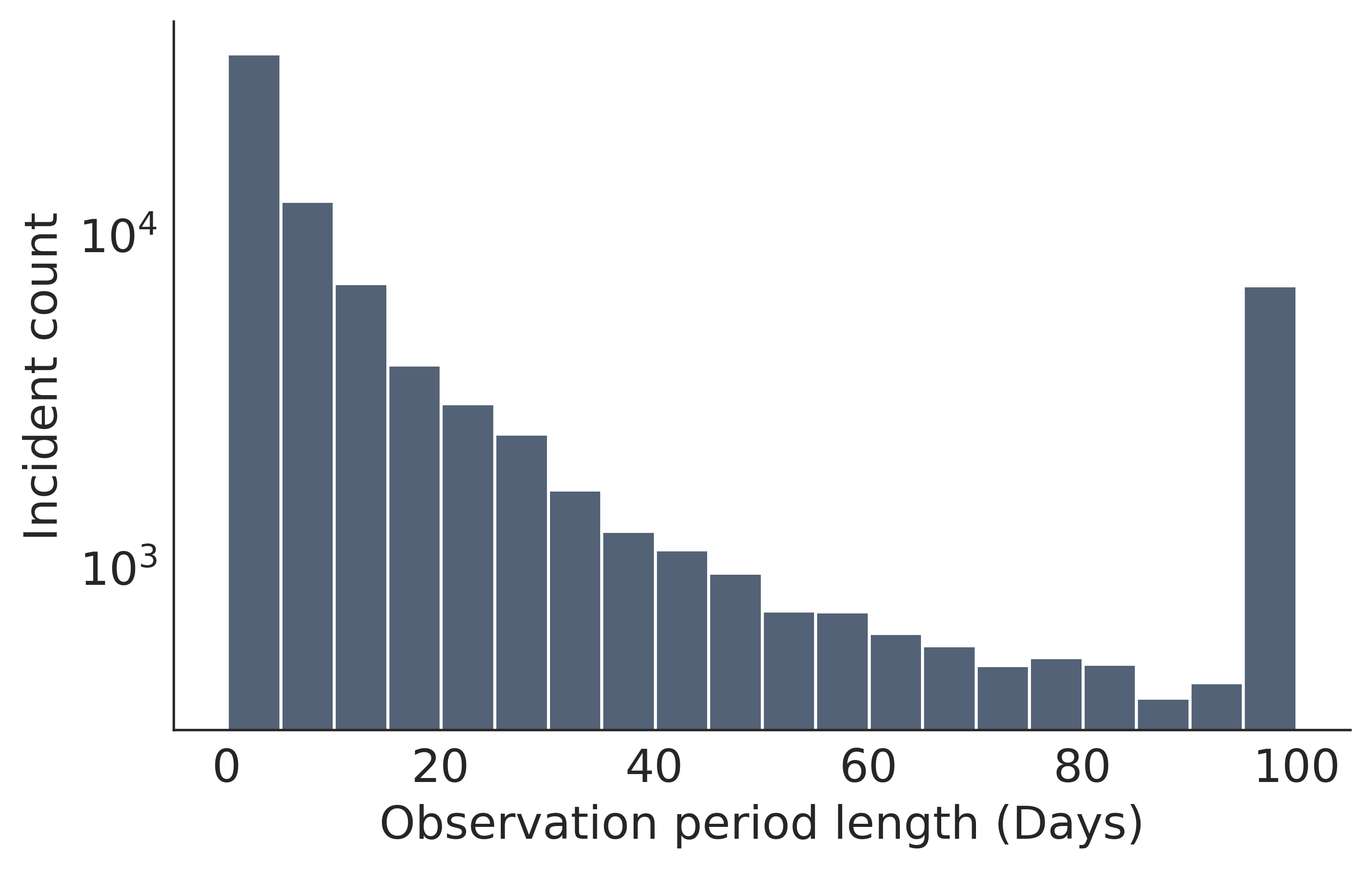}
		\label{fig:durationhist}
	}
	\caption{Distribution of number of reports and length of observation for each unique incident in the aggregated dataset. For most incidents, there are no reports after the first report (at least not in the observation period). There is a peak at 100 days for the observation period, due to our configuration in \Cref{eq:duration}, where we truncate longer periods to 100 days.}
\end{figure}

\begin{table}[tbh]
\caption{Description of covariates in the aggregated dataset. }
\label{tab:desc}
\centering
\resizebox{\textwidth}{!}{%
\begin{tabular}{l|l}
\multicolumn{1}{c|}{\textbf{Covariate}} & \multicolumn{1}{c}{\textbf{Description}}                                            \\ \hline
\textbf{Incident Global ID}             & An identifier unique to each incident.                                              \\ \hline
\textbf{Duration}                       & The observation duration as defined in \Cref{eq:duration}                                        \\ \hline
\textbf{Number Reports} &
  \begin{tabular}[c]{@{}l@{}}Number of reports on this incident in the observation duration, after filtering \\ out repeat callers\end{tabular} \\ \hline
\textbf{INSPCondition} &
  \begin{tabular}[c]{@{}l@{}l}The inspection outcome regarding the condition of the tree, indicates \\ whether the tree is dead, in good to excellent conditions, or in fair conditions.\end{tabular} \\ \hline
\textbf{INSP\_RiskAssessment}           & 
\begin{tabular}[c]{@{}l@{}l@{}}The inspection outcome regarding how dangerous the reported incident is, \\as determined by inspector. Ranges from 0 to 12, with 11 and 12 \\being category ``A" incident and prioritized for work orders.\end{tabular} \\ \hline
\textbf{\begin{tabular}[c]{@{}l@{}}Tree Diameter at \\ Breast Height (TDBH)\end{tabular}} &
  Main characteristic of the tree describing how large the tree trunk is. Measured in inches. \\ \hline
\textbf{Borough}                        & Indicating which borough in NYC this incident is located.                           \\ \hline
\textbf{Category}                       & The incident category as reported.                                                  \\ \hline
\textbf{Median Age}                     & Median age in the census tract.                                                     \\ \hline
\textbf{Fraction Hispanic}              & Fraction of residents that identify as Hispanic in the census tract.                \\ \hline
\textbf{Fraction white}                 & Fraction of residents that identify as white in the census tract.                   \\ \hline
\textbf{Fraction Black}                 & Fraction of residents that identify as Black in the census tract.                   \\ \hline
\textbf{Fraction no high school degree}              & Fraction of residents that have not graduated from high school in the census tract. \\ \hline
\textbf{Fraction college degree}          & Fraction of residents that have graduated from college in the census tract.         \\ \hline
\textbf{Fraction poverty}            & Fraction of residents that are identified to be in poverty in the census tract.     \\ \hline
\textbf{Fraction renter}                & Fraction of residents that rent their current residence in the census tract.        \\ \hline
\textbf{Fraction family}           & Fraction of family household in the census tract.         \\ \hline
\textbf{Median household value}     & Median value of household in the census tract. \\ \hline
\textbf{Income per capita}                     & Income per capita of residents in the census tract.                                    \\ \hline
\textbf{Density}                        & Population density in the census tract.
\end{tabular}%
}

\end{table}

 \begin{figure}[tb]
	\centering
	\subfloat[][Reported incidents vs \\number of trees]{
		\includegraphics[width=.4\textwidth]{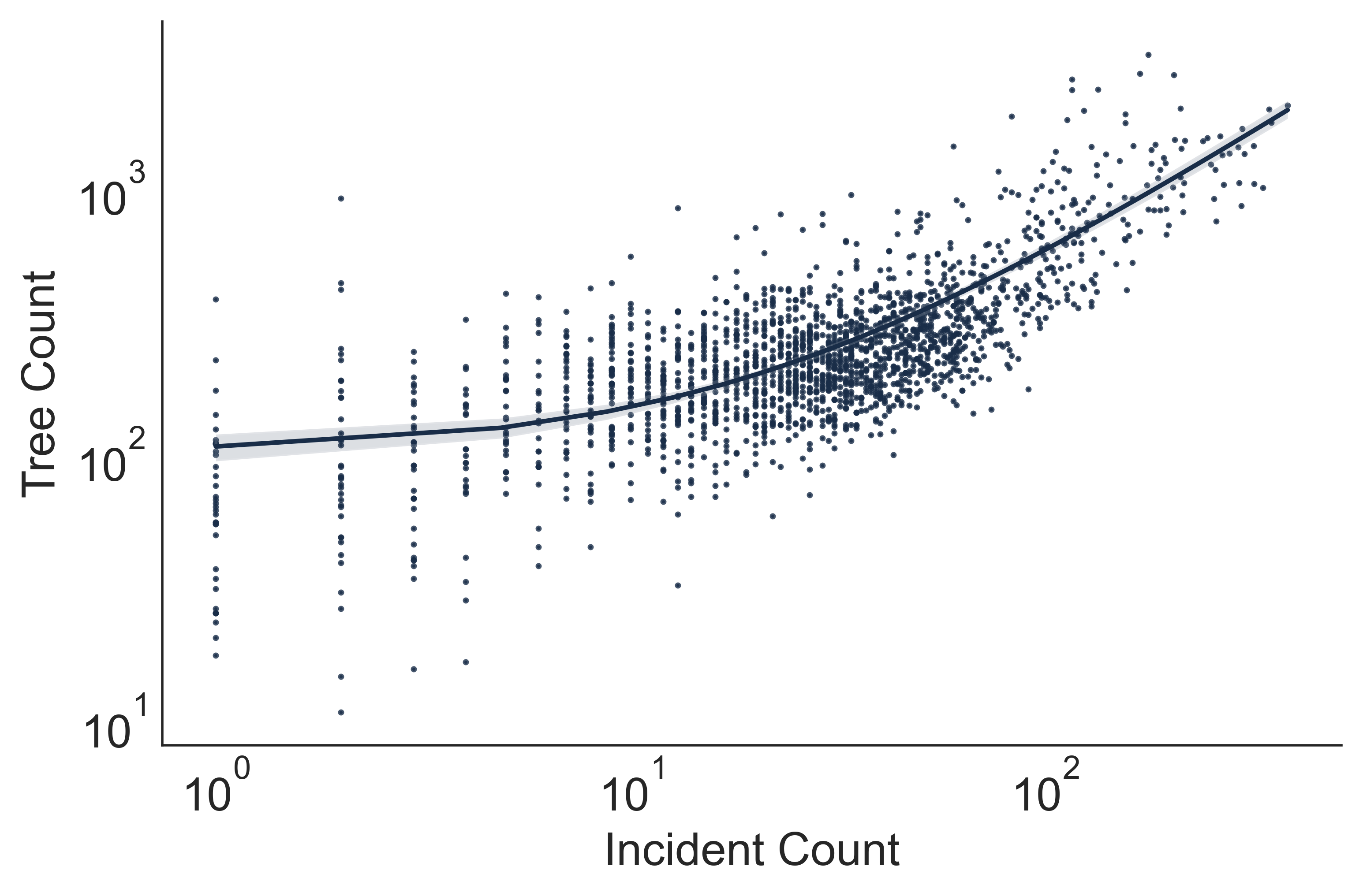}
		\label{fig:treesincidents}
	}
	\subfloat[][Incidents/Tree vs \\Census Tract Coefficient]{
	\includegraphics[width=.4\textwidth]{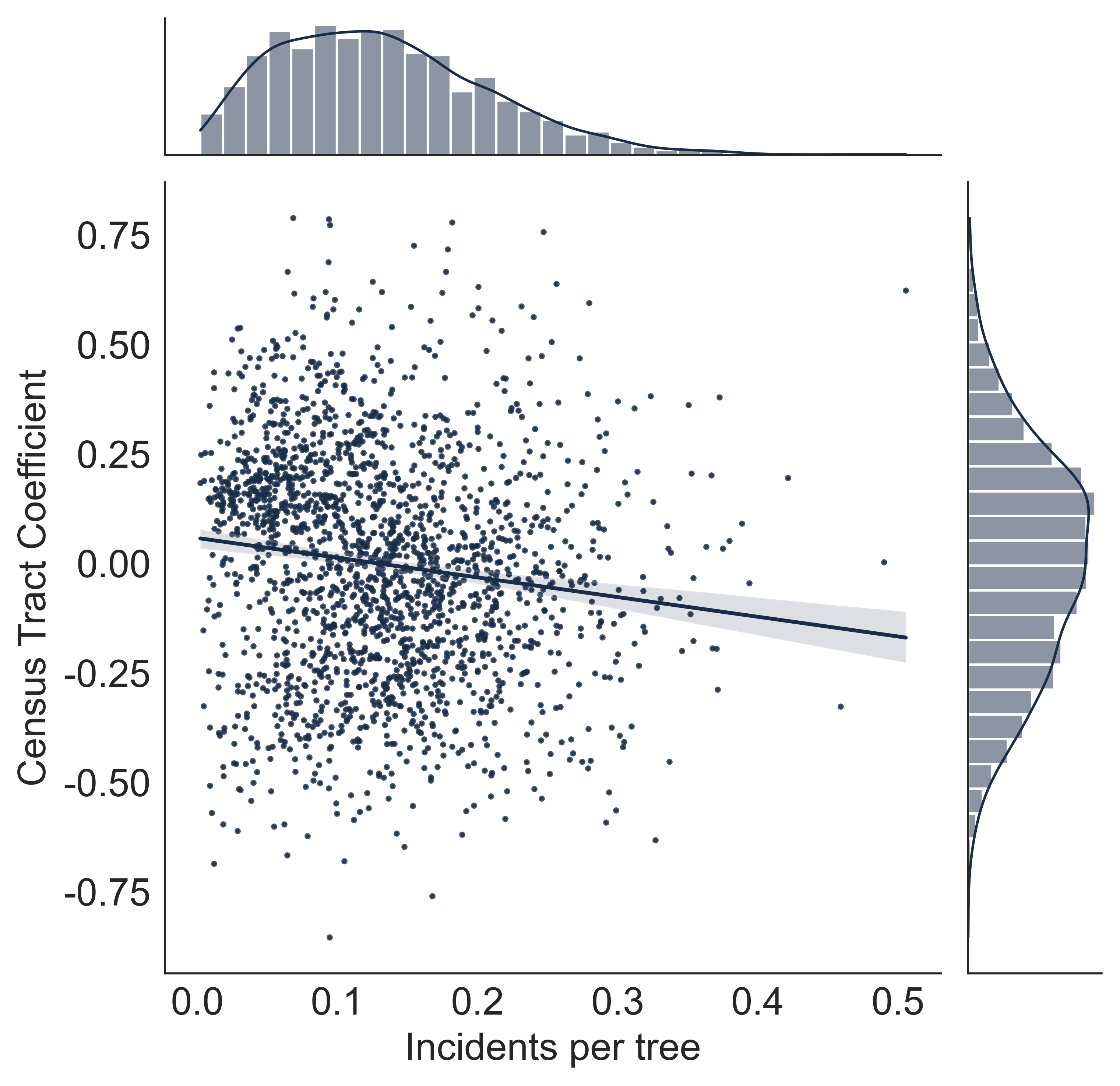}
	\label{fig:reginccoef}
}
\hfill
	\subfloat[][Reported \textbf{Hazard} incidents vs \\number of trees]{
		\includegraphics[width=.4\textwidth]{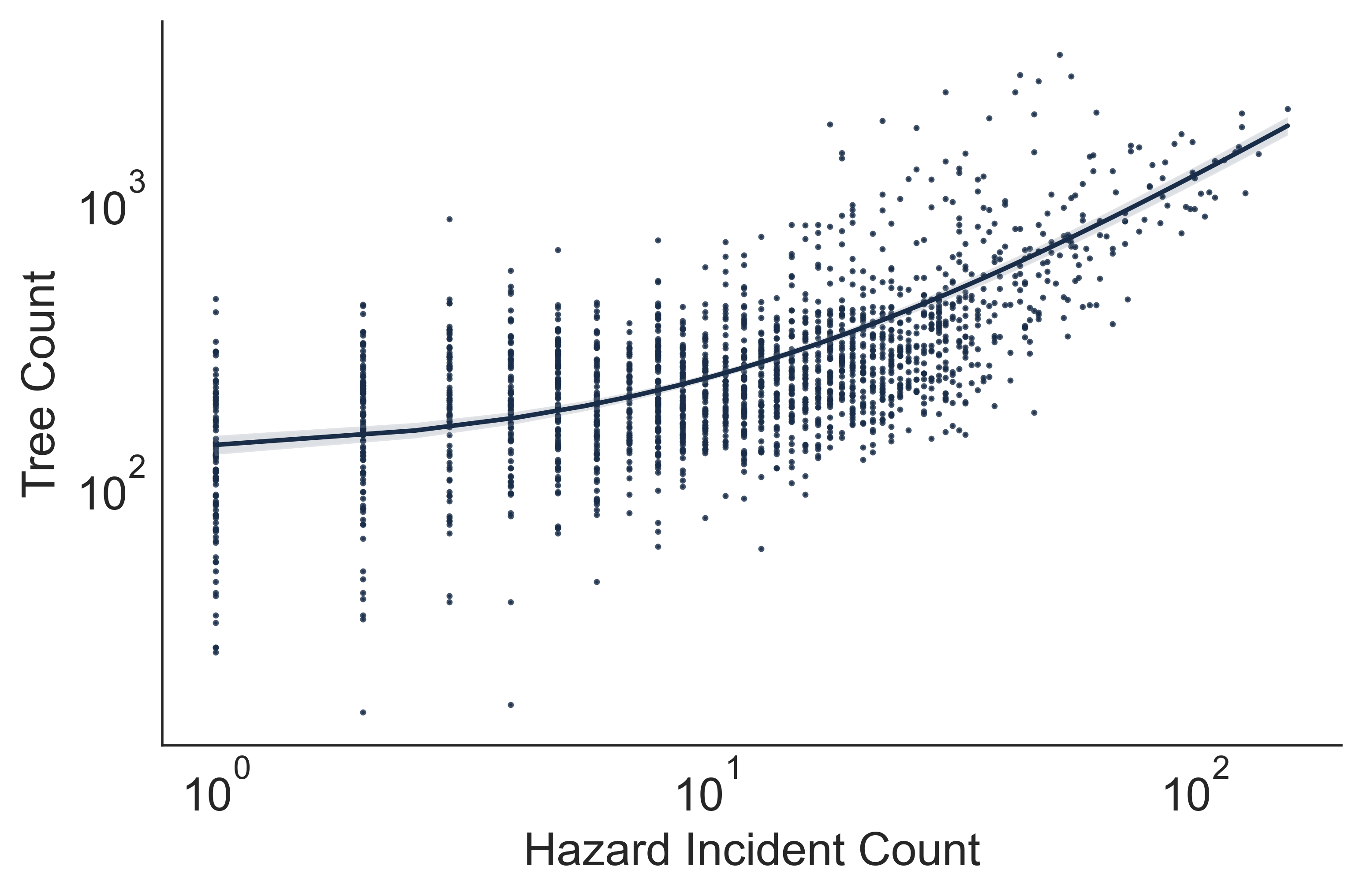}
		\label{fig:treesincidentshazard}
	}
	\subfloat[][\textbf{Hazard} Incidents/Tree vs \\Census Tract Coefficient]{
	\includegraphics[width=.4\textwidth]{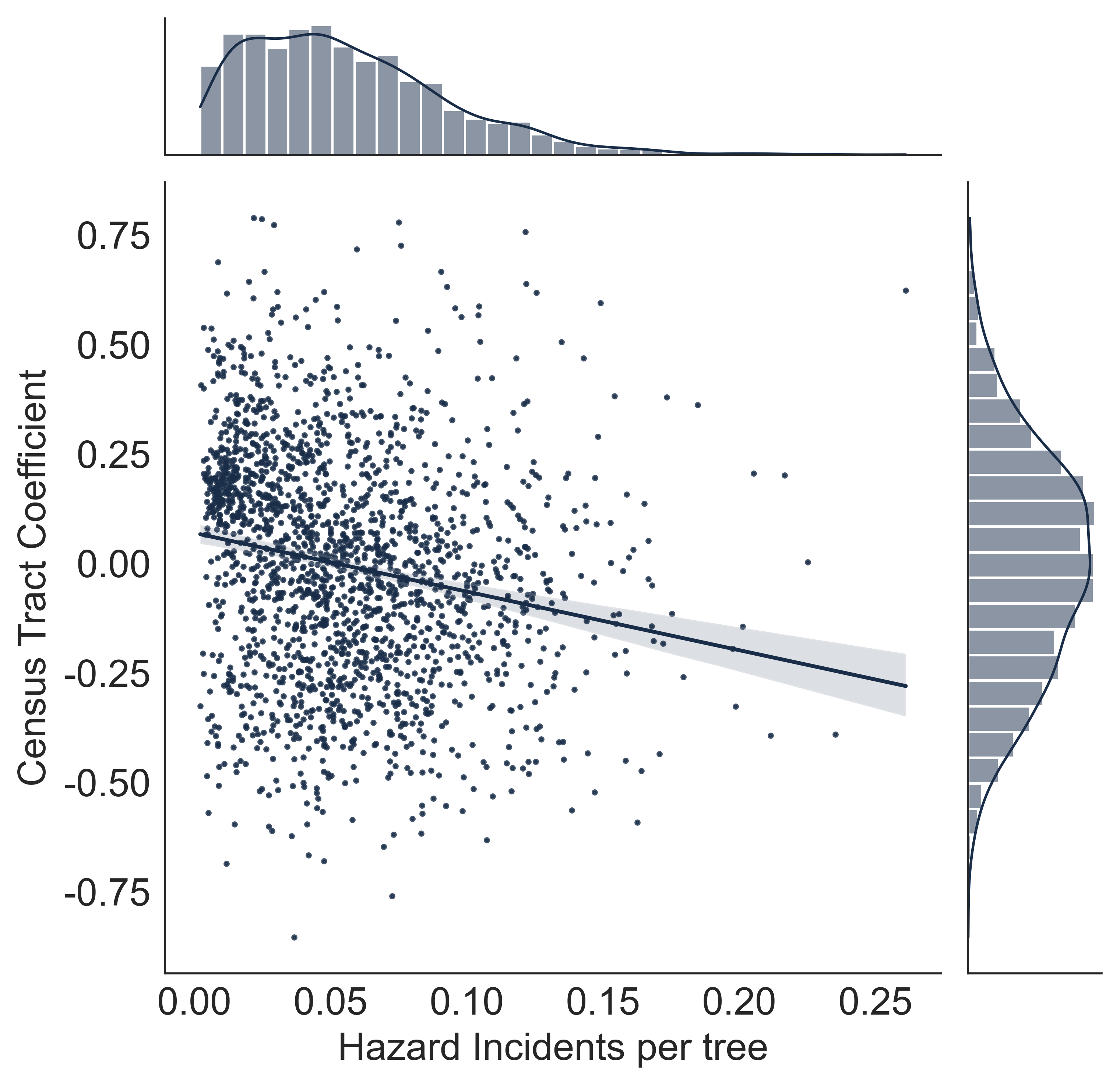}
	\label{fig:reginccoefhazard}
}
\hfill

	\caption{(a) For each census tract, the number of trees according to the 2015 New York Street Tree Census versus the number of reported incidents in that census tract. As expected, more trees result in more reported incidents. (b) The relationship between reported incidents per tree and the census tract fixed effect. (c) and (d) reproduce these plots while restricting to only \textbf{Hazard} incidents. The value, ``incidents per tree'' is an attempt to normalize the number of incidents we observe with the number we expect to observe -- and thus the ratio is a measure of \textit{reporting} rates as is done in prior work. We observe a slightly negative relationship between this measure and the one we develop, the census tract coefficient in the Poisson regression. While it is possible that one can construct better proxies for how many incidents we expect to observe than the raw counts of trees, the relationship suggests that our method's results can differ substantially from those of prior work. We prefer our measure, as it automatically controls for `legitimate' incident-level characteristics (such as risk) that may correlate with geography but are not captured with the number of trees---without needing to construct an estimate for the number of expected incidents for each such type. In practice, there can be many types of incidents; see, e.g., \Cref{tab:chicagobasic}. The Street Tree data is available here: \url{https://data.cityofnewyork.us/Environment/2015-Street-Tree-Census-Tree-Data/pi5s-9p35}.}

\end{figure}

 \begin{figure}[tb]
	\centering
		\includegraphics[width=.45\textwidth]{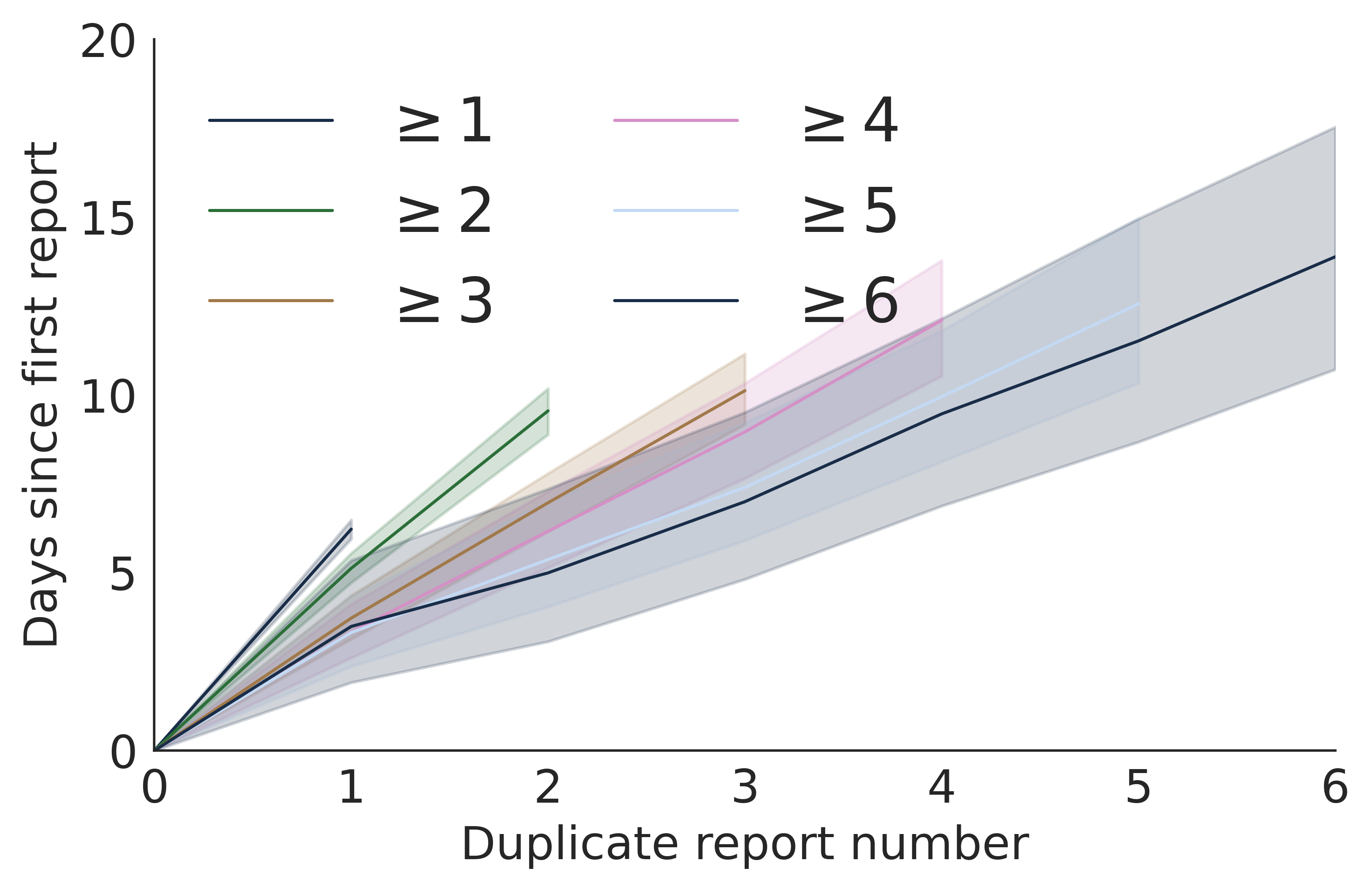}
	\caption{Conditional on having at least $k$ duplicate reports, on average how many days after the $0th$ report was the $\ell$th report, where $\ell \leq k$? The linear nature of each plot is consistent with a homogeneous Poisson process within the given period. That incidents with more reports also receive reports \textit{faster} is consistent with those incidents being more severe in nature and having a higher reporting rate $\lambda_\theta$. }

			\label{fig:daysfirstcomplaint}
\end{figure}

 \begin{figure}[tb]
	\centering
 \begin{subfigure}[b]{\textwidth}
    \centering
    \includegraphics[width=.4\textwidth]{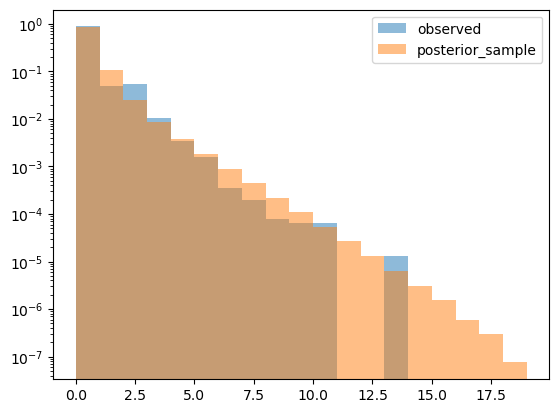}
    \includegraphics[width=.4\textwidth]{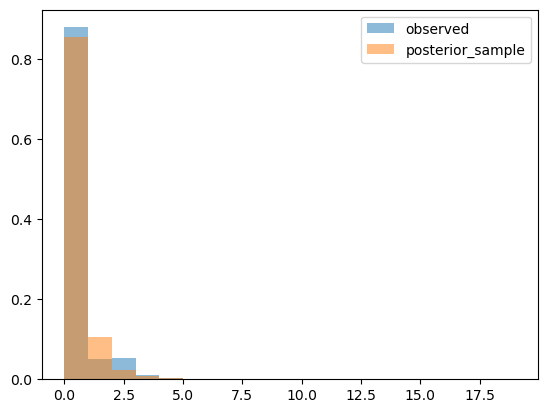}
    \hfill
    \caption{Standard Poisson regression}
    \label{fig:basicposterior}
 \end{subfigure}
  \begin{subfigure}[b]{\textwidth}
    \centering
    \includegraphics[width=.4\textwidth]{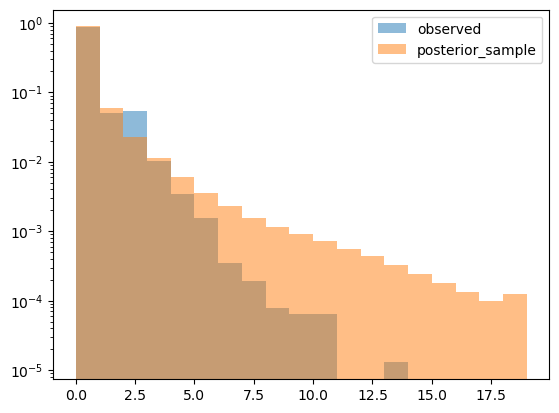}
    \includegraphics[width=.4\textwidth]{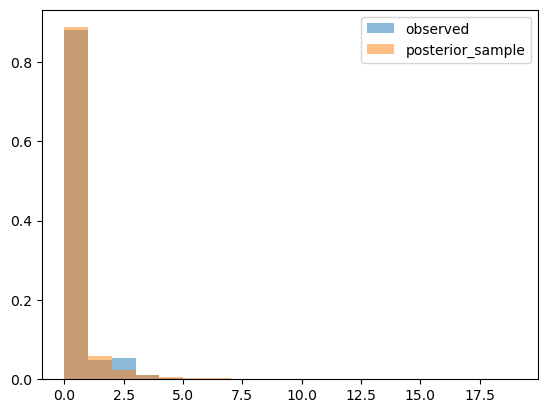}
    \hfill
    \caption{Zero inflated Poisson regression}
    \label{fig:zifposterior}
 \end{subfigure}
  \begin{subfigure}[b]{\textwidth}
    \centering
    \includegraphics[width=.4\textwidth]{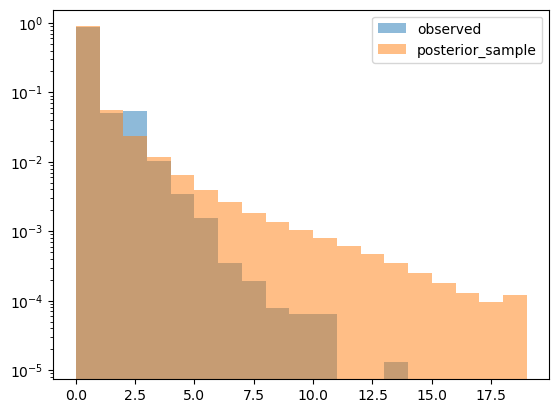}
    \includegraphics[width=.4\textwidth]{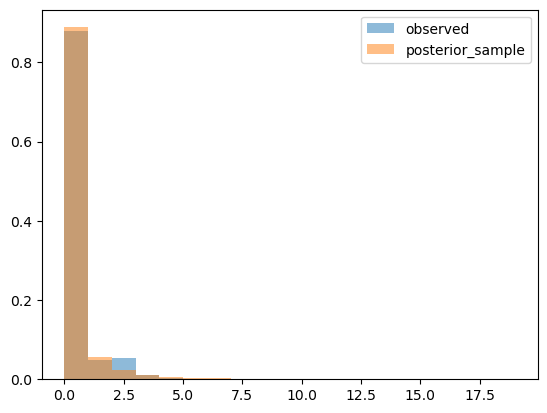}
    \hfill
    \caption{Zero inflated Poisson regression, with Category-dependent zero inflation coefficient}
    \label{fig:zifposterior_bycat}
 \end{subfigure}
	\caption{Comparison among posterior distributions sampled from the Standard Poisson regression model and two Zero-inflated variants. Left-hand plots are in log space for the $y$ axis, and right-hand plots are in regular probability space, to highlight the different parts of the distribution. Note that the Zero-inflated model has a better fit to the observed distribution since most of the density of this distribution is on reports 1, 2, and 3. Using Category-dependent zero inflation coefficients further improves the fit, but the gain is not substantial.}
	\label{fig:posteriors}

\end{figure}

\begin{table}
\centering
\caption{Census Tract socioeconomic coefficients in NYC, estimated alone in a regression alongside the incident-specific covariates. Full table corresponding to \Cref{tab:censuscoefficients}. }\label{tab:coeffull}
\begin{tabular}{lrrrrr}
\toprule
                               &   Mean &  StdDev &   2.5\% &  97.5\% \\
\midrule
                    Median age & -0.033 &   0.008 & -0.051 & -0.018 \\
             Fraction Hispanic &  0.030 &   0.009 &  0.011 &  0.046 \\
                Fraction white &  0.057 &   0.008 &  0.040 &  0.072 \\
                Fraction Black & -0.039 &   0.009 & -0.059 & -0.024 \\
Fraction no high school degree & -0.031 &   0.008 & -0.049 & -0.016 \\
       Fraction college degree &  0.043 &   0.009 &  0.024 &  0.059 \\
              Fraction poverty & -0.010 &   0.009 & -0.027 &  0.007 \\
               Fraction renter &  0.054 &   0.009 &  0.035 &  0.070 \\
               Fraction family & -0.081 &   0.009 & -0.099 & -0.065 \\
       Log(Median house value) &  0.065 &   0.009 &  0.045 &  0.081 \\
        Log(Income per capita) &  0.048 &   0.009 &  0.031 &  0.068 \\
                  Log(Density) &  0.077 &   0.010 &  0.056 &  0.099 \\
\bottomrule
\end{tabular}
\end{table}

\begin{table}
\centering
\caption{Census Tract socioeconomic coefficients in NYC, estimated alone in a regression alongside the incident-specific covariates and the \textbf{borough fixed effects}.}
\label{tab:coeffullborough}
\begin{tabular}{lrrrrr}
\toprule
                              &   Mean &  StdDev &   2.5\% &  97.5\% \\
\midrule
                    Median age & -0.009 &   0.008 & -0.025 &  0.005 \\
             Fraction Hispanic &  0.053 &   0.009 &  0.035 &  0.069 \\
                Fraction white &  0.068 &   0.008 &  0.052 &  0.083 \\
                Fraction Black & -0.052 &   0.009 & -0.070 & -0.037 \\
Fraction no high school degree & -0.039 &   0.008 & -0.056 & -0.024 \\
       Fraction college degree &  0.035 &   0.008 &  0.019 &  0.049 \\
              Fraction poverty & -0.020 &   0.008 & -0.036 & -0.006 \\
               Fraction renter &  0.022 &   0.008 &  0.004 &  0.038 \\
               Fraction family & -0.075 &   0.009 & -0.092 & -0.060 \\
       Log(Median house value) &  0.023 &   0.008 &  0.006 &  0.038 \\
        Log(Income per capita) &  0.056 &   0.008 &  0.038 &  0.072 \\
                  Log(Density) &  0.032 &   0.009 &  0.015 &  0.048 \\
\bottomrule
\end{tabular}
\end{table}

\begin{table}
\centering
\caption{Census Tract socioeconomic coefficients in NYC, estimated together in a regression alongside the incident-specific covariates. Covariates with high levels of collinearity are dropped to maintain interpretability.}
\label{tab:nycmultidemonodensity}
\begin{tabular}{lrrrrr}
\toprule
                        &   Mean &  StdDev &   2.5\% &  97.5\% \\
\midrule
             Median age & -0.014 &   0.009 & -0.034 & -0.003 \\
         Fraction white &  0.058 &   0.009 &  0.037 &  0.076 \\
Fraction college degree & -0.047 &   0.013 & -0.073 &  -0.023 \\
        Fraction renter &  0.042 &   0.010 &  0.021 &  0.060 \\
 Log(Income per capita) &  0.086 &   0.014 &  0.057 &  0.110 \\
\bottomrule
\end{tabular}
\end{table}

\begin{table}
\centering
\caption{Census Tract socioeconomic coefficients in NYC, estimated together in a regression alongside the incident-specific covariates. Compared with \Cref{tab:nycmultidemonodensity}, we additionally include for (log) population density. With the exception of the fraction of residents with college degrees which becomes insignificantly associated with the reporting rate, most socioeconomic variables are still significantly associated with the reporting rate in the same direction.}
\label{tab:nycmultidemo}
\begin{tabular}{lrrrrr}
\toprule
                        &   Mean &  StdDev &   2.5\% &  97.5\% \\
\midrule
             Median age & -0.021 &   0.009 & -0.038 & -0.003 \\
         Fraction white &  0.048 &   0.010 &  0.028 &  0.066 \\
Fraction college degree & -0.011 &   0.013 & -0.037 &  0.013 \\
        Fraction renter &  0.038 &   0.012 &  0.015 &  0.061 \\
 Log(Income per capita) &  0.074 &   0.015 &  0.041 &  0.103 \\
           Log(Density) &  0.087 &   0.012 &  0.063 &  0.108 \\
\bottomrule
\end{tabular}
\end{table}

\FloatBarrier
\subsection{Robustness checks -- \texorpdfstring{\Cref{tab:basicinfrefcoef}}{Base coefficients} with other specifications}
\label{sec:robustnesstables}
In this section, we present evidence for the robustness of our results in \Cref{tab:basicinfrefcoef} by reproducing its results with different configurations in \Cref{tab:standardpoisson} through \Cref{tab:zifremoveallrepeat}. Additionally, we show results when the risk assessment scores are binned according to their priority level in \Cref{tab:binnedrisk}.
\begin{table}
\centering
\caption{Regression coefficients for Standard Poisson regression with incident-level covariates and Borough fixed effects for Max Duration 100 days, Default repeat caller removal.}
\label{tab:standardpoisson}
\begin{tabular}{lrrrrrr}
\toprule
{} &   Mean &  StdDev &   2.5\% &  97.5\% &  R\_hat \\
\midrule
Intercept                           & -4.577 &   0.023 & -4.629 & -4.539 &    1.0 \\
INSPCondition[T.Dead]                & -0.334 &   0.029 & -0.396 & -0.285 &    1.0 \\
INSPCondition[T.Excellent\_Good]      & -0.428 &   0.023 & -0.472 & -0.382 &    1.0 \\
INSPCondition[T.Fair]                & -0.296 &   0.022 & -0.344 & -0.254 &    1.0 \\
INSP\_RiskAssessment                 &  0.286 &   0.010 &  0.265 &  0.305 &    1.0 \\
Log(Tree Diameter at Breast Height) & -0.020 &   0.009 & -0.039 & -0.003 &    1.0 \\
Borough[Bronx]                      &  0.103 &   0.022 &  0.061 &  0.144 &    1.0 \\
Borough[Brooklyn]                   & -0.148 &   0.016 & -0.185 & -0.117 &    1.0 \\
Borough[Manhattan]                  & -0.081 &   0.038 & -0.163 & -0.016 &    1.0 \\
Borough[Queens]                     & -0.112 &   0.015 & -0.142 & -0.080 &    1.0 \\
Borough[Staten Island]              &  0.237 &   0.026 &  0.183 &  0.284 &    1.0 \\
Category[Hazard]                    &  1.448 &   0.015 &  1.416 &  1.473 &    1.0 \\
Category[Illegal Tree Damage]       &  0.009 &   0.028 & -0.045 &  0.059 &    1.0 \\
Category[Prune]                     & -0.083 &   0.024 & -0.136 & -0.043 &    1.0 \\
Category[Remove Tree]               &  0.001 &   0.021 & -0.042 &  0.043 &    1.0 \\
Category[Root/Sewer/Sidewalk]       & -1.375 &   0.029 & -1.432 & -1.325 &    1.0 \\
\bottomrule
\end{tabular}
\end{table}

\begin{table}
\centering
\caption{Regression coefficients for Zero-inflated Poisson regression with incident-level covariates and Borough fixed effects for Max Duration 30 days, Default repeat caller removal.}
\begin{tabular}{lrrrrrr}
\toprule
                                {} &   Mean &  StdDev &   2.5\% &  97.5\% &  R\_hat \\
\midrule
Intercept                           & -2.679 &   0.031 & -2.738 & -2.617 &    1.0 \\
Zero Inflation fraction                 &  0.728 &   0.003 &  0.722 &  0.735 &    1.0 \\
INSPCondition[T.Dead]                & -0.226 &   0.042 & -0.312 & -0.144 &    1.0 \\
INSPCondition[T.Excellent\_Good]      & -0.368 &   0.029 & -0.422 & -0.311 &    1.0 \\
INSPCondition[T.Fair]                & -0.178 &   0.027 & -0.230 & -0.125 &    1.0 \\
INSP\_RiskAssessment                 &  0.204 &   0.012 &  0.178 &  0.227 &    1.0 \\
Log(Tree Diameter at Breast Height) & -0.009 &   0.009 & -0.026 &  0.008 &    1.0 \\

Borough[Bronx]                      & -0.041 &   0.027 & -0.099 &  0.012 &    1.0 \\
Borough[Brooklyn]                   & -0.188 &   0.019 & -0.226 & -0.155 &    1.0 \\
Borough[Manhattan]                  &  0.297 &   0.051 &  0.191 &  0.388 &    1.0 \\
Borough[Queens]                     & -0.240 &   0.020 & -0.279 & -0.203 &    1.0 \\
Borough[Staten Island]              &  0.172 &   0.035 &  0.097 &  0.238 &    1.0 \\
Category[Hazard]                    &  1.336 &   0.019 &  1.295 &  1.371 &    1.0 \\
Category[Illegal Tree Damage]       &  0.257 &   0.039 &  0.173 &  0.323 &    1.0 \\
Category[Prune]                     & -0.113 &   0.034 & -0.186 & -0.050 &    1.0 \\
Category[Remove Tree]               & -0.030 &   0.025 & -0.084 &  0.016 &    1.0 \\
Category[Root/Sewer/Sidewalk]       & -1.449 &   0.040 & -1.532 & -1.378 &    1.0 \\
\bottomrule
\end{tabular}
\end{table}

\begin{table}
\centering
\caption{Regression coefficients for Zero-inflated Poisson regression with incident-level covariates and Borough fixed effects for Max Duration 200 days, Default repeat caller removal.}
\begin{tabular}{lrrrrrr}
\toprule
                                {} &   Mean &  StdDev &   2.5\% &  97.5\% &  R\_hat \\
\midrule
Intercept                           & -3.495 &   0.029 & -3.548 & -3.440 &    1.0 \\
Zero Inflation fraction                 &  0.626 &   0.004 &  0.619 &  0.634 &    1.0 \\
INSPCondition[T.Dead]                & -0.462 &   0.034 & -0.536 & -0.401 &    1.0 \\
INSPCondition[T.Excellent\_Good]      & -0.205 &   0.027 & -0.254 & -0.153 &    1.0 \\
INSPCondition[T.Fair]                & -0.123 &   0.026 & -0.179 & -0.077 &    1.0 \\
INSP\_RiskAssessment                 &  0.238 &   0.011 &  0.216 &  0.258 &    1.0 \\
Log(Tree Diameter at Breast Height) & -0.042 &   0.008 & -0.059 & -0.025 &    1.0 \\
Borough[Bronx]                      & -0.139 &   0.027 & -0.191 & -0.093 &    1.0 \\
Borough[Brooklyn]                   & -0.393 &   0.019 & -0.433 & -0.357 &    1.0 \\
Borough[Manhattan]                  &  0.512 &   0.050 &  0.405 &  0.596 &    1.0 \\
Borough[Queens]                     & -0.244 &   0.019 & -0.278 & -0.208 &    1.0 \\
Borough[Staten Island]              &  0.264 &   0.032 &  0.196 &  0.326 &    1.0 \\
Category[Hazard]                    &  1.482 &   0.018 &  1.445 &  1.515 &    1.0 \\
Category[Illegal Tree Damage]       &  0.178 &   0.034 &  0.105 &  0.243 &    1.0 \\
Category[Prune]                     & -0.087 &   0.027 & -0.141 & -0.034 &    1.0 \\
Category[Remove Tree]               &  0.086 &   0.021 &  0.040 &  0.123 &    1.0 \\
Category[Root/Sewer/Sidewalk]       & -1.659 &   0.035 & -1.730 & -1.598 &    1.0 \\
\bottomrule
\end{tabular}
\end{table}

\begin{table}
\centering
\caption{Regression coefficients for Zero-inflated Poisson regression with incident-level covariates and Borough fixed effects for Max Duration 100 days, Remove all repeat callers and missing caller information.}
\label{tab:zifremoveallrepeat}
\begin{tabular}{lrrrrrr}
\toprule
 {} &   Mean &  StdDev &   2.5\% &  97.5\% &  R\_hat \\
\midrule
Intercept                           & -3.267 &   0.029 & -3.326 & -3.214 &    1.0 \\
Zero Inflation fraction                 &  0.652 &   0.004 &  0.642 &  0.659 &    1.0 \\
INSPCondtion[T.Dead]                & -0.334 &   0.034 & -0.401 & -0.270 &    1.0 \\
INSPCondtion[T.Excellent\_Good]      & -0.308 &   0.027 & -0.359 & -0.256 &    1.0 \\
INSPCondtion[T.Fair]                & -0.186 &   0.027 & -0.241 & -0.133 &    1.0 \\
INSP\_RiskAssessment                 &  0.232 &   0.012 &  0.209 &  0.256 &    1.0 \\
Log(Tree Diameter at Breast Height) & -0.028 &   0.008 & -0.045 & -0.013 &    1.0 \\
Borough[Bronx]                      & -0.059 &   0.029 & -0.116 & -0.006 &    1.0 \\
Borough[Brooklyn]                   & -0.378 &   0.021 & -0.425 & -0.339 &    1.0 \\
Borough[Manhattan]                  &  0.431 &   0.057 &  0.311 &  0.539 &    1.0 \\
Borough[Queens]                     & -0.242 &   0.021 & -0.286 & -0.202 &    1.0 \\
Borough[Staten Island]              &  0.248 &   0.031 &  0.185 &  0.307 &    1.0 \\
Category[Hazard]                    &  1.413 &   0.017 &  1.378 &  1.447 &    1.0 \\
Category[Illegal Tree Damage]       &  0.237 &   0.035 &  0.157 &  0.302 &    1.0 \\
Category[Prune]                     & -0.106 &   0.028 & -0.164 & -0.056 &    1.0 \\
Category[Remove Tree]               &  0.025 &   0.022 & -0.021 &  0.062 &    1.0 \\
Category[Root/Sewer/Sidewalk]       & -1.569 &   0.036 & -1.643 & -1.499 &    1.0 \\
\bottomrule
\end{tabular}
\end{table}

\begin{table}
\centering
\caption{Regression coefficients for Zero-inflated Poisson regression with incident-level covariates and Borough fixed effects for Max Duration 100 days, Default repeat caller removal. Risk assessment scores binned according to levels of prioritization.}
\label{tab:binnedrisk}
\begin{tabular}{lrrrrrr}
\toprule
 {} &   Mean &  StdDev &   2.5\% &  97.5\% &  R\_hat \\
\midrule
Intercept                           & -3.177	&0.205	&-3.528	&-2.872	&1.0 \\
Zero Inflation fraction                 &  0.656	&0.004	&0.650	&0.662	&1.0 \\
INSPCondition[T.Dead]                & -0.339	&0.033	&-0.390	&-0.285 &    1.0 \\
INSPCondition[T.Excellent\_Good]      & -0.410	&0.024	&-0.451	&-0.373 &    1.0 \\
INSPCondition[T.Fair]                & -0.188	&0.025	&-0.227	&-0.147 &    1.0 \\
Risk assessment category A  &0.871	&0.206	&0.555	&1.224 &1.0\\
Risk assessment category B  &0.594	&0.183	&0.290	&0.916 &1.0\\
Risk assessment category C  & -0.297	&0.189	&-0.586	&0.008 &1.0\\
Risk assessment category D  &-0.146	&0.172	&-0.400	&0.158 &1.0\\
Risk assessment category E  &-1.206	&0.834	&-2.689	&-0.018 &1.0\\
Risk assessment category Unknown & 0.184	&0.175	&-0.071	&0.490 &1.0\\
Log(Tree Diameter at Breast Height) & -0.005	&0.010	&-0.021	&0.010 &    1.0 \\
Borough[Bronx]                      & 0.086	&0.114	&-0.122	&0.268 &    1.0 \\
Borough[Brooklyn]                   & -0.350	&0.127	&-0.553	&-0.155 &    1.0 \\
Borough[Manhattan]                  &  0.765	&0.181	&0.445	&1.048 &    1.0 \\
Borough[Queens]                     & -0.600	&0.085	&-0.737	&-0.461 &    1.0 \\
Borough[Staten Island]              &  0.099	&0.192	&-0.219	&0.416 &    1.0 \\
Category[Hazard]                    &  1.473	&0.018	&1.442	&1.503 &    1.0 \\
Category[Illegal Tree Damage]       &  0.220	&0.035	&0.159	&0.276 &    1.0 \\
Category[Prune]                     & -0.057	&0.029	&-0.105	&-0.012 &    1.0 \\
Category[Remove Tree]               &  0.052	&0.022	&0.016	&0.088 &    1.0 \\
Category[Root/Sewer/Sidewalk]       & -1.687	&0.040	&-1.749	&-1.624 &    1.0 \\
\bottomrule
\end{tabular}
\end{table}

\subsection{Calculation of mean reporting delay}
\label{app:exampledelaycalc}

In this section, we give the example calculation for the mean reporting delay of a Hazard, tree in Poor condition, risk assessment score 12 incident in Manhattan. The mean delay is calculated as:
\[
1/\exp(\underbrace{-3.229}_{\text{Intercept, tree in Poor condition}} + \underbrace{1.418}_{\text{Hazard}} + \underbrace{0.438}_{\text{Manhattan}} + \underbrace{\frac{12-6.4915}{2.1788}\times 0.240}_{\text{Standardized risk assessment score}} ) \approx 2.2
\]

There are a few points worth mentioning with this calculation. First, we take the exponential of the sum of these coefficients, in accordance with the specification of the Poisson regression we fit in \Cref{eq:poissonregression}; we further take the reciprocal, since the mean of an Exponential random variable is the reciprocal of its rate. Second, coefficient estimate for the dimension of `tree in Poor condition' is integrated into the estimate for the intercept term; for tree in other conditions, an additional appropriate coefficient estimate needs to be added to the exponent. Third, tree size and risk assessment scores are standardized in the train set, thus in the calculation, we need to do the same standardization process as when we obtained the train set. Since we are concerned with trees of average size, the tree size variable is standardized to 0 here and omitted. 

\subsection{Additional information on socioeconomic and spatial reporting inequities}
\label{app:socioeconomic}

In this section, we provide some additional information on socioeconomic and spatial reporting inequities omitted from \Cref{sec:results}.

\textbf{Calculation of cumulative association of socioeconomic variables.} We first give an example of how the cumulative association scores in \Cref{fig:spatialheterogproj} are calculated. Note that all coefficients given in \Cref{tab:nycmultidemonodensity} are on standardized covariates. (Standardization was done on all covariates on the data frame used for the Poisson regression, and average refers to the corresponding average on that dataset). Let us suppose one hypothetical census tract has the following socioeconomic profile: 
\begin{itemize}
    \item Median age: 1 standard deviation above average;
    \item Fraction of white residents: 1 standard deviation below average;
    \item Fraction of residents with a college degree: 0.5 standard deviations above average;
    \item Fraction of renters: 0.5 standard deviations below average;
    \item Log income per capita: 0.5 standard deviations above average.
\end{itemize}

Then, according to \Cref{tab:nycmultidemonodensity}, the cumulative association can be calculated as:
\[
1\times (-0.014) + (-1)\times 0.058 + 0.5 \times (-0.047) + (-0.5)\times 0.042 + 0.5\times 0.086 = -0.0735,
\]
where each additive part in the equation corresponds to each of the socioeconomic variables above.

\Cref{fig:spatialmapchicagoproj} is a reproduction of \Cref{fig:spatialheterogproj} using data from Chicago. Substantial spatial differences in reporting across different areas of the city are visible. \Cref{fig:spatialheterog} present the results when our model includes indicator variables for the over 2000 census tracts in New York City and the over 2000 census block groups in Chicago. These estimates capture more fine-grained spatial differences, beyond which may occur due to socioeconomic differences.

\begin{figure}[bt]
 	\centering
\includegraphics[width=.5\textwidth]{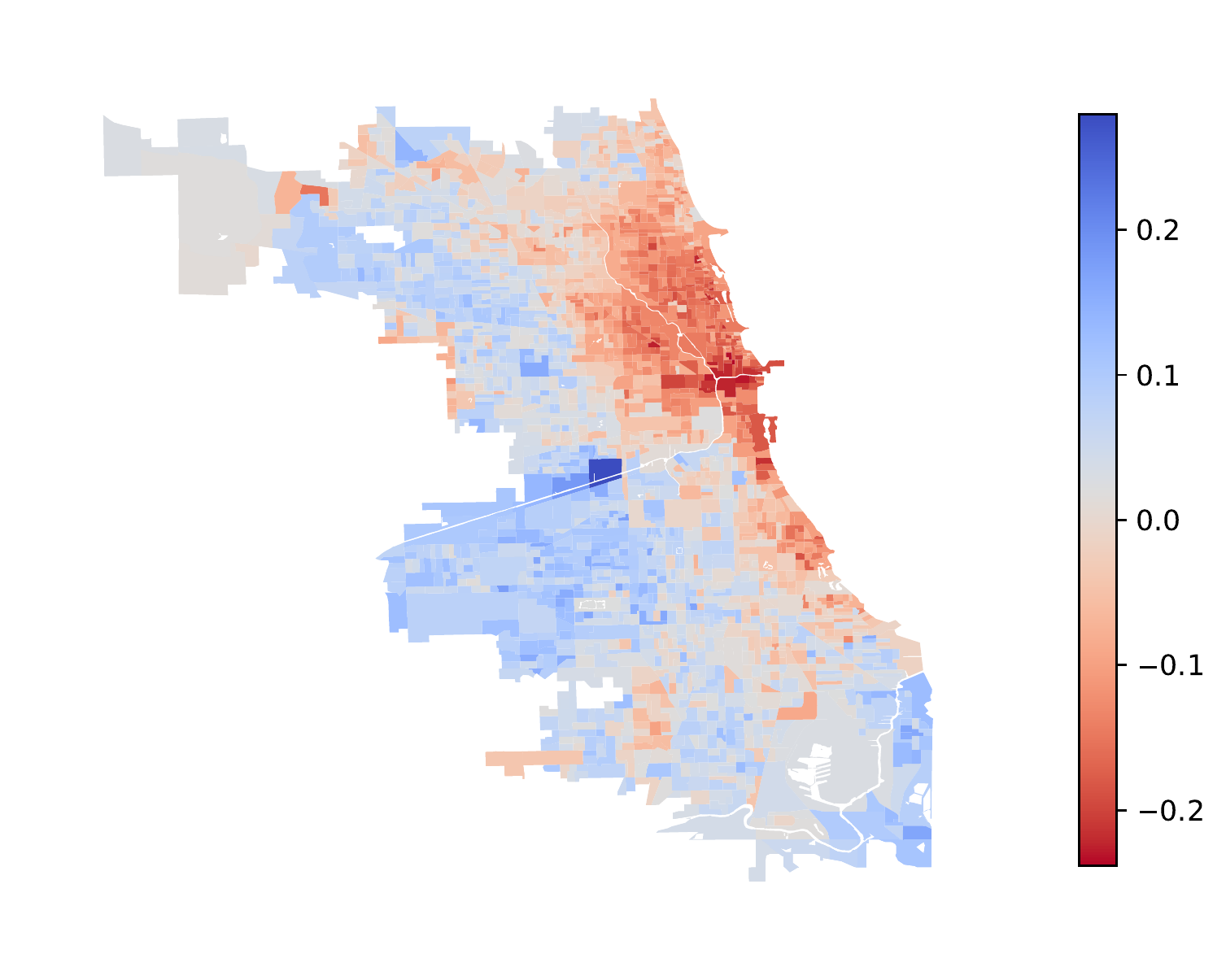}
 	\caption{Coefficients for each census block group in Chicago, representing the combined association of socioeconomic variables (log income per capita, fraction of white residents, fraction of renter, median age, and fraction of residents with college degrees) on reporting rates.}
 \label{fig:spatialmapchicagoproj}
 \end{figure}

  \begin{figure}[bt]
 	\centering
\subfloat[][Census tract coefficients in New York City]{
 		\includegraphics[width=.48\textwidth]{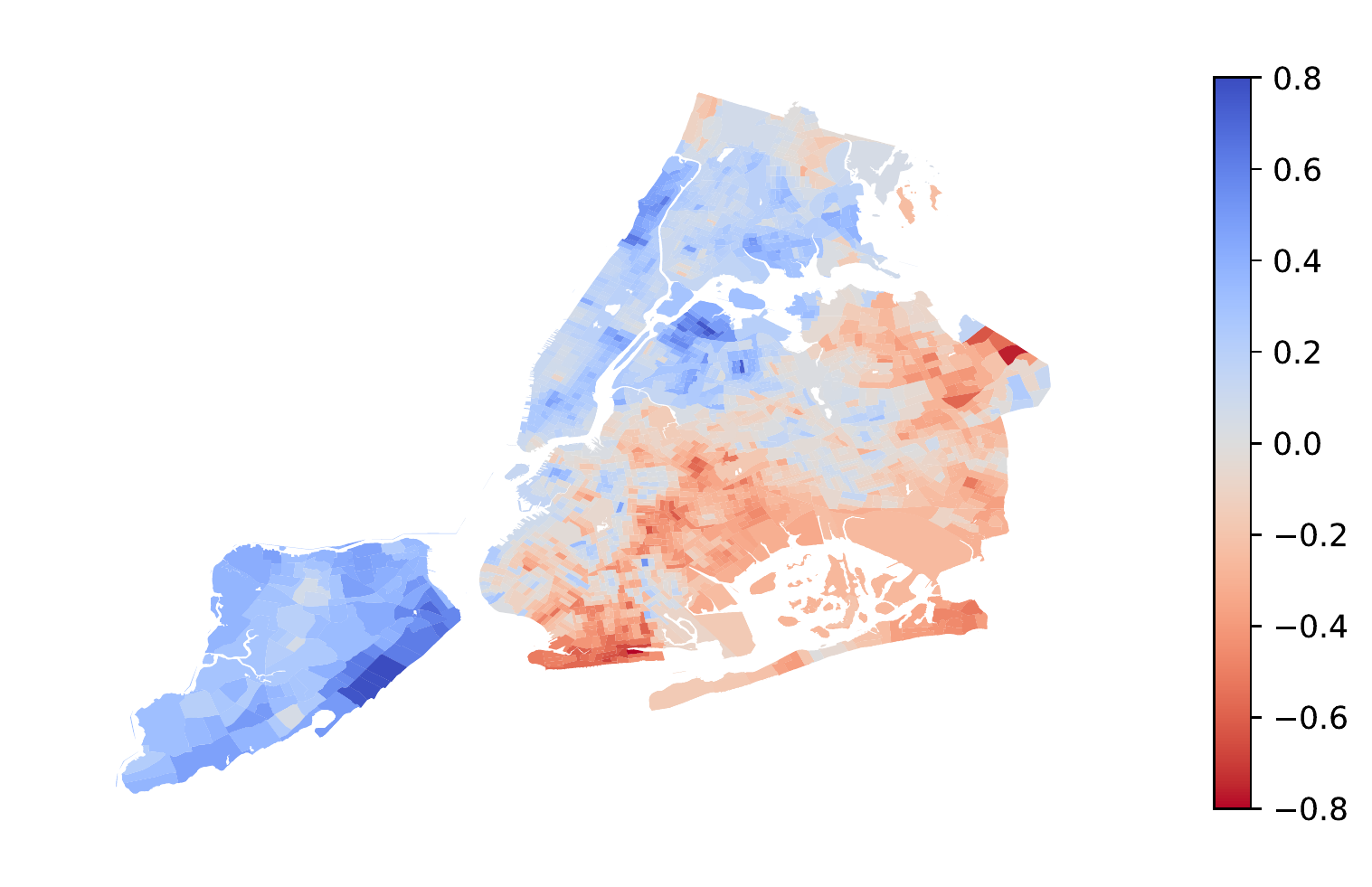}
 		\label{fig:spatialmap}
}
 	\hfill
	\subfloat[][Census block group coefficients in Chicago]{
        \vspace{-.3cm}
		\includegraphics[width=.43\textwidth]{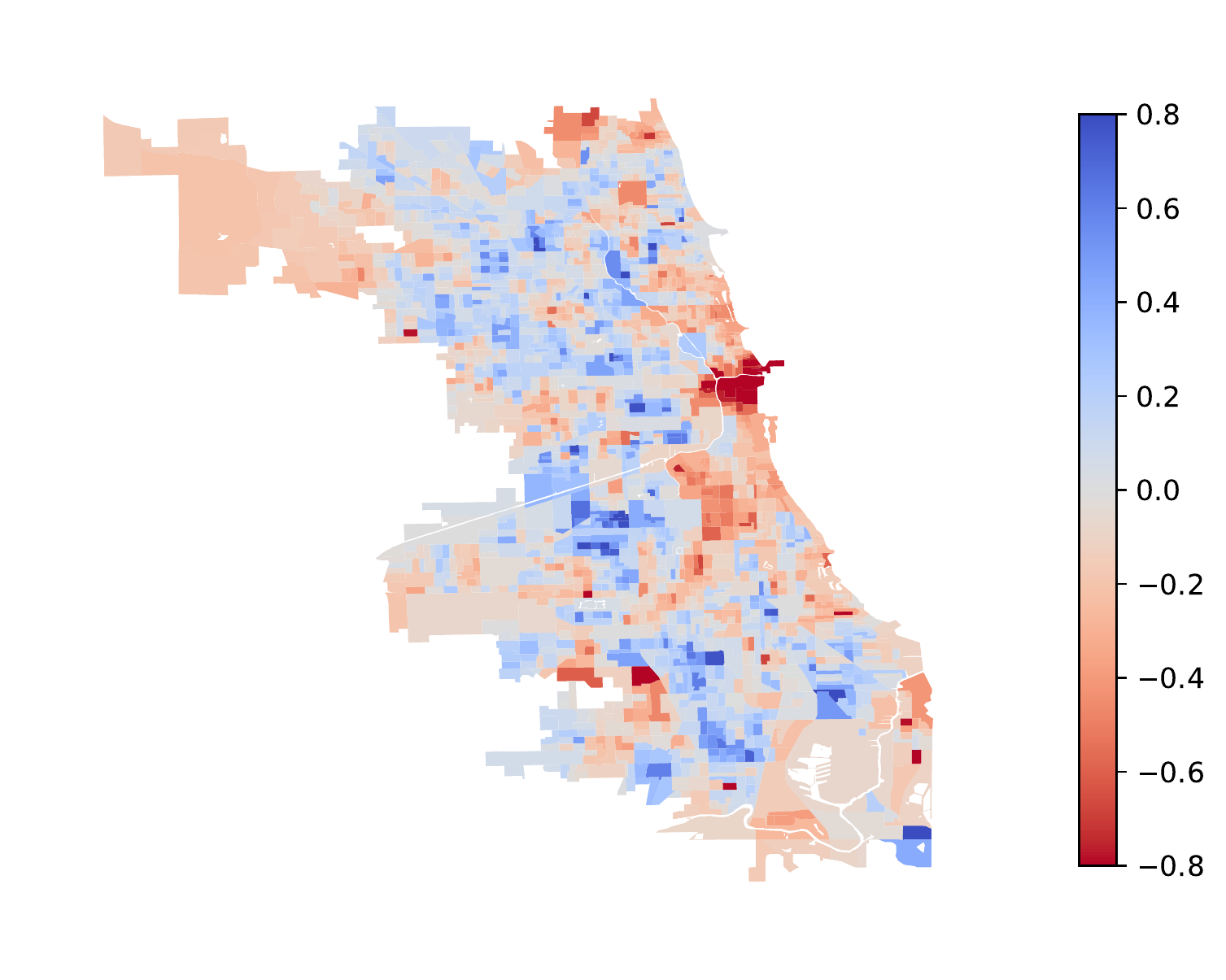}
		\label{fig:spatialmapchicago}
}
 	\caption{Coefficients on spatial variables in NYC and Chicago. These spatial coefficients are estimated using the ICAR spatial zero-inflated Poisson regression. More positive coefficients indicate higher reporting rates.}
 	\label{fig:spatialheterog}
 \end{figure}

\subsection{Replicating NYC results with public data}\label{app:nycpublic}

In the main text, we present results from NYC that were obtained from partially private data. In this subsection, we show that with all public data, we can nontheless reproduce all the results. In this subsection, we will refer to the data used in \Cref{sec:datapplication} as the ``full dataset'' and the data used here as the ``public dataset''

In \Cref{sec:datapplication}, we outlined that internal data were used to identify inspections and work orders associated with the service requests, and identify (anonymized) information about the caller. With the public dataset, we can still perform the first task, by joining the public service request data\footnote{\url{https://data.cityofnewyork.us/Environment/Forestry-Service-Requests/mu46-p9is}} with public inspection data\footnote{\url{https://data.cityofnewyork.us/Environment/Forestry-Inspections/4pt5-3vv4}}, public work order data\footnote{\url{https://data.cityofnewyork.us/Environment/Forestry-Inspections/4pt5-3vv4}}, and public risk assessment data\footnote{\url{https://data.cityofnewyork.us/Environment/Forestry-Risk-Assessments/259a-b6s7}}. During this process, we note that all of the covariates which we used in the previous analyses are retained. The second task, however, cannot be performed since public datasets do not contain any information about the caller. This means that we are unable to identify repeat callers and remove them in our analyses. This, however, does not change the final results qualitatively.

\paragraph{Summary of preprocessing} The aforementioned public datasets are joined using unique ID's on service requests and inspections. These datasets are all actively maintained and updated, and the version used here contains \num{569340} unique service requests made by the public within the period between 2/28/2015 and 9/01/2022. We then created an aggregated dataset as described in \Cref{sec:datapplication}. See Appendix \Cref{tab:publicsummary} for summary statistics, Appendix \Cref{fig:numreportshistpublic} and Appendix \Cref{fig:durationhistpublic} for distribution of number of reports per incident and length of observation inverval.

\begin{table}[tb]
    \caption{Summary statistics from the NYC public dataset.}
    \label{tab:publicsummary}
    \centering
    \resizebox{.9\textwidth}{!}{%

    \begin{tabular}{r|r|rr|rrr}
    \multicolumn{1}{c|}{\textbf{}}             & \multicolumn{1}{c|}{\textbf{\begin{tabular}[c]{@{}c@{}}Service \\ requests\end{tabular}}} & \multicolumn{2}{c|}{\textbf{Inspections}}                                                                                                                          & \multicolumn{3}{c}{\textbf{\begin{tabular}[c]{@{}c@{}}Incidents\\ (from inspected reports)\end{tabular}}}                                                                                                                                             \\
    \multicolumn{1}{l|}{}                      & \multicolumn{1}{l|}{}                                                                     & \multicolumn{1}{c}{\begin{tabular}[c]{@{}c@{}}Inspected \\ SRs\end{tabular}} & \multicolumn{1}{c|}{\begin{tabular}[c]{@{}c@{}}Fraction\\ inspected\end{tabular}} & \multicolumn{1}{c}{\begin{tabular}[c]{@{}c@{}}Unique\\ incidents\end{tabular}} & \multicolumn{1}{c}{\begin{tabular}[c]{@{}c@{}}Avg. reports\\per incident\end{tabular}} & \multicolumn{1}{c}{\begin{tabular}[c]{@{}c@{}}Median Days\\to Inspection\end{tabular}} \\ \hline
    \multicolumn{1}{l|}{\textbf{Total number}} &
    
    \num{569340} &              \num{363676} &                0.64 &            \num{246744} &                  1.47 &                              6.53  \\
    \multicolumn{1}{l|}{\textbf{By Borough}}   &                                                                                           &                                                                              &                                                                                     &                                                                                &                                                                                &                                                                                     \\
    \textit{Queens} &    \num{235363} &              \num{152985} &                0.65 &            \num{107524} &                  1.42 &                              5.99 \\
    \textit{Brooklyn} &    \num{176030} &              \num{118782} &                0.67 &             \num{73067} &                  1.63 &                              8.19 \\
    \textit{Staten Island} &     \num{69436} &               \num{35415} &                0.51 &             \num{25710} &                  1.38 &                              5.24 \\
    \textit{Bronx} &     \num{54839} &               \num{38336} &                0.70 &             \num{26110} &                  1.47 &                              7.89 \\
    \textit{Manhattan} &     \num{33659} &               \num{18156} &                0.54 &             \num{14346} &                  1.27 &                              2.82 \\
    \multicolumn{1}{l|}{\textbf{By Category}}  &                                                                                           &                                                                              &                                                                                     &                                                                                &                                                                                &                                                                                     \\
    \textit{Hazard} &    \num{236593} &              \num{179477} &                0.76 &            \num{118135} &                  1.52 &                              2.80 \\
    \textit{Prune} &    \num{124725} &               \num{42929} &                0.34 &             \num{34305} &                  1.25 &                             17.97 \\
    \textit{Remove Tree} &    \num{105960} &               \num{83179} &                0.79 &             \num{63422} &                  1.31 &                              7.55 \\
    \textit{Root/Sewer/Sidewalk} &     \num{72900} &               \num{41837} &                0.57 &             \num{33450} &                  1.25 &                             38.07 \\
    \textit{Illegal Tree Damage} &     \num{27479} &               \num{15554}&                0.57 &             \num{12447} &                  1.25 &                             17.38 \\
    \textit{Other} &      1,683 &                 700 &                0.42 &               645 &                  1.09	 &                              5.70 \\
    \end{tabular}%
    }
    
    \end{table}

\begin{figure}[t]
	\centering
	\subfloat[][Number of reports per incident.]{
		\includegraphics[width=.45\textwidth]{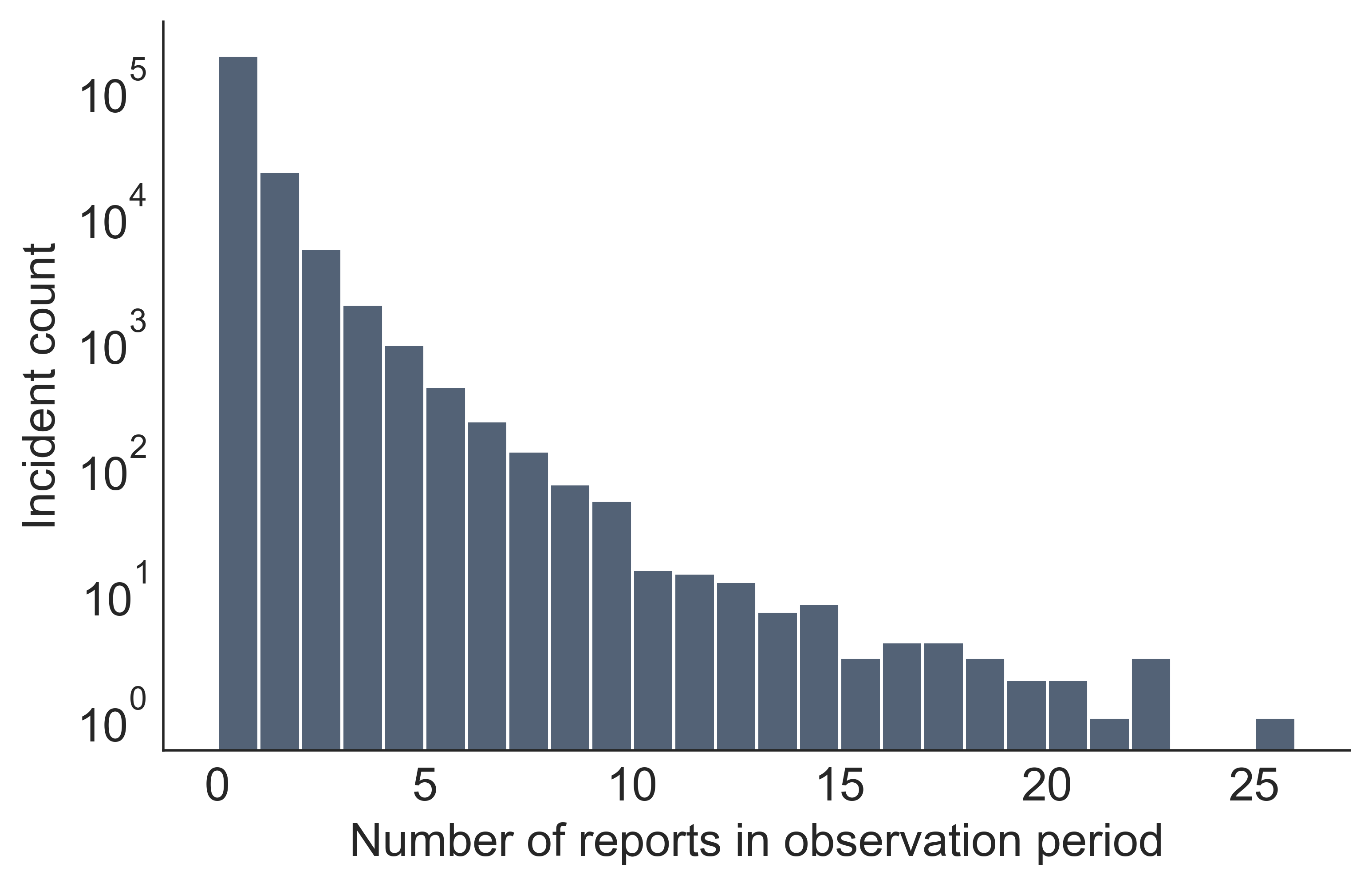}
		\label{fig:numreportshistpublic}
	}
	\hfill
	\subfloat[][Length of observation period.]{
		\includegraphics[width=.45\textwidth]{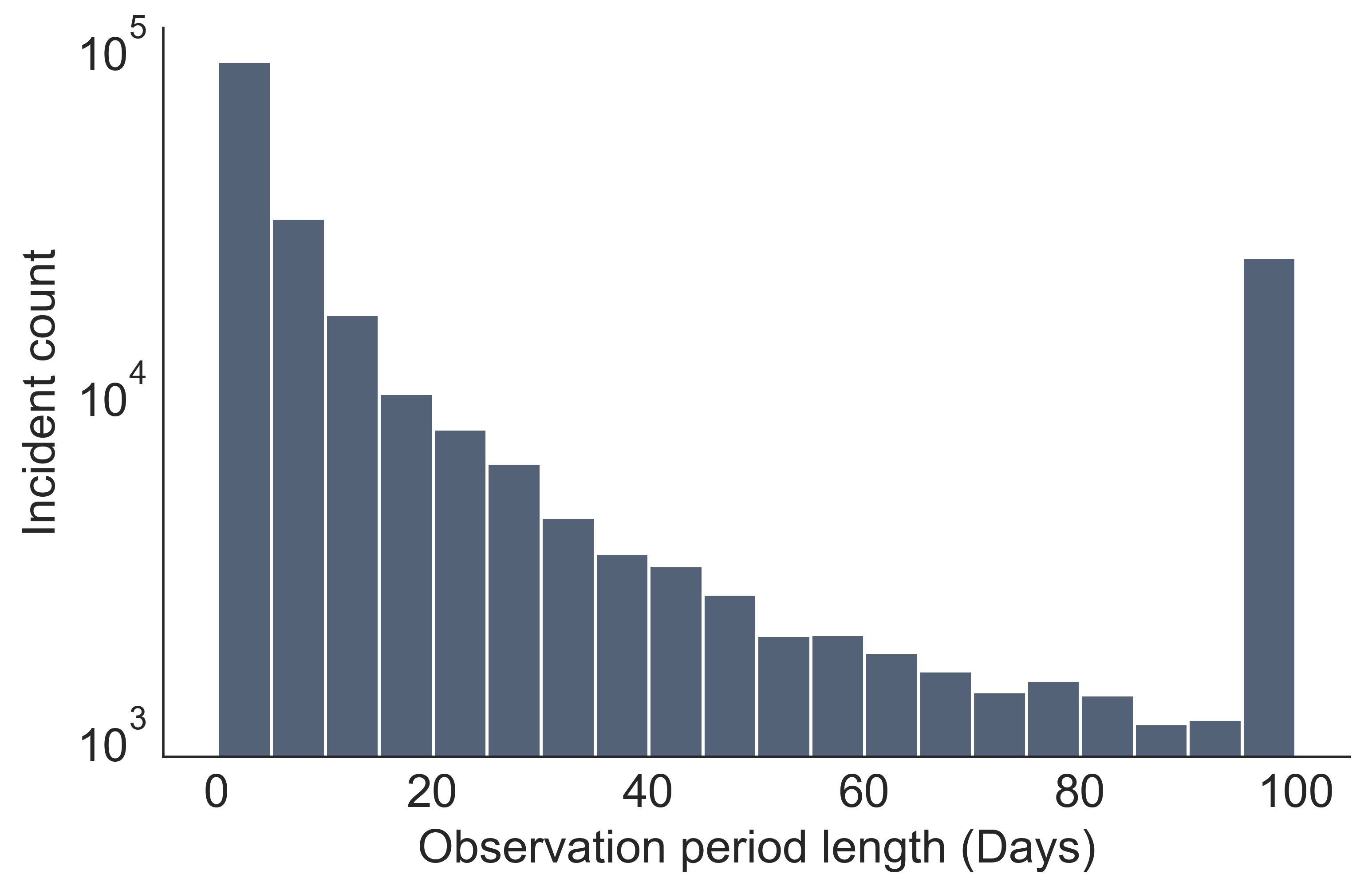}
		\label{fig:durationhistpublic}
	}
	\caption{Distribution of number of reports and length of observation for each unique incident in the aggregated dataset using public data.}
\end{figure}

\paragraph{Summary of results} We run all the analyses in \Cref{sec:datapplication} on the public data set. Appendix \Cref{tab:nycbasicpublic100} summarizes results with the \textbf{Base} variables, while Appendix \Cref{tab:nycbasicpublic30} and Appendix \Cref{tab:nycbasicpublic200} serve as robustness checks. Appendix \Cref{fig:spatialmappublic} shows the coefficients on \textbf{Spatial} variables. Appendix \Cref{tab:nycdemopublicnoborough}, \Cref{tab:nycdemopublic}, and \Cref{tab:nycmultidemopublic} show the coefficients on census tract \textbf{Socioeconomic} covariates.

\begin{table}
\centering
\caption{Regression coefficients for Base variables in NYC, for Max Duration 100 days, estimated with public data}
\label{tab:nycbasicpublic100}
\begin{tabular}{lrrrrrr}
\toprule
{} &   Mean &  StdDev &   2.5\% &  97.5\% &  R\_hat \\
\midrule
Intercept                           & -3.449 &   0.028 & -3.507 & -3.401 &    1.0 \\
Zero Inflation fraction                 &  0.539 &   0.005 &  0.529 &  0.547 &    1.0 \\
INSPCondition[T.Dead]                & -0.163 &   0.035 & -0.237 & -0.103 &    1.0 \\
INSPCondition[T.Excellent\_Good]      & -0.698 &   0.028 & -0.755 & -0.646 &    1.0 \\
INSPCondition[T.Fair]                & -0.436 &   0.025 & -0.492 & -0.387 &    1.0 \\
INSP\_RiskAssessment                 &  0.262 &   0.013 &  0.235 &  0.287 &    1.0 \\
Log(Tree Diameter at Breast Height) &  0.079 &   0.011 &  0.053 &  0.097 &    1.0 \\
Borough[Bronx]                      &  0.029 &   0.026 & -0.022 &  0.076 &    1.0 \\
Borough[Brooklyn]                   & -0.178 &   0.017 & -0.214 & -0.147 &    1.0 \\
Borough[Manhattan]                  &  0.635 &   0.039 &  0.558 &  0.701 &    1.0 \\
Borough[Queens]                     & -0.516 &   0.017 & -0.549 & -0.485 &    1.0 \\
Borough[Staten Island]              &  0.030 &   0.029 & -0.030 &  0.084 &    1.0 \\
Category[Hazard]                    &  1.583 &   0.019 &  1.547 &  1.621 &    1.0 \\
Category[Illegal Tree Damage]       &  0.039 &   0.038 & -0.033 &  0.111 &    1.0 \\
Category[Prune]                     & -0.108 &   0.028 & -0.166 & -0.056 &    1.0 \\
Category[Remove Tree]               & -0.173 &   0.023 & -0.216 & -0.131 &    1.0 \\
Category[Root/Sewer/Sidewalk]       & -1.341 &   0.035 & -1.413 & -1.278 &    1.0 \\
\bottomrule
\end{tabular}
\end{table}

\begin{table}
\centering
\caption{Regression coefficients for Base variables in NYC, for Max Duration 30 days, estimated with public data}
\label{tab:nycbasicpublic30}
\begin{tabular}{lrrrrrr}
\toprule
{} &   Mean &  StdDev &   2.5\% &  97.5\% &  R\_hat \\
\midrule
Intercept                           & -2.923 &   0.029 & -2.987 & -2.873 &    1.0 \\
Zero Inflation fraction                 &  0.618 &   0.004 &  0.610 &  0.626 &    1.0 \\
INSPCondition[T.Dead]                & -0.210 &   0.037 & -0.282 & -0.140 &    1.0 \\
INSPCondition[T.Excellent\_Good]      & -0.691 &   0.032 & -0.751 & -0.627 &    1.0 \\
INSPCondition[T.Fair]                & -0.429 &   0.029 & -0.482 & -0.369 &    1.0 \\
INSP\_RiskAssessment                 &  0.241 &   0.013 &  0.213 &  0.264 &    1.0 \\
Log(Tree Diameter at Breast Height) &  0.083 &   0.011 &  0.060 &  0.103 &    1.0 \\
Borough[Bronx]                      & -0.099 &   0.026 & -0.153 & -0.051 &    1.0 \\
Borough[Brooklyn]                   & -0.053 &   0.018 & -0.088 & -0.018 &    1.0 \\
Borough[Manhattan]                  &  0.536 &   0.038 &  0.455 &  0.604 &    1.0 \\
Borough[Queens]                     & -0.374 &   0.018 & -0.413 & -0.339 &    1.0 \\
Borough[Staten Island]              & -0.010 &   0.029 & -0.069 &  0.041 &    1.0 \\
Category[Hazard]                    &  1.458 &   0.019 &  1.415 &  1.492 &    1.0 \\
Category[Illegal Tree Damage]       &  0.252 &   0.035 &  0.175 &  0.311 &    1.0 \\
Category[Prune]                     & -0.191 &   0.033 & -0.256 & -0.126 &    1.0 \\
Category[Remove Tree]               & -0.241 &   0.026 & -0.298 & -0.193 &    1.0 \\
Category[Root/Sewer/Sidewalk]       & -1.277 &   0.040 & -1.364 & -1.211 &    1.0 \\
\bottomrule
\end{tabular}
\end{table}

\begin{table}
\centering
\caption{Regression coefficients for Base variables in NYC, for Max Duration 200 days, estimated with public data}
\label{tab:nycbasicpublic200}
\begin{tabular}{lrrrrrr}
\toprule
{} &   Mean &  StdDev &   2.5\% &  97.5\% &  R\_hat \\
\midrule
Intercept                           & -3.670 &   0.027 & -3.720 & -3.614 &    1.0 \\
Zero Inflation fraction                 &  0.508 &   0.005 &  0.499 &  0.517 &    1.0 \\
INSPCondition[T.Dead]                & -0.116 &   0.037 & -0.192 & -0.049 &    1.0 \\
INSPCondition[T.Excellent\_Good]      & -0.631 &   0.028 & -0.690 & -0.579 &    1.0 \\
INSPCondition[T.Fair]                & -0.389 &   0.027 & -0.447 & -0.340 &    1.0 \\
INSP\_RiskAssessment                 &  0.256 &   0.012 &  0.233 &  0.279 &    1.0 \\
Log(Tree Diameter at Breast Height) &  0.059 &   0.011 &  0.039 &  0.079 &    1.0 \\
Borough[Bronx]                      &  0.130 &   0.024 &  0.079 &  0.173 &    1.0 \\
Borough[Brooklyn]                   & -0.177 &   0.018 & -0.216 & -0.147 &    1.0 \\
Borough[Manhattan]                  &  0.585 &   0.039 &  0.495 &  0.657 &    1.0 \\
Borough[Queens]                     & -0.525 &   0.018 & -0.562 & -0.492 &    1.0 \\
Borough[Staten Island]              & -0.013 &   0.028 & -0.067 &  0.039 &    1.0 \\
Category[Hazard]                    &  1.595 &   0.017 &  1.562 &  1.624 &    1.0 \\
Category[Illegal Tree Damage]       &  0.183 &   0.033 &  0.111 &  0.242 &    1.0 \\
Category[Prune]                     & -0.209 &   0.023 & -0.260 & -0.167 &    1.0 \\
Category[Remove Tree]               & -0.148 &   0.025 & -0.209 & -0.107 &    1.0 \\
Category[Root/Sewer/Sidewalk]       & -1.422 &   0.032 & -1.482 & -1.360 &    1.0 \\
\bottomrule
\end{tabular}
\end{table}

\begin{figure}[tb]
	\centering
	\includegraphics[width = .5 \textwidth]{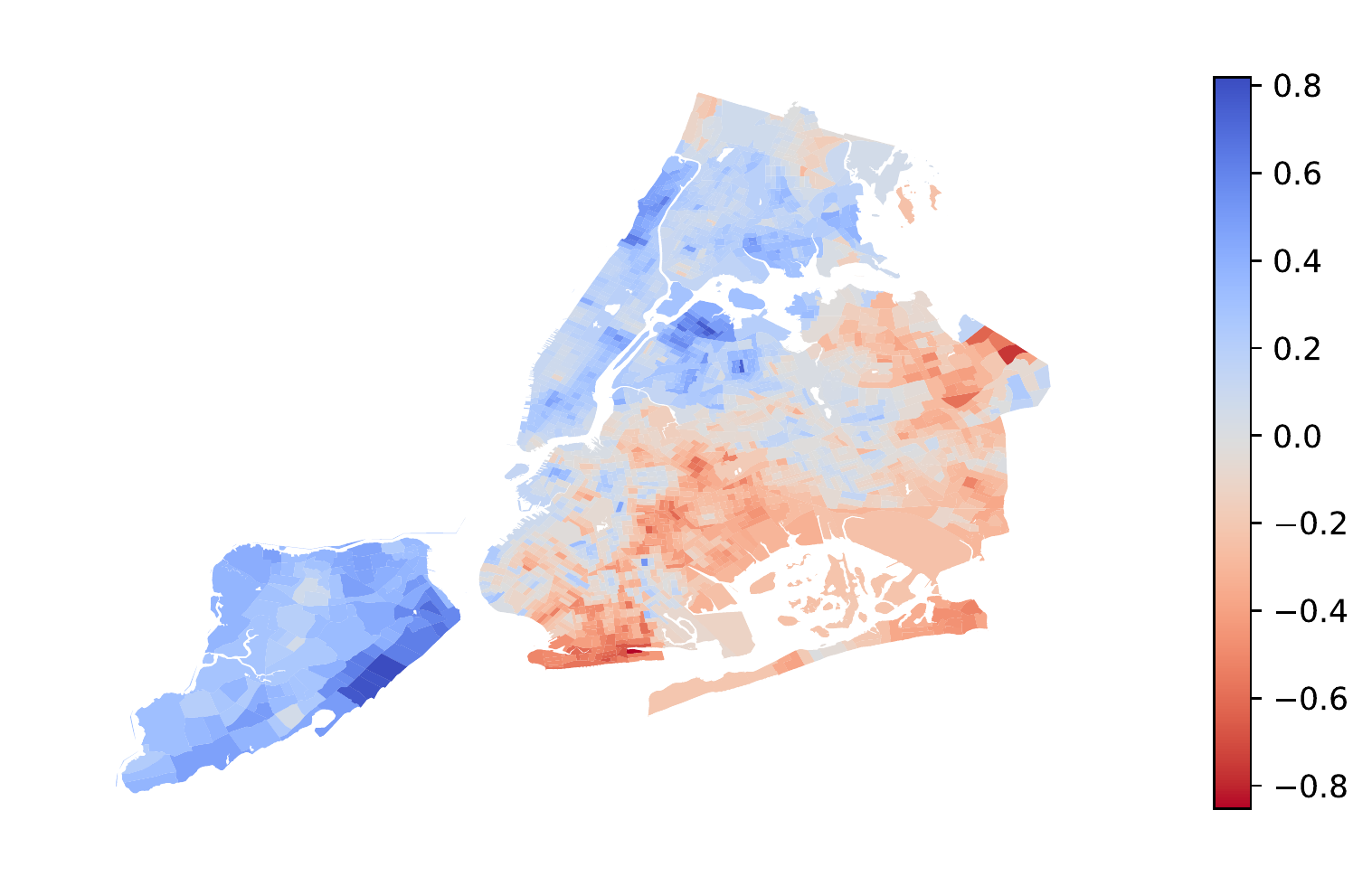}
	\caption{Coefficients on spatial variables in NYC, estimated with public data}
	\label{fig:spatialmappublic}
\end{figure}

\begin{table}
\centering
\caption{Census Tract Socioeconomic coefficients, estimated alone in a regression alongside
the incident-specific covariates, with all public data}
\label{tab:nycdemopublicnoborough}
\begin{tabular}{lrrrrr}
\toprule
                               &   Mean &  StdDev &   2.5\% &  97.5\% \\
\midrule
                    Median age & -0.061 &   0.008 & -0.078 & -0.045 \\
             Fraction Hispanic &  0.042 &   0.009 &  0.022 &  0.059 \\
                Fraction white &  0.109 &   0.008 &  0.092 &  0.124 \\
                Fraction Black & -0.065 &   0.008 & -0.083 & -0.050 \\
Fraction no high school degree & -0.007 &   0.009 & -0.026 &  0.011 \\
       Fraction college degree &  0.056 &   0.009 &  0.037 &  0.072 \\
              Fraction poverty &  0.030 &   0.008 &  0.014 &  0.046 \\
               Fraction renter &  0.126 &   0.010 &  0.104 &  0.143 \\
               Fraction family & -0.111 &   0.008 & -0.129 & -0.097 \\
       Log(Median house value) &  0.085 &   0.009 &  0.066 &  0.101 \\
        Log(Income per capita) &  0.050 &   0.010 &  0.030 &  0.067 \\
                  Log(Density) &  0.145 &   0.010 &  0.122 &  0.164 \\
\bottomrule
\end{tabular}
\end{table}

\begin{table}
\centering
\caption{Census Tract Socioeconomic coefficients, estimated alone in a regression alongside
the incident-specific covariates and the \textbf{borough fixed effects}, with all public data}
\label{tab:nycdemopublic}
\begin{tabular}{lrrrrr}
\toprule
                               &   Mean &  StdDev &   2.5\% &  97.5\% \\
\midrule
                    Median age & -0.032 &   0.008 & -0.050 & -0.018 \\
             Fraction Hispanic &  0.043 &   0.009 &  0.023 &  0.059 \\
                Fraction white &  0.069 &   0.009 &  0.050 &  0.087 \\
                Fraction Black & -0.075 &   0.008 & -0.092 & -0.060 \\
Fraction no high school degree &  0.005 &   0.009 & -0.014 &  0.023 \\
       Fraction college degree &  0.027 &   0.009 &  0.009 &  0.043 \\
              Fraction poverty & -0.014 &   0.009 & -0.033 &  0.002 \\
               Fraction renter &  0.078 &   0.010 &  0.057 &  0.098 \\
               Fraction family & -0.045 &   0.010 & -0.064 & -0.026 \\
       Log(Median house value) &  0.061 &   0.010 &  0.039 &  0.082 \\
        Log(Income per capita) &  0.015 &   0.009 & -0.005 &  0.032 \\
                  Log(Density) &  0.104 &   0.011 &  0.082 &  0.123 \\
\bottomrule
\end{tabular}
\end{table}

\begin{table}
\centering
\caption{Census Tract Socioeconomic coefficients, estimated together in a regression alongside
the incident-specific covariates, with all public data}
\label{tab:nycmultidemopublicnodensity}
\begin{tabular}{lrrrrr}
\toprule
                        &   Mean &  StdDev &   2.5\% &  97.5\% \\
\midrule
             Median age & -0.024 &   0.010 & -0.046 &  -0.005 \\
         Fraction white &  0.100 &   0.010 &  0.081 &  0.118 \\
Fraction college degree & -0.018 &   0.014 & -0.048 &  0.007 \\
        Fraction renter &  0.142 &   0.010 &  0.123 &  0.160 \\
 Log(Income per capita) &  0.073 &   0.014 &  0.047 &  0.103 \\
\bottomrule
\end{tabular}
\end{table}

\begin{table}
\centering
\caption{Census Tract Socioeconomic coefficients, estimated together in a regression alongside
the incident-specific covariates, with all public data. Compared with \Cref{tab:nycmultidemopublic}, we additionally include (log) population density, which does not substantially affect the results.}
\label{tab:nycmultidemopublic}
\begin{tabular}{lrrrrr}
\toprule
                        &   Mean &  StdDev &   2.5\% &  97.5\% \\
\midrule
             Median age & -0.010 &   0.010 & -0.030 &  0.009 \\
         Fraction white &  0.110 &   0.010 &  0.091 &  0.129 \\
Fraction college degree & -0.022 &   0.013 & -0.050 &  0.004 \\
        Fraction renter &  0.081 &   0.012 &  0.057 &  0.104 \\
 Log(Income per capita) &  0.069 &   0.015 &  0.039 &  0.095 \\
           Log(Density) &  0.116 &   0.011 &  0.092 &  0.137 \\
\bottomrule
\end{tabular}
\end{table}

\FloatBarrier
\subsection{Supplementary information on method validation}

\label{app:additionalvalidation}

In this section, we first provide more information on the validation of model results using storm-inferred ground truth (in \Cref{sec:validation_hurricanes}), in \Cref{app:hurricane}. We further present an additional validation technique in \Cref{app:oosmodel}. Finally, we validate our estimated cumulative association of socioeconomic variables with reporting rate against a measure of civic engagement -- participation in voting -- in \Cref{app:voterparticipation}. 



\subsubsection{Supplementary information on validation using reports after hurricanes}
\label{app:hurricane}

In \Cref{sec:validation_hurricanes}, we validate our results using data immediately after Tropical Storm Isaias affected New York City, until 12 PM on 8/14/2020. Here we present results from similar analyses using data on Isaias with different end-time filters and from Tropical Storm Ida (\Cref{fig:repisaias}). Note that we must filter for incidents only immediately after the storm, because new incidents, e.g., 2 months after the event may not have been caused by the storm, as opposed to events immediately after the storm.

For Tropical Storm Ida, we filter for service requests between 12 PM on 9/01/2021 and 12 PM on various days, as specified in \Cref{fig:ida}, and define the true reporting delay as the time between the first report of an incident and 12 PM on 9/01/2021. We then estimate the reporting delays, similarly using coefficients on census tract and incident level variables. The individual incident level Pearson correlation between true reporting delays and model-estimated reporting delays is significant ($r = 0.13, p<.001$), while the means of them within each of the 30 bins defined on the model-estimated reporting delays exhibit even stronger levels of correlation, as shown in \Cref{fig:ida}.

For other storm events that affected New York City in recent years, no significant increase in the amount of tree-related service requests was observed as in \Cref{fig:histdelayhurricane}, and so we limit our analyses to these two hurricanes.

\begin{figure}[tbh]
	\centering
	\subfloat[][Filter for reports before 12 PM on 8/13/2020.]{
		\includegraphics[width=.44\textwidth]{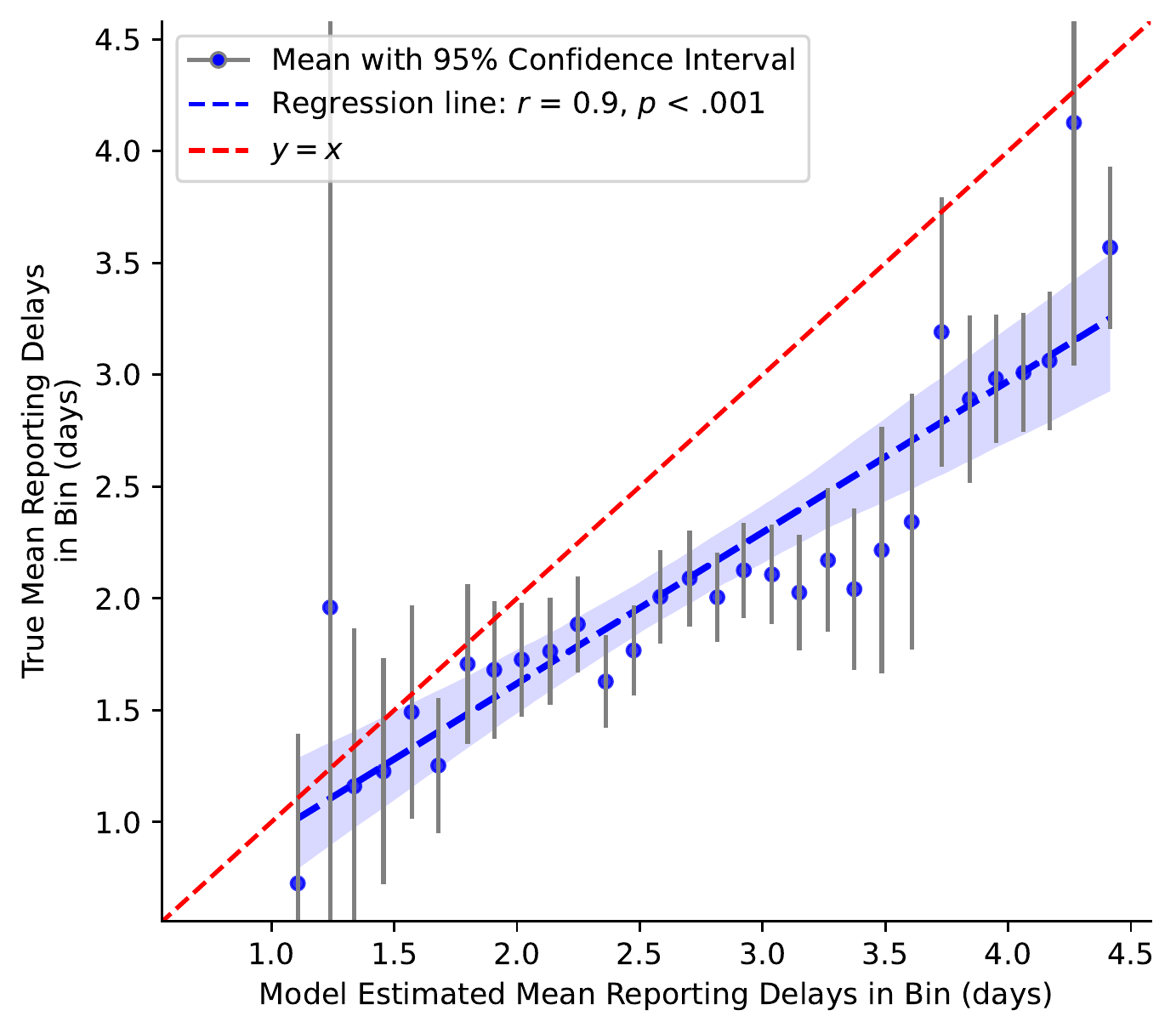}
		\label{fig:isaias8}
	}
\subfloat[][Filter for reports before 12 PM on 8/12/2020.]{
		\includegraphics[width=.44\textwidth]{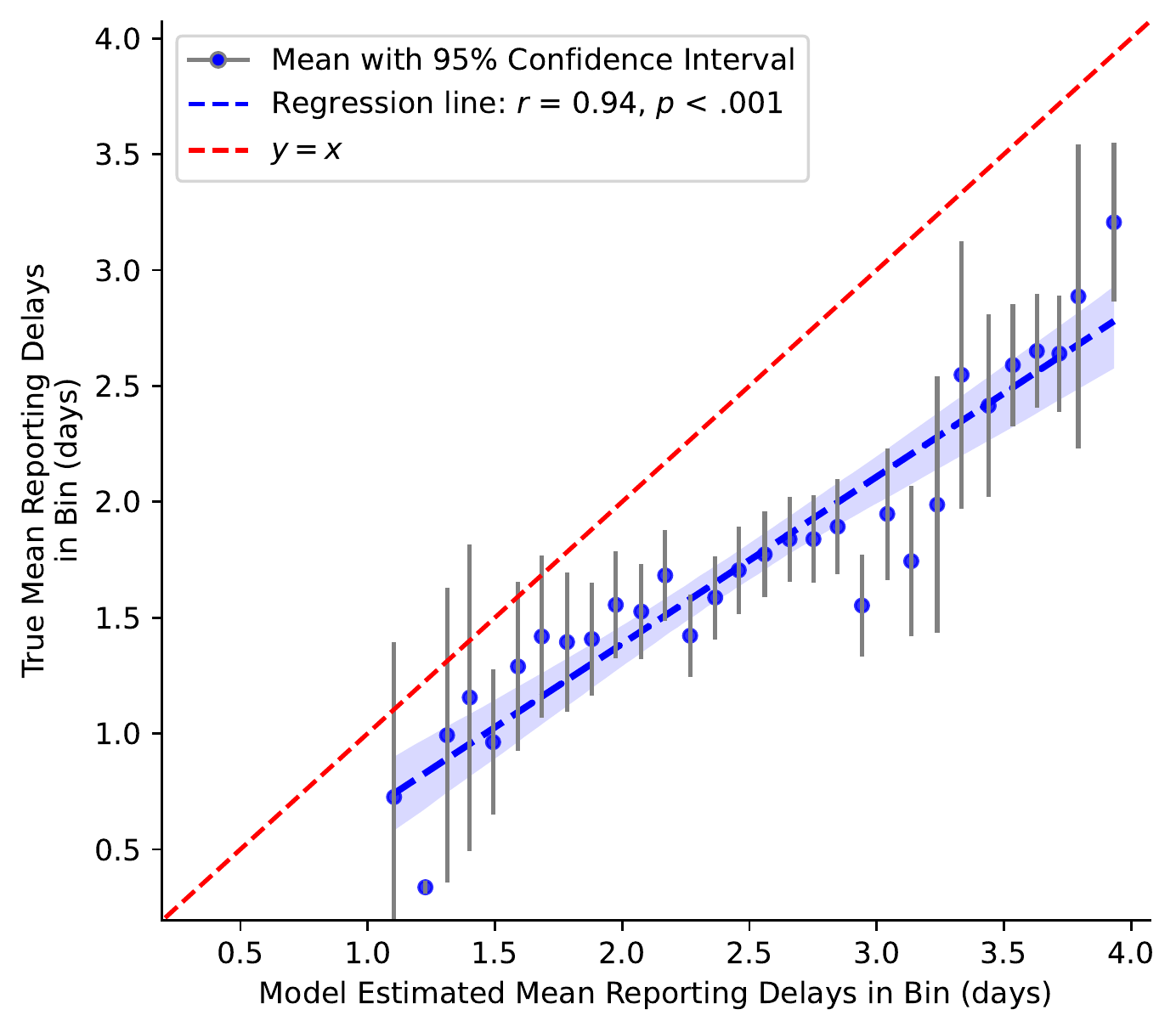}
		\label{fig:isaias7}
	}
 	\hfill
 \subfloat[][Filter for reports before 12 PM on 8/11/2020.]{
		\includegraphics[width=.44\textwidth]{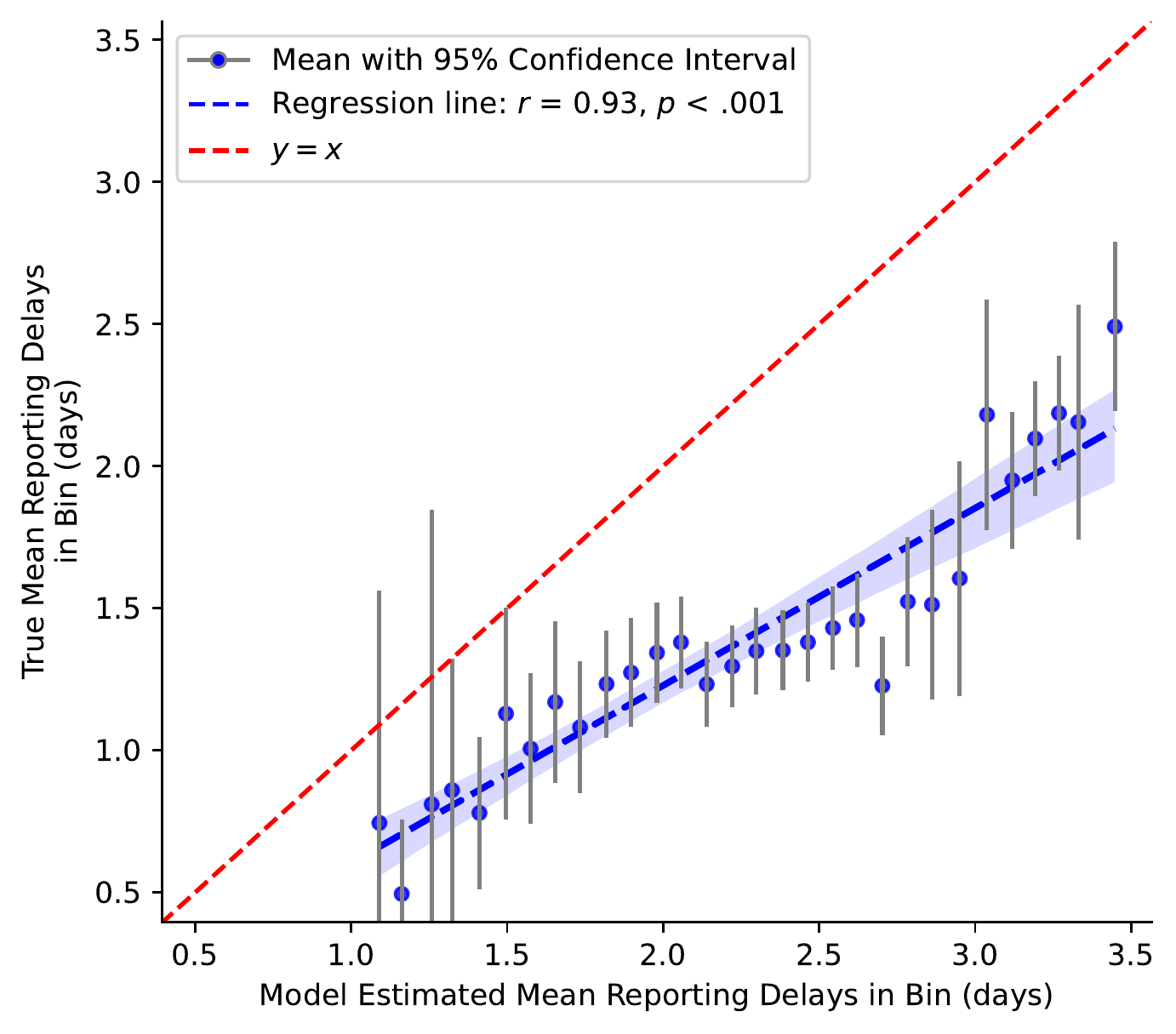}
		\label{fig:isaias6}
	}
 \subfloat[][Filter for reports before 12 PM on 8/10/2020.]{
		\includegraphics[width=.44\textwidth]{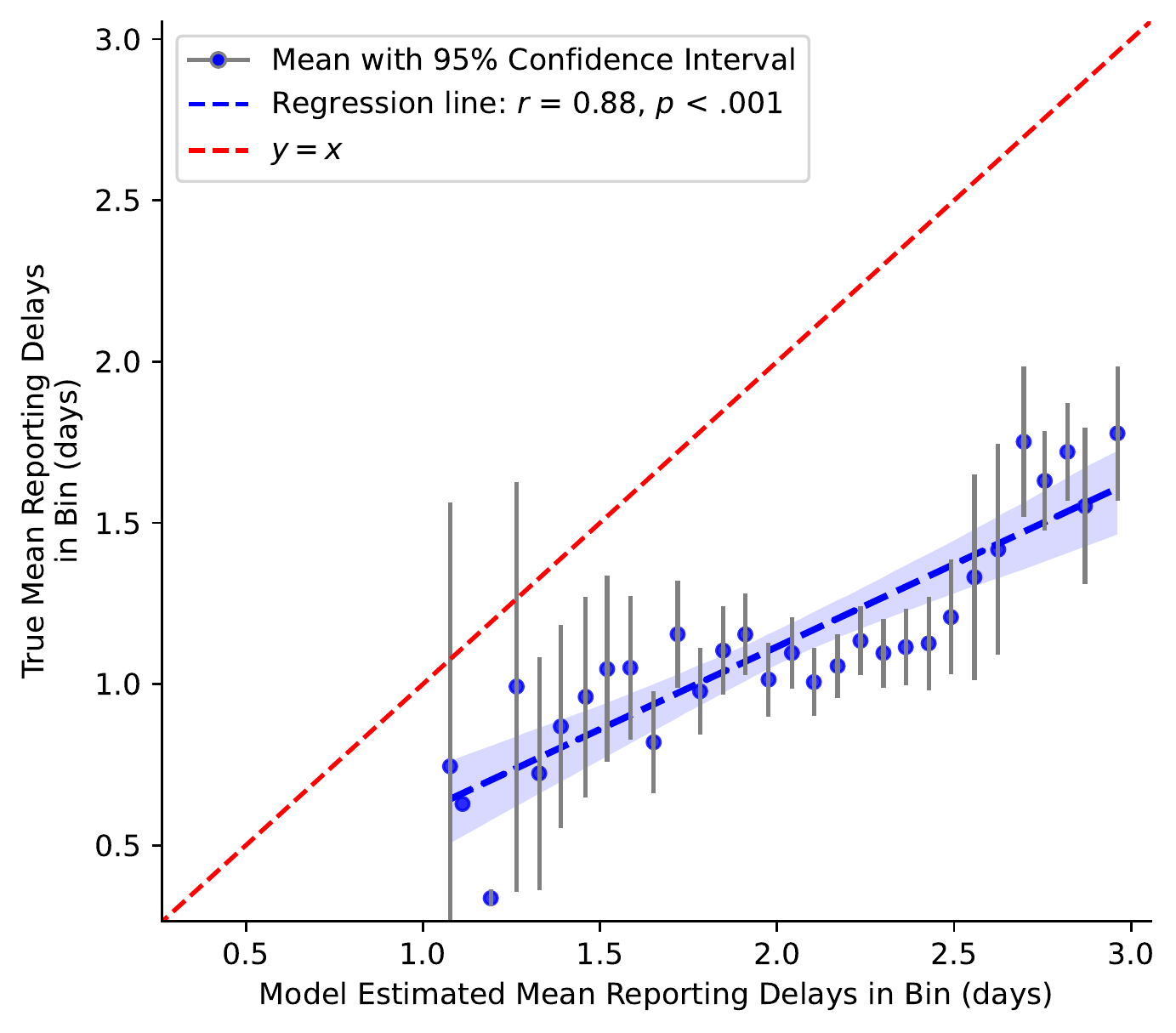}
		\label{fig:isaias5}
	}
 \hfill
	\caption{Replication of \Cref{fig:isaiasdelay} using data with different filtering for the end date. }
    \label{fig:repisaias}
\end{figure}


\begin{figure}[tb]
	\centering
	\subfloat[][Filter for reports before 12 PM on 9/10/2021.]{
		\includegraphics[width=.45\textwidth]{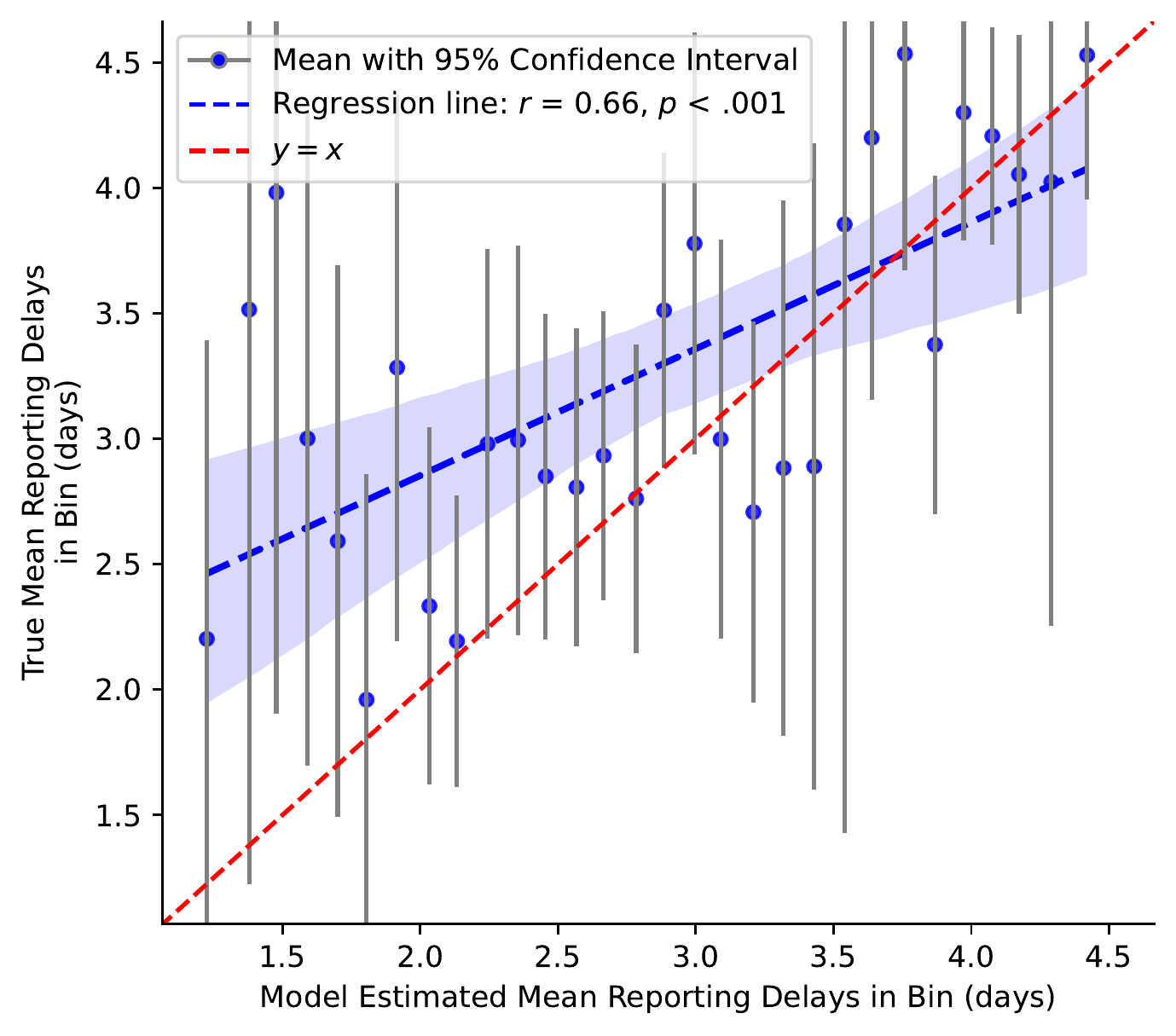}
		\label{fig:ida9}
	}
	\subfloat[][Filter for reports before 12 PM on 9/9/2021.]{
		\includegraphics[width=.45\textwidth]{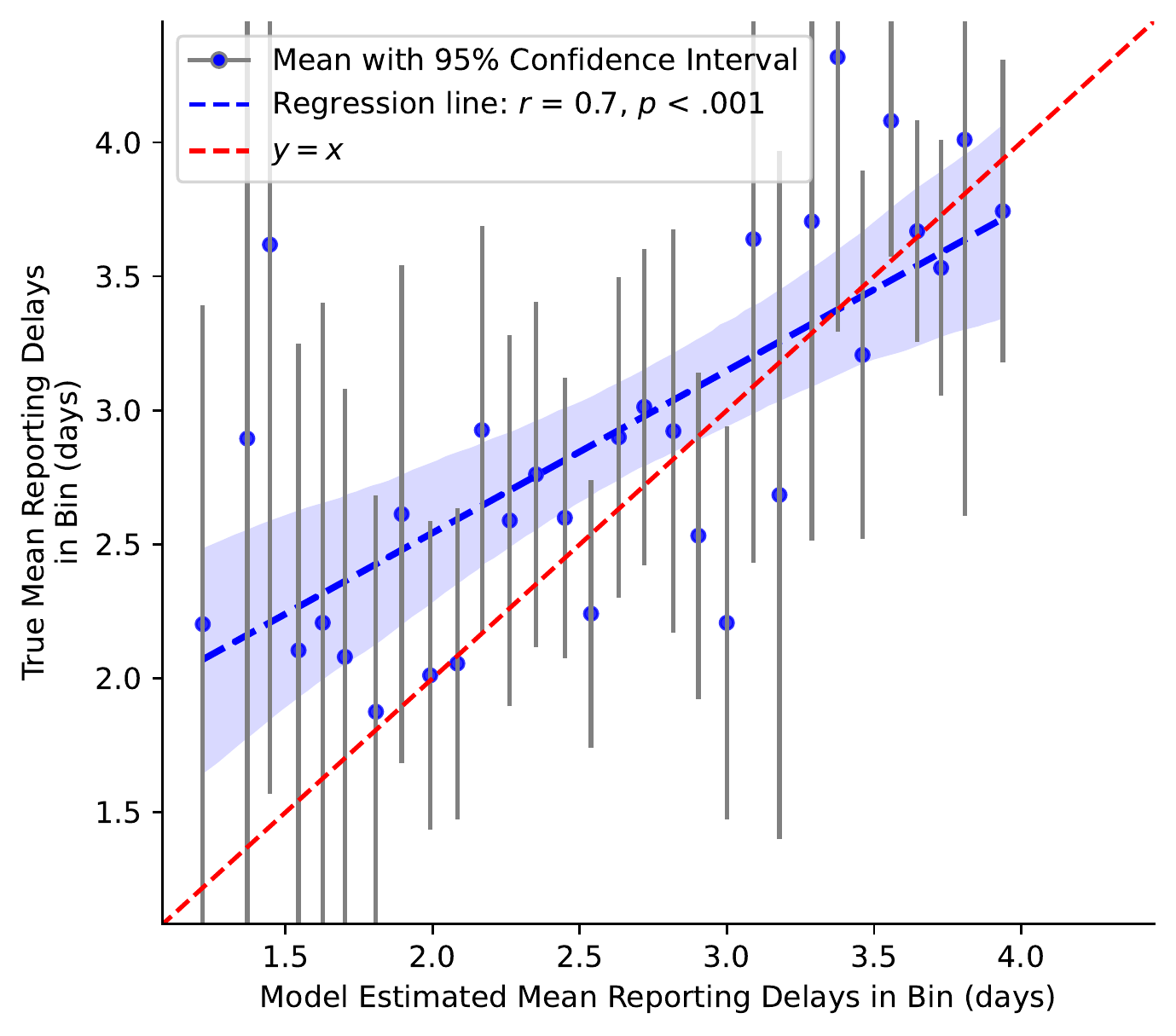}
		\label{fig:ida8}
	}
 	\hfill
  	\subfloat[][Filter for reports before 12 PM on 9/8/2021.]{
		\includegraphics[width=.45\textwidth]{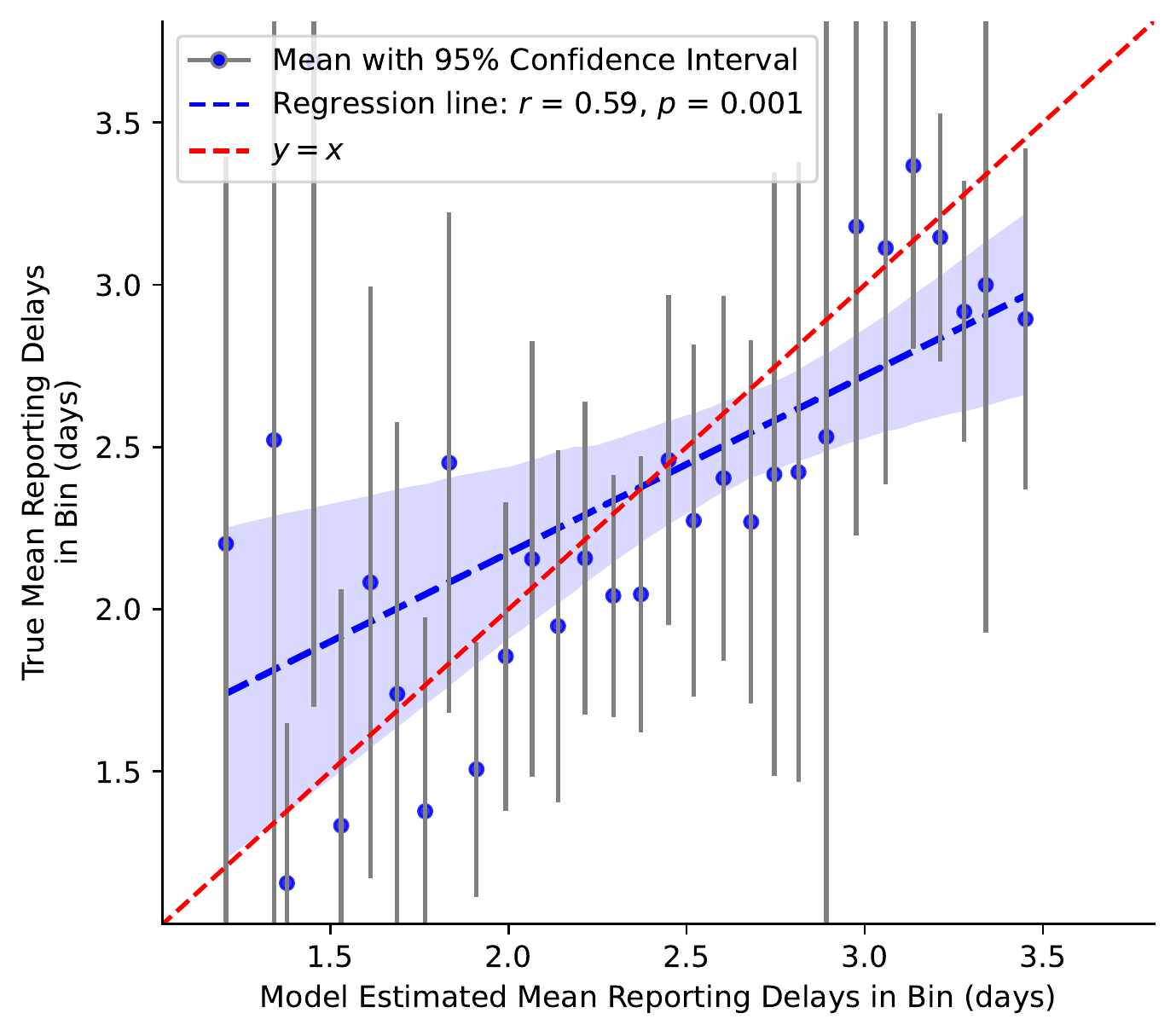}
		\label{fig:ida7}
	}
 	\subfloat[][Filter for reports before 12 PM on 9/7/2021.]{
		\includegraphics[width=.45\textwidth]{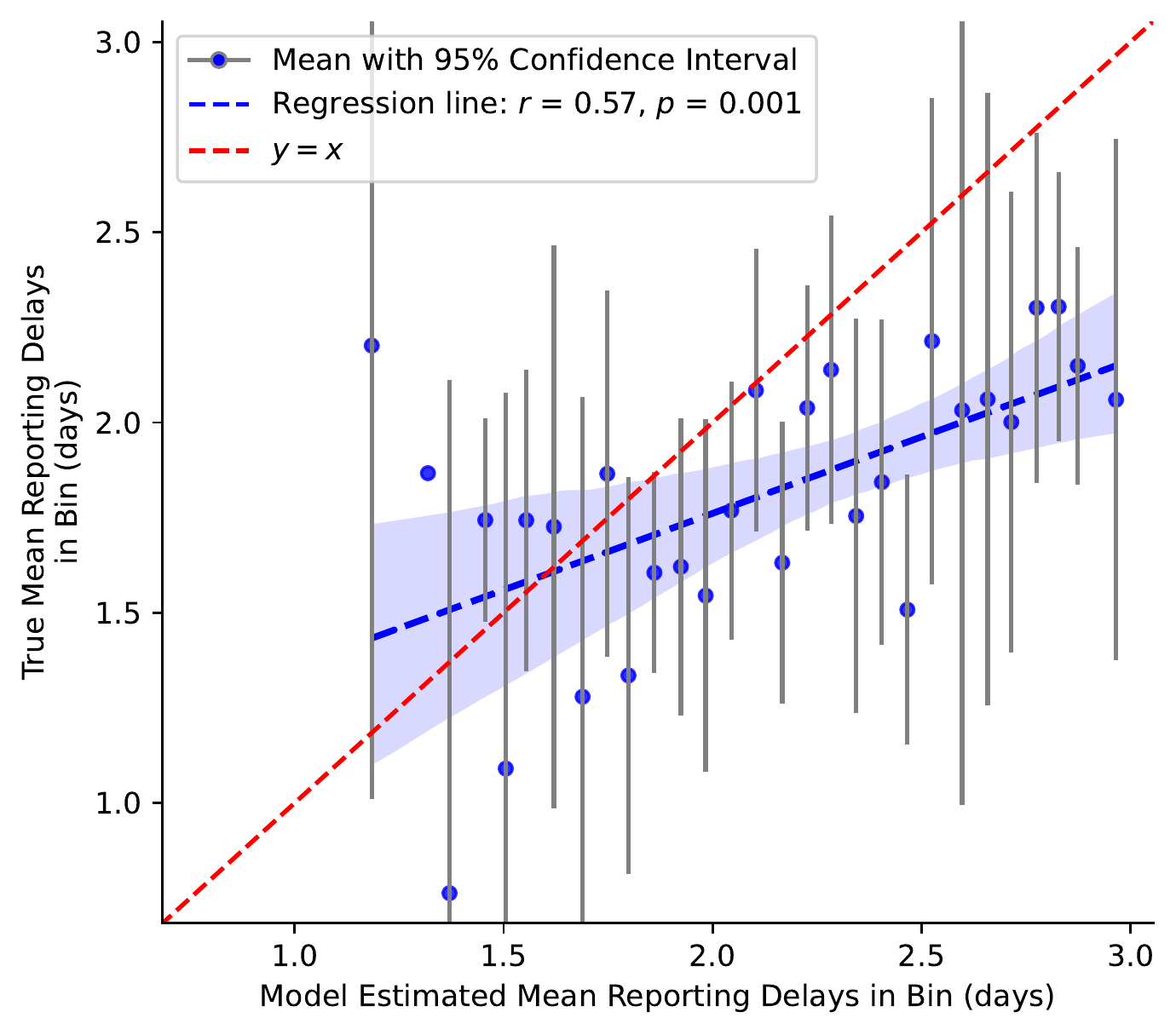}
		\label{fig:ida6}
	}
 \hfill
	\caption{Replication of \Cref{fig:isaiasdelay} using data after Tropical Storm Ida, with different end date specifications.}
    \label{fig:ida}
\end{figure}

\begin{figure}[tb]
	\centering
	\subfloat[][Histogram of reporting delay after Isaias.]{
		\includegraphics[width=.45\textwidth]{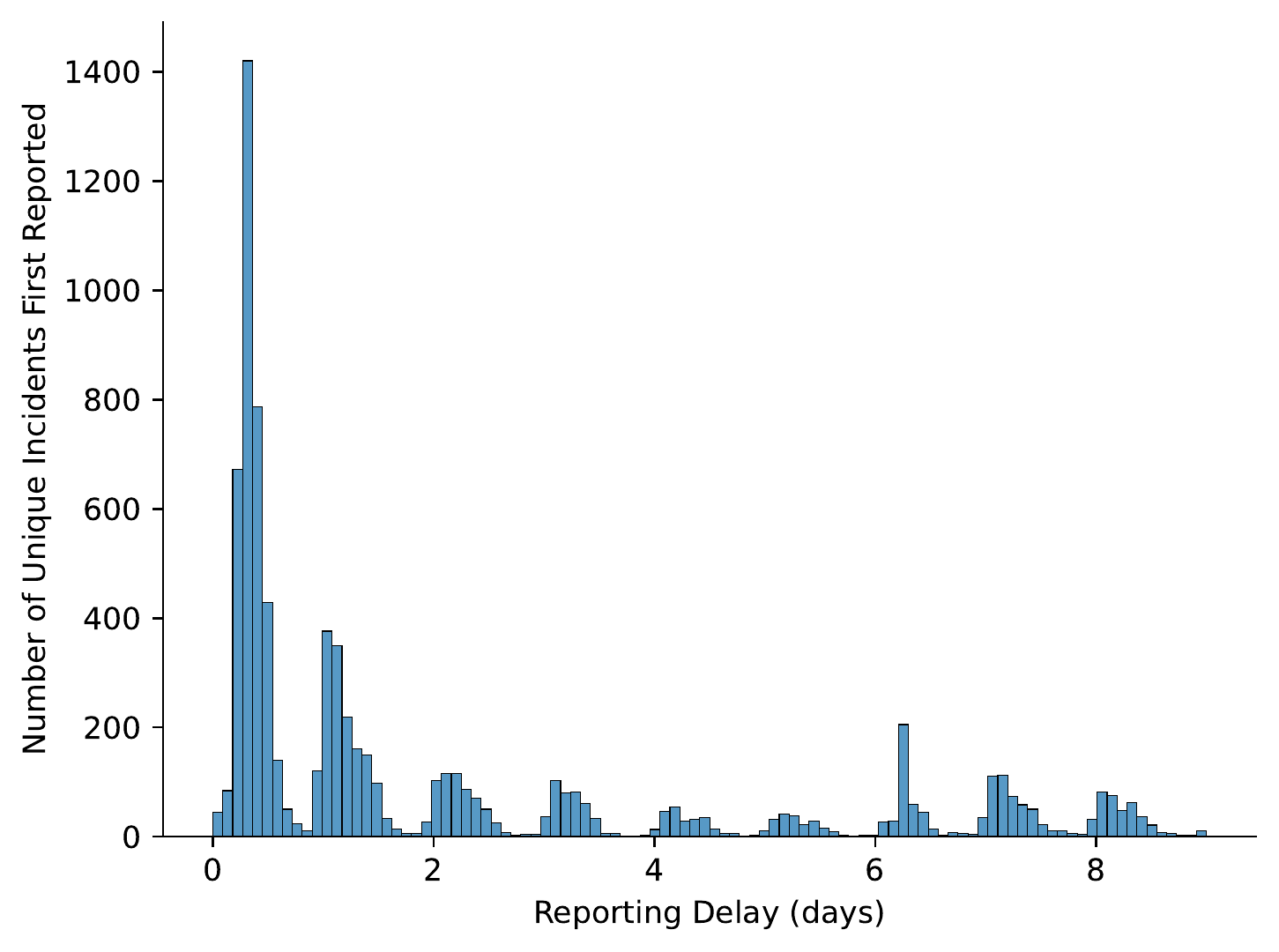}
		\label{fig:isaiashist}
	}
	\subfloat[][Histogram of reporting delay after Ida.]{
		\includegraphics[width=.45\textwidth]{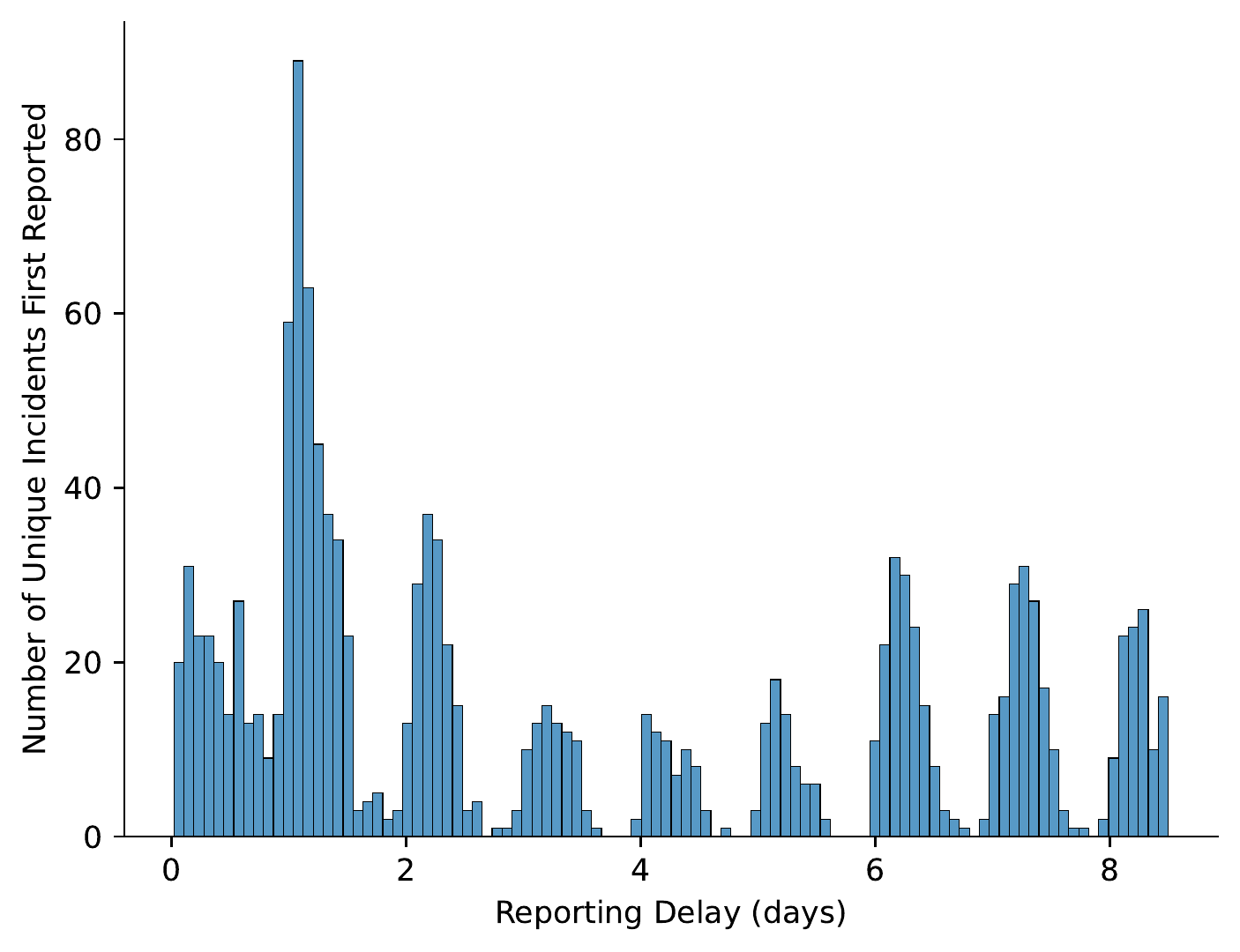}
		\label{fig:idahist}
	}
 	\hfill
	\caption{Histogram of reporting delays of unique incidents after Isaias and Ida. After each hurricane, we immediately observe a surge in reported incidents.}
    \label{fig:histdelayhurricane}
\end{figure}

\subsubsection{Supplementary information on using a training set starting from the second report}
\label{app:oosmodel}

In \Cref{sec:validation_hurricanes} and in \Cref{app:hurricane}, the predictions were generated using the results of the spatial model we presented in the main text. The training set used to obtain such results was constructed by defining the time of the first report as the start of our observation period, as detailed in \Cref{app:nycpreprocessing}.

However, our theoretical analysis also holds if we instead use as the Poisson interval the time starting after the \textit{second} report --  if we split a Poisson process on the time of the first jump, and start counting subsequent jumps at that time, the resulting counting process would still be a Poisson process with the same rate. Thus, we can estimate the Poisson process rate using our methods, starting with the second report time.\footnote{Formally, consider a Poisson process with rate $\lambda$, denoted as $\{X(t), t\ge 0\}$, with the first jump time denoted as $T_1$. Based on the memoryless property of Poisson process jump times, the process $\{X(t)-1, t\ge T_1\}$ is still a Poisson process with the same rate parameter (see e.g., \cite{resnick1992adventures}). Thus in our context, for an incident with type $\theta$, if we treat the first report as the start of a Poisson process, treat the second report as equivalent to the first report, and count duplicates starting at the third report, this reporting process is still a Poisson process with rate $\lambda_\theta$, but now the start of this process is observed.} However, crucially, we as researchers know for the incident the \textit{time between the first and second report}, which was not used to train the model (but which the model is trying to predict). Intuitively, this procedure evaluates our methods in a setting in which the ground truth is known.

Formalizing this idea, we further validate our empirical approach as follows. We first construct a training set from the same raw public data introduced in \Cref{app:nycpublic}, with the time of the second report as the start of our observation period, and the end of our observation period defined analogous to \Cref{eq:duration}, but with 20, 30, and 50 days as the maximum duration.\footnote{We note that in \Cref{thm:stoppingtimes}, we require the end of the observation period to be before incident death time. The maximum duration is thus a design choice to alleviate the effect of the incident being resolved before an inspection or a work order. Using the time of the second report as the start of the observation interval requires us to set the maximum duration at a smaller value compared to using the time of the first report, in order to have the equivalent effect. Furthermore, incidents with two or more reports are likely incidents with higher than average urgency, and thus more likely to be resolved than outside sources, which further justifies the choice of shorter maximum duration.} We then use this alternative training set to estimate the coefficients for the set of \textbf{Base} covariates, and then use these coefficients to estimate a reporting delay for each incident. The reason we only included the \textbf{Base} covariates is that the training set constructed in this way only retains 16,460 incidents, which makes it hard to estimate the effect of high-dimensional covariates, such as the census tract spatial variables. Finally, we compare these estimates to the observed time between the first and second reports for each incident.

On the individual incident level, our estimates and the true delays between the first and second reports are significantly correlated (Pearson correlation $r = 0.28, 0.28, 0.29$ for a maximum duration of 20 days, 30 days, and 50 days, respectively, all with $p<.001$). \Cref{fig:oosmodeldelays} showcase a comparison of the means when all incidents are binned according to their model-estimated delays. We note that though the choice of the maximum duration of observation interval affects the specific model predictions, all three choices produced reasonable estimates that are close to the true delays on the group level. (We use a smaller maximum duration as these incidents are on average higher risk, and we only start counting after the second report). 

\begin{figure}[tb]
	\centering
	\subfloat[][Maximum duration 20 days.]{
		\includegraphics[width=.32\textwidth]{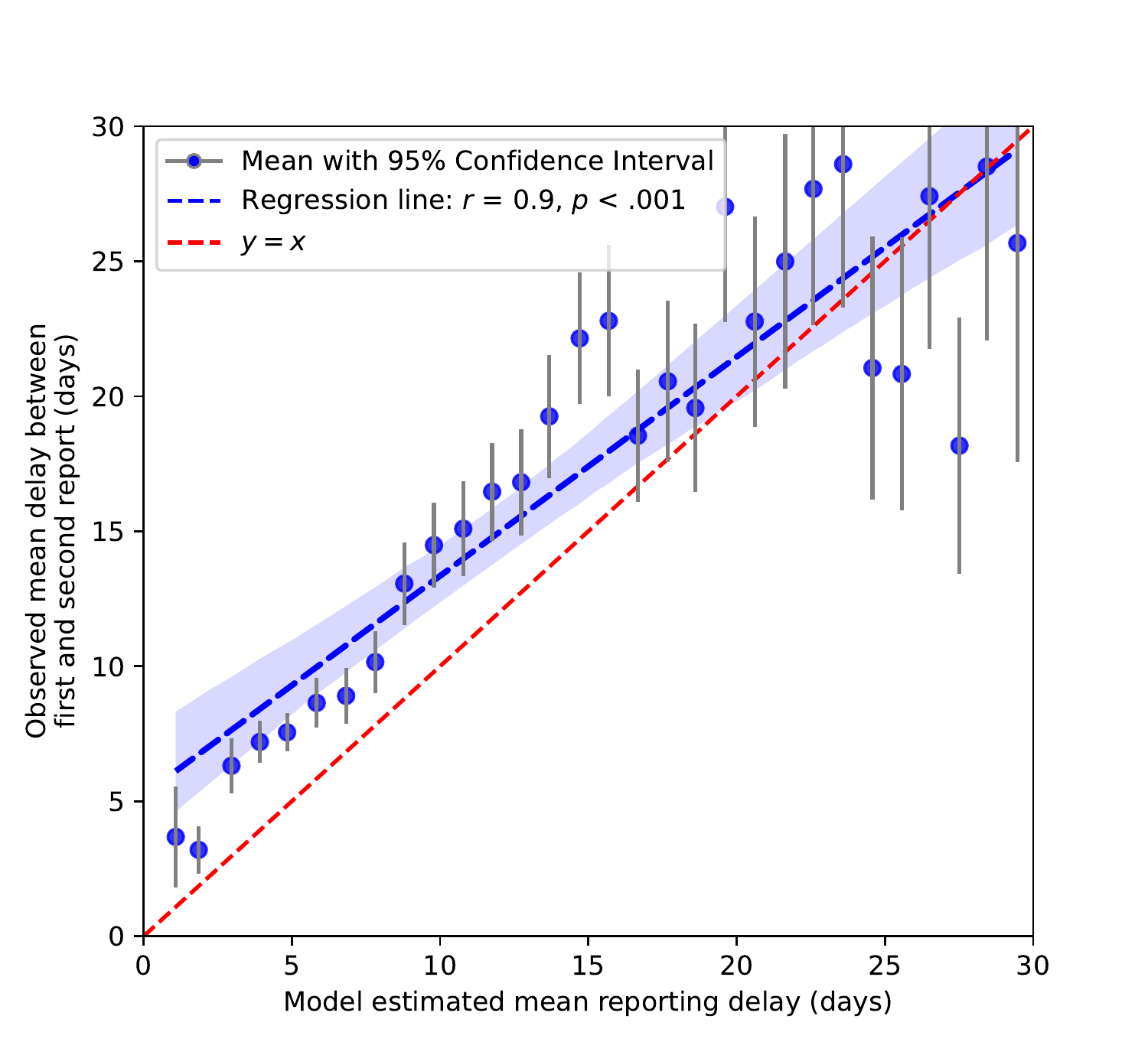}
		\label{fig:20daysoosmodel}
	}
	\subfloat[][Maximum duration 30 days.]{
		\includegraphics[width=.32\textwidth]{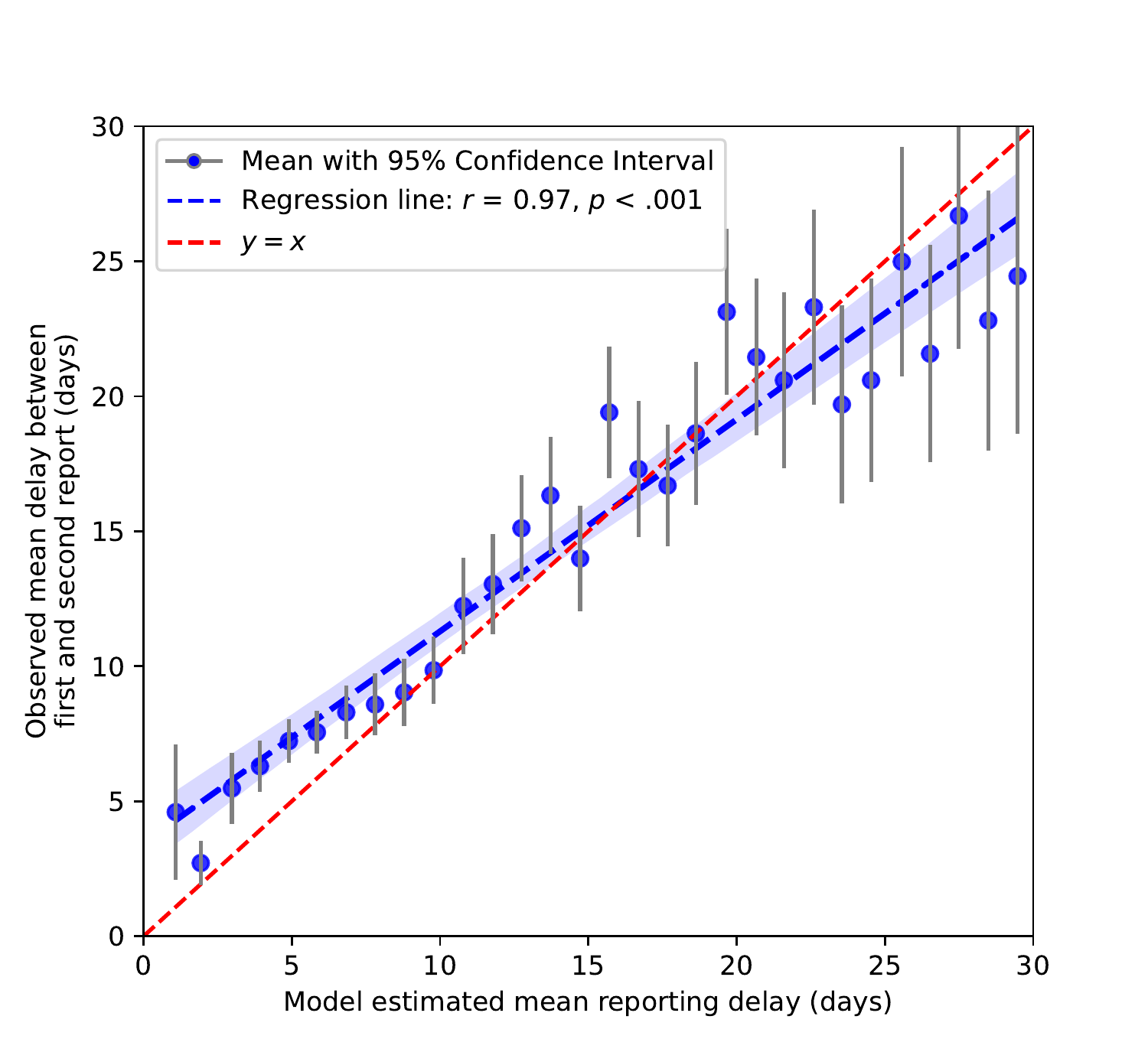}
		\label{fig:30daysoosmodel}
	}
 \subfloat[][Maximum duration 50 days.]{
		\includegraphics[width=.32\textwidth]{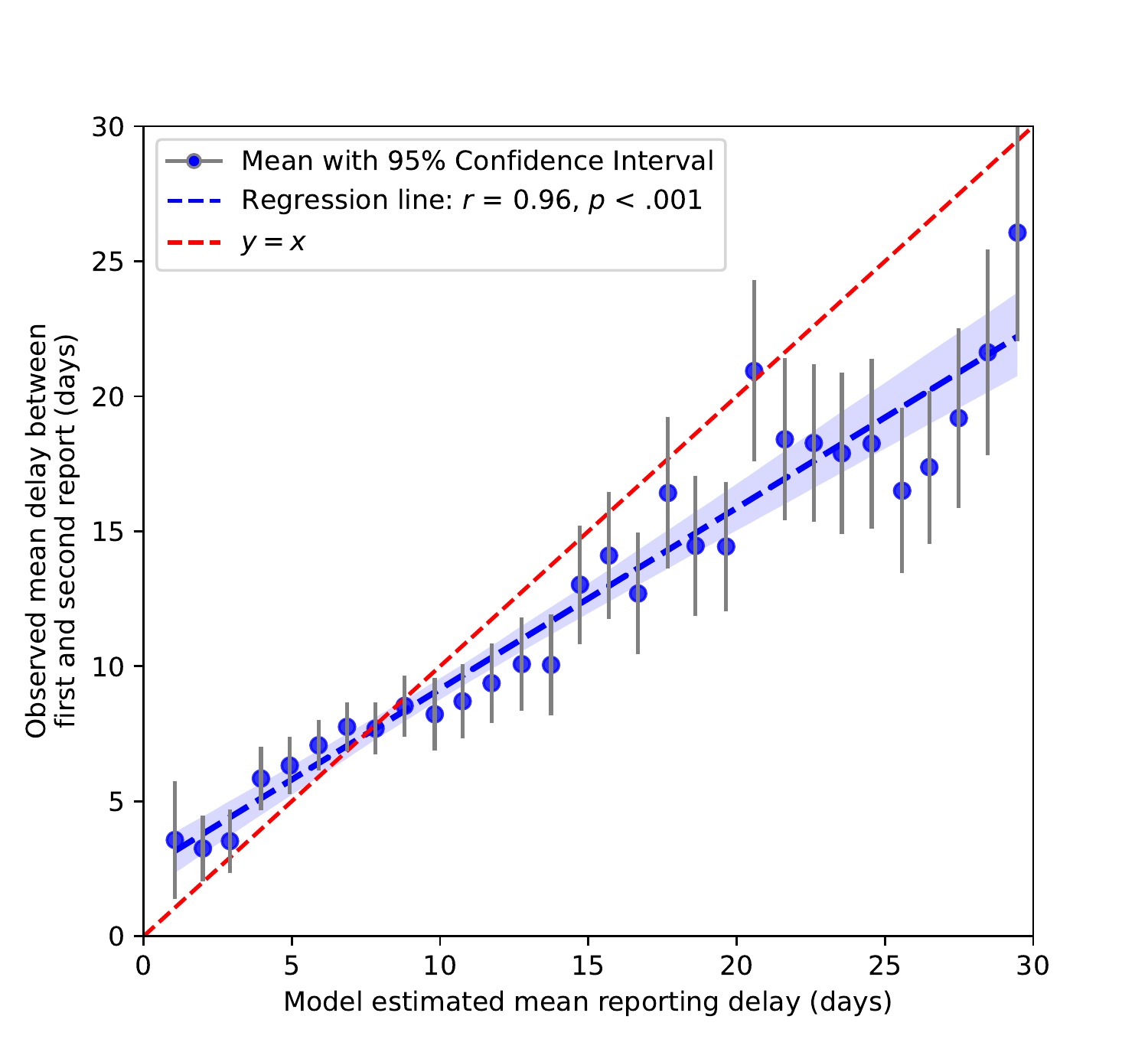}
		\label{fig:50daysoosmodel}
	}
 	\hfill
	\caption{Comparison of observed delays between first and second reports and model-estimated reporting delays, which are estimated using a model trained with data constructed by defining the start of the observation interval as the time of the second report, and the end as in \Cref{eq:duration}, but with maximum duration set as 20 days, 30 days, and 50 days, respectively. All incidents are then categorized into 30 bins based on their model-estimated reporting delays, and we calculate the means of true and estimated delays within each bin.}
    \label{fig:oosmodeldelays}
\end{figure}

\subsubsection{Supplementary validation of census tract socioeconomic coefficients estimates}
\label{app:voterparticipation}

The disparities in reporting behavior along socioeconomic variables highlight \textit{individual-level} behavioral heterogeneity in resident crowdsourcing, and civic engagement at large. The level of civic engagement can be measured by many other means, chief among which is participation in political voting. In this section, we validate our coefficient estimates on socioeconomic variables using this idea. 

Voter-level public data on participation in voting is generally scarce in New York; however, the NYC Campaign Finance Board published a dataset containing participation scores of more than 4 million voters calculated based on their voting history from 2008 up to 2018, along with the census tract in which they resided in 2010.\footnote{\url{https://data.cityofnewyork.us/City-Government/Voter-Analysis-2008-2018/psx2-aqx3}} We calculate cumulative association scores (the same analysis as in the new main text \Cref{fig:spatialheterogproj}\footnote{Since we are focusing on individual-level behavior, for that figure and this analysis we do not use the density variable, as that does not reflect individual-level characteristics, and only use the remaining five: Median age, Fraction white, Fraction college degree, Fraction renter, log of per capita income}) for each 2010 census tract, using the corresponding census data. We compare these scores with the average voter participation score in each tract. At the census tract level, our model-estimated mean engagement score is significantly correlated with the mean voter participation score (Pearson $r = 0.33$, $p<.001$). \Cref{fig:voterparticipation} shows the full relationship, between binned cumulative association scores and voter participation rates. A significant positive correlation shows that our estimated disparities in 311 system usage relate to other forms of civic participation and representation.

\begin{figure}
    \centering
    \includegraphics[width = 0.5\textwidth]{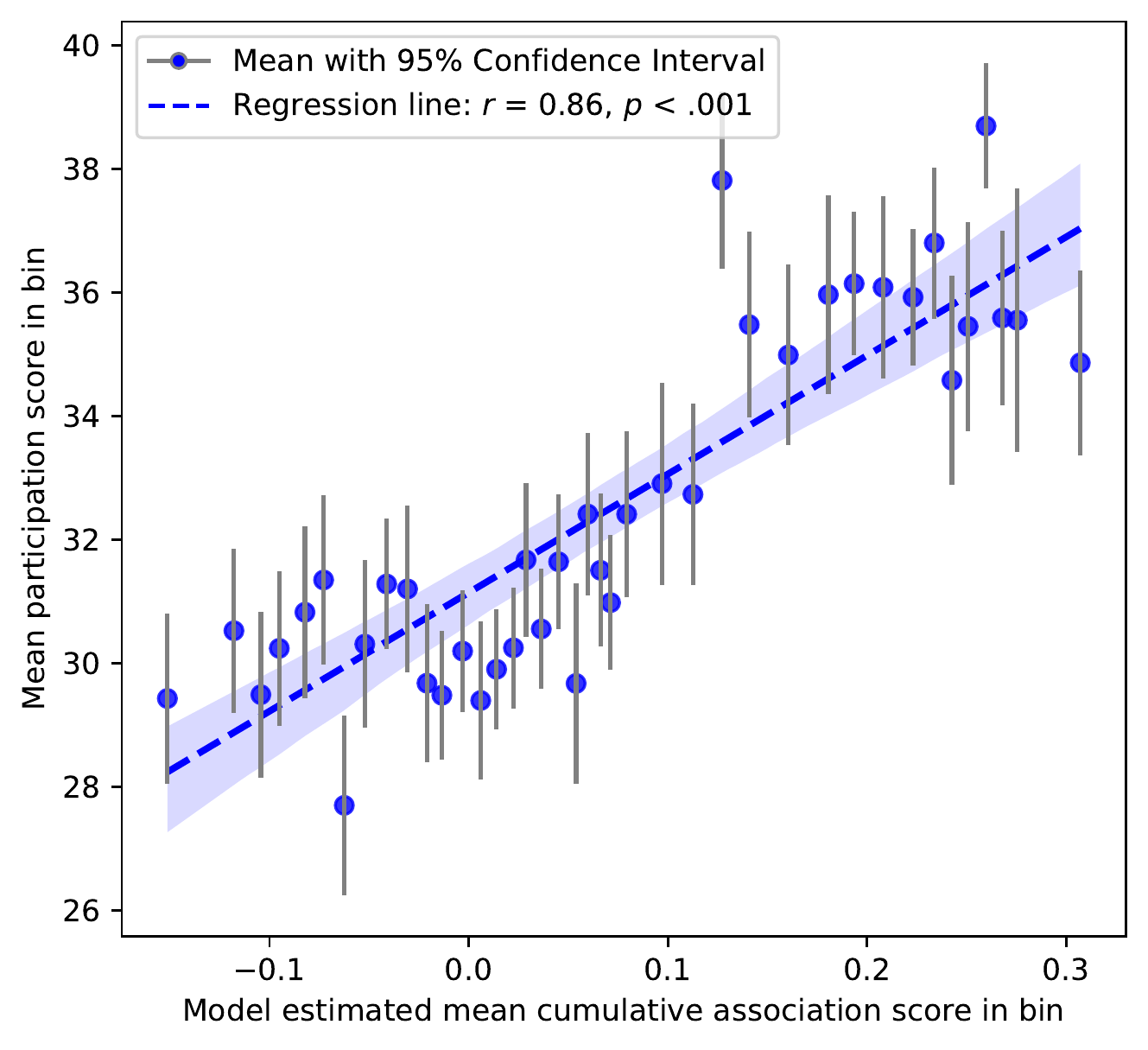}
    \caption{Relationship between voter participation rates in a census tract, and our model estimated cumulative association of socioeconomic variables with reporting rate. A significant positive correlation between these two measures affirms that our model estimates are uncovering inequities in reporting as a type of civic engagement.}
    \label{fig:voterparticipation}
\end{figure}

\FloatBarrier
 \FloatBarrier
\subsection{Additional information for comparative delay analysis}
\label{app:alternativeimpactanalyis}

Similar to \Cref{fig:delaysall_raw}, we present results with alternative analysis choices: imputing missing values (with infinite delays), other risk levels, analyzing by request category instead of risk prioritization level, and analyzing all incidents instead of the subset that was inspected or worked on; results are qualitatively similar.

For each unique incident, inspection delays are measured as the time between the first inspection (if it was inspected) and the first service request for that incident; work order delays are measured as the time between actually finishing the work order (if it was completed) and first inspection; reporting delays are \textit{estimated} -- for each incident, we estimate the mean of an Exponential random variable with reporting rate our model estimates for an incident of the given characteristics. In this plot, we only consider incidents that are inspected, and ignore missing values; in each Borough over 89\% of such high-risk inspected incidents eventually have a completed work order.

Similar to \Cref{fig:delaysall_equity}, we present results with other risk prioritization groups.

  \begin{figure}[bth]
  \vspace{-0.2cm}
 	\centering
 		\includegraphics[width=.48\textwidth]{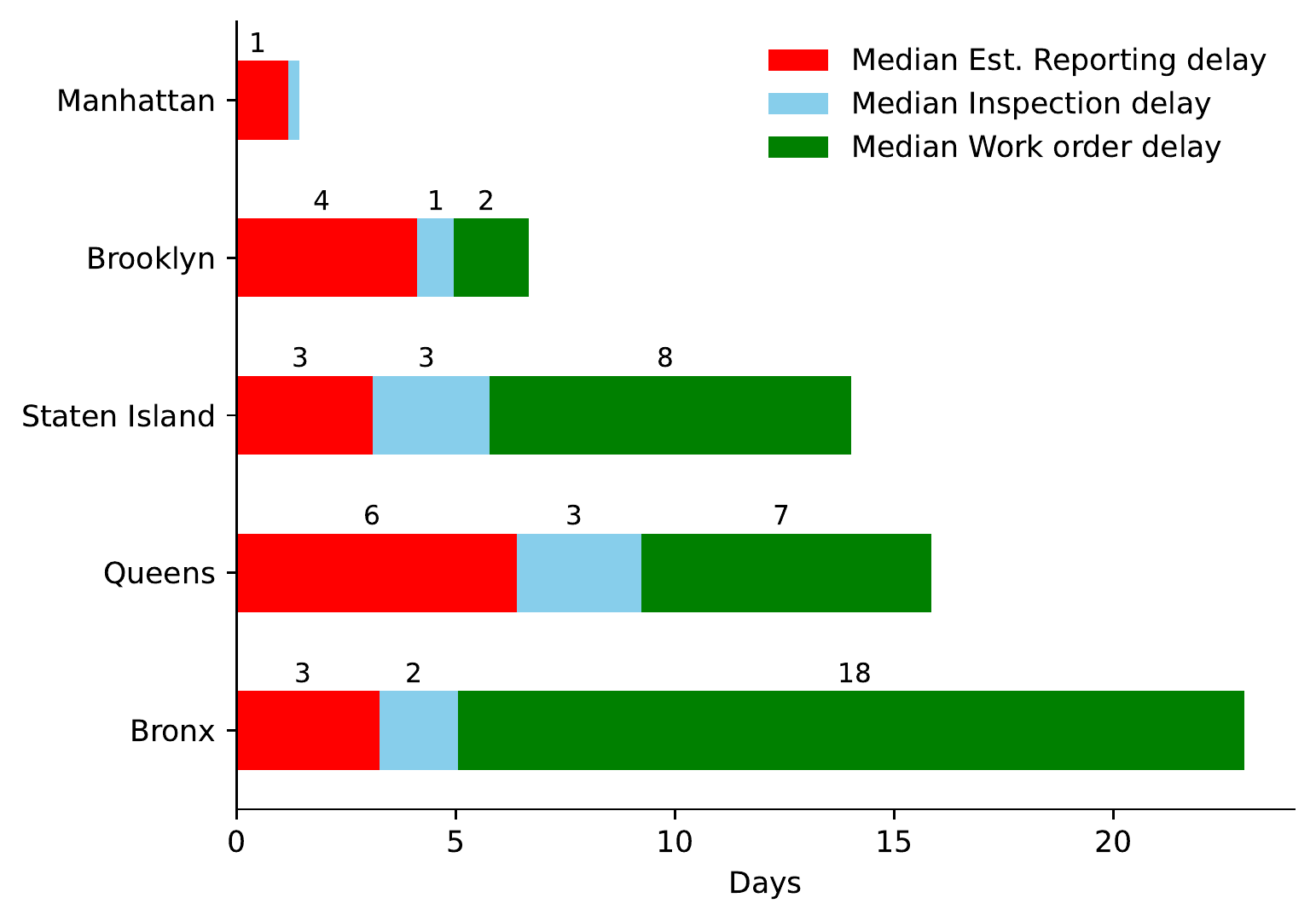}
 	\caption{Same as \Cref{fig:delaysall} (highest risk prioritization \textit{A}) except with incomplete work orders imputed as having infinite delays}
  \label{fig:riskAImputed}
 \end{figure}


  \begin{figure}[bth]
  \vspace{-0.2cm}
 	\centering
   \subfloat[][Risk prioritization \textit{B}]{
 		\includegraphics[width=.45\textwidth]{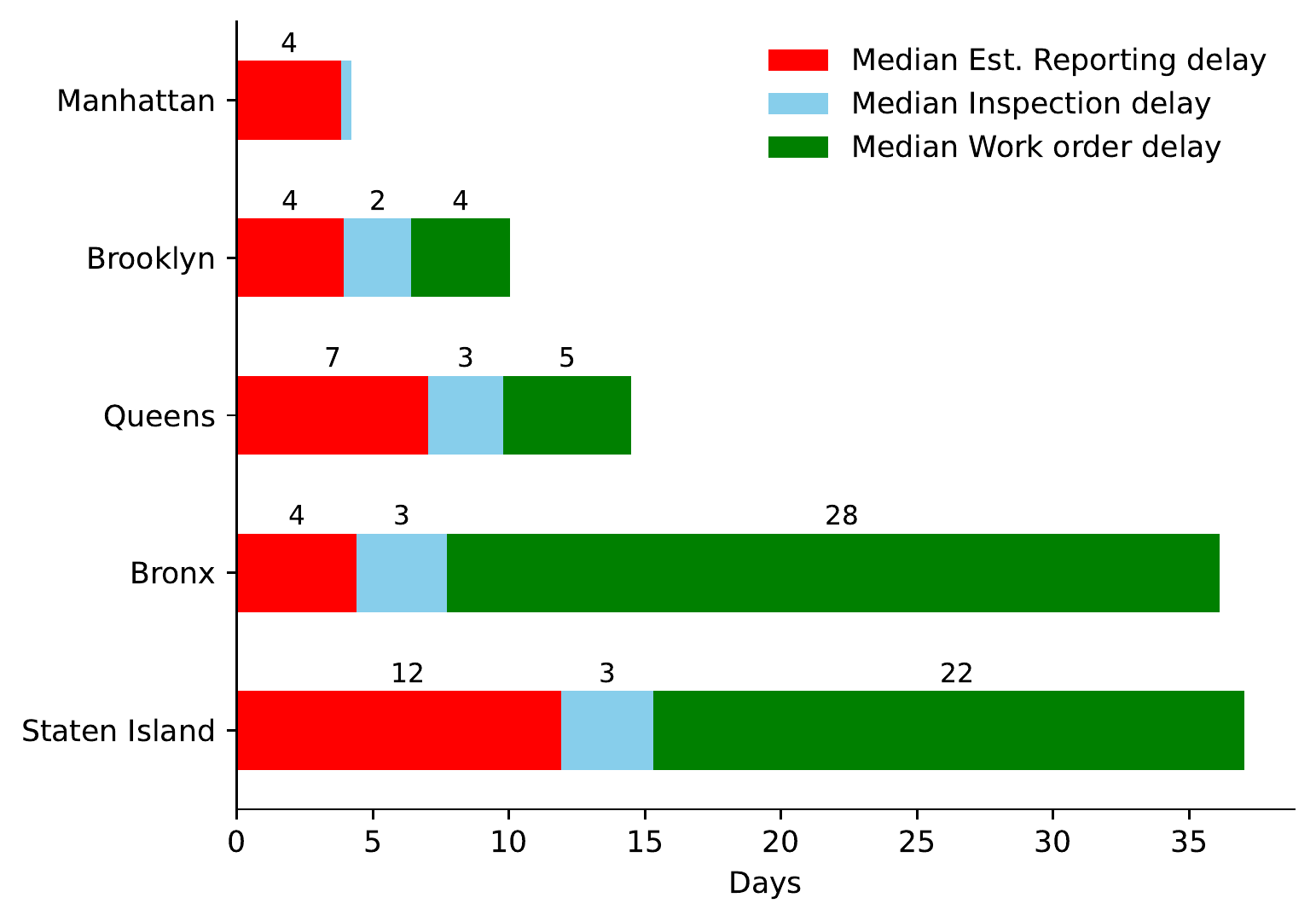}
   }
      \subfloat[][Risk prioritization \textit{C}]{
 		\includegraphics[width=.45\textwidth]{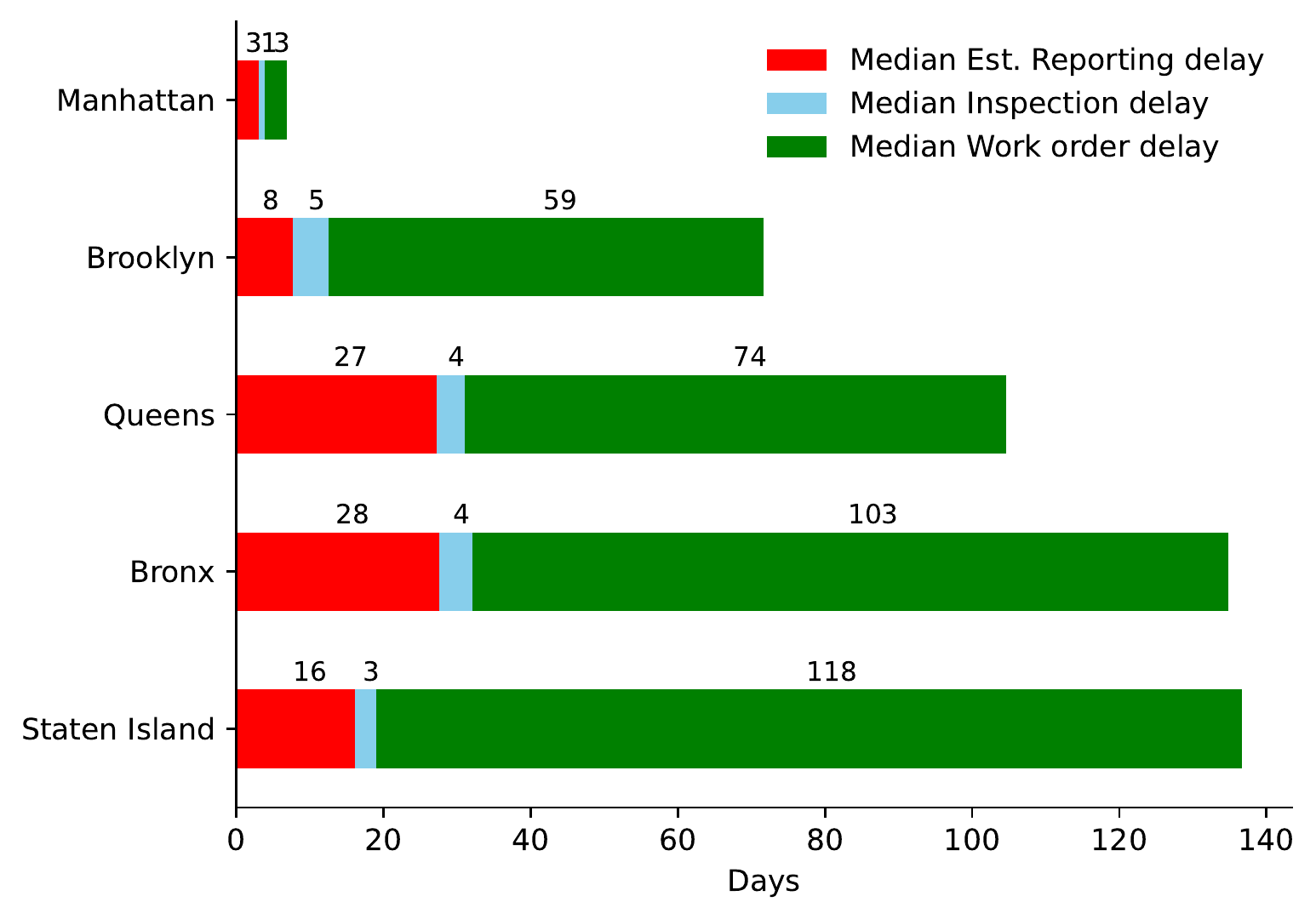}
   }
 	\caption{Same as \Cref{fig:delaysall} (highest risk prioritization, not imputed missing values) except for other risk prioritization levels. Risk prioritization group D is omitted due to a low percentage of incidents that eventually got worked on.}
 \end{figure}


  \begin{figure}[bth]
  \vspace{-0.2cm}
 	\centering
   \subfloat[][\textit{Hazard} delays]{
 		\includegraphics[width=.32\textwidth]{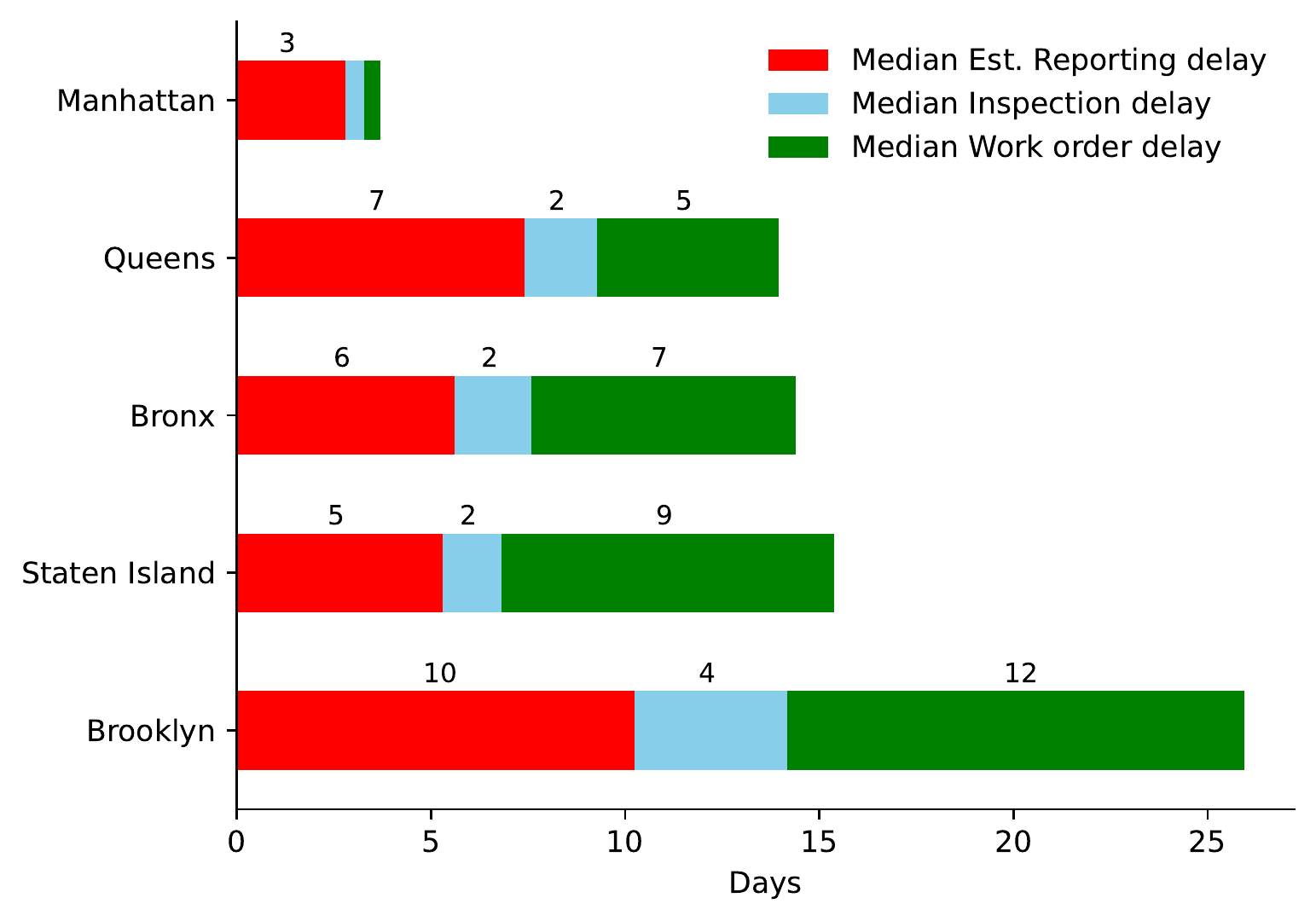}
   }
      \subfloat[][\textit{Hazard} fraction addressed]{
 		\includegraphics[width=.32\textwidth]{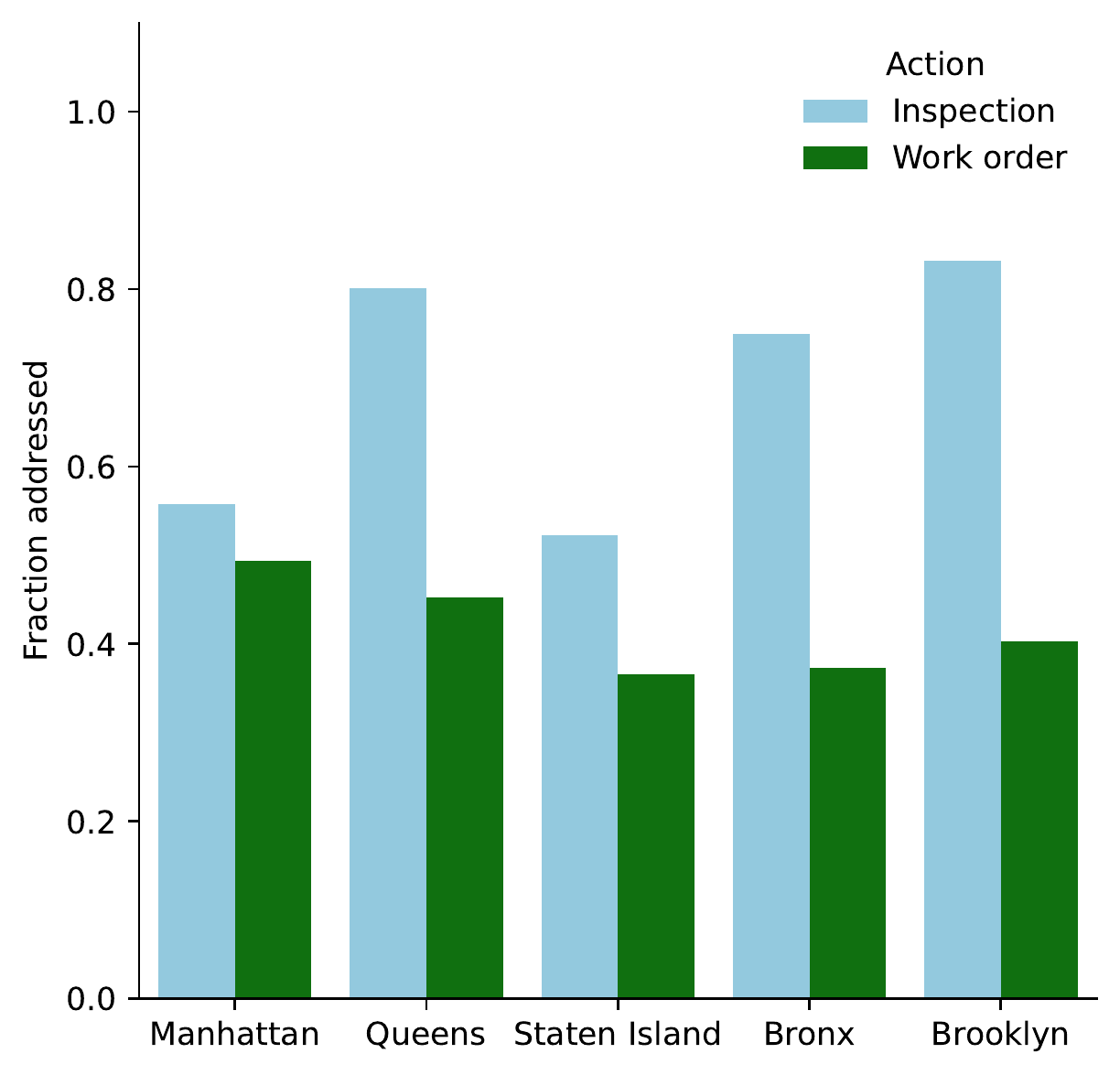}
   }
 	\caption{We repeat the analysis in \Cref{sec:conclusion} except, instead of separating out by \textit{Risk prioritization level}, we separately analyze each \textit{report-time Service Request Category} (here, we show the \textit{Hazard} category. Since we have this feature for \textit{all service requests} (not just those inspected, as in the case where we use the risk prioritization covariate), we can (1) estimate reporting delays for \textit{all} incidents after training a model using just report-time incidents (but on the inspected set, due to labeling of duplicates; and (2) we can also report the fraction of incidents of this category that were inspected and worked on, respectively, as we do in (b). We get qualitatively similar results as in \Cref{sec:conclusion}: the ordering of Borough end-to-end delays (as well as their individual parts) is about the same. The same Boroughs with higher work order delays just counting the incidents that were actually worked on (Staten Island, Bronx, Brooklyn) also have lower fractions of incidents that were addressed. Figure (a) does not impute missing values, since (b) reports the fraction inspected and worked on, respectively. }
  \label{fig:appendixHazarddelays}
 \end{figure}


   \begin{figure}[bth]
  \vspace{-0.2cm}
 	\centering
   \subfloat[][\textit{Illegal tree damage} delays]{
 		\includegraphics[width=.3\textwidth]{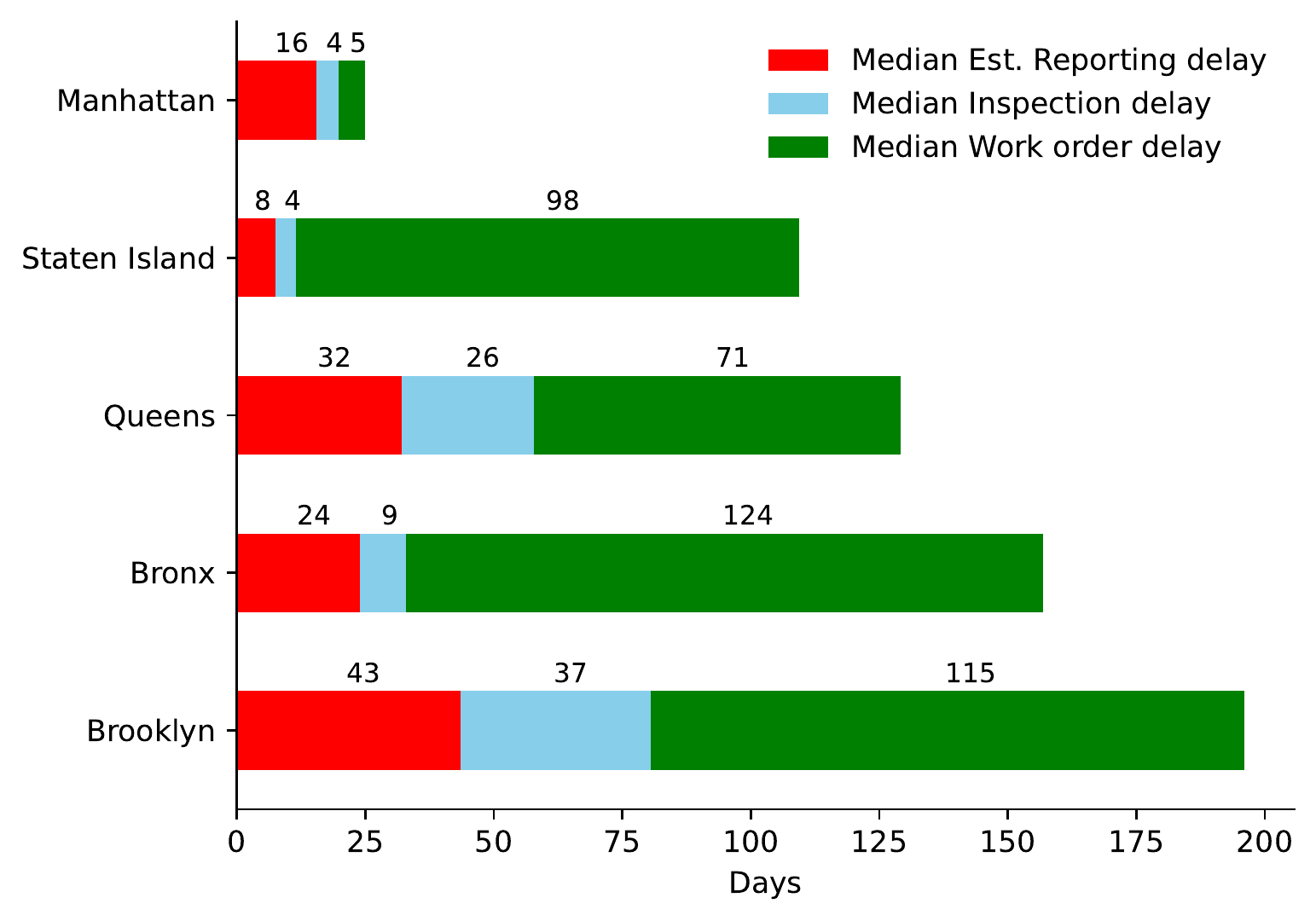}
   }
      \subfloat[][\textit{Illegal tree damage} fraction addressed]{
 		\includegraphics[width=.3\textwidth]{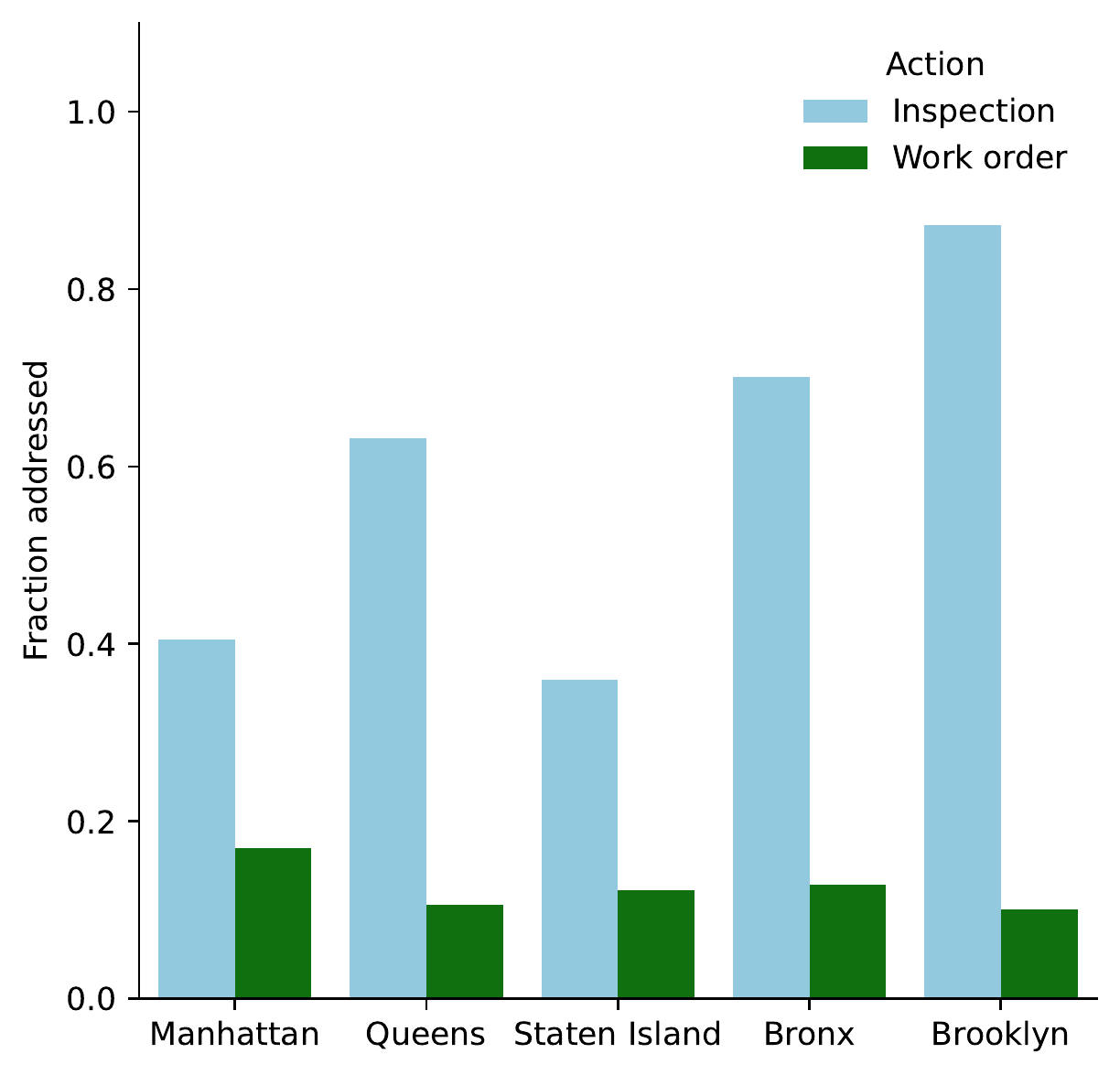}
   }

      \subfloat[][\textit{Prune} delays]{
 		\includegraphics[width=.3\textwidth]{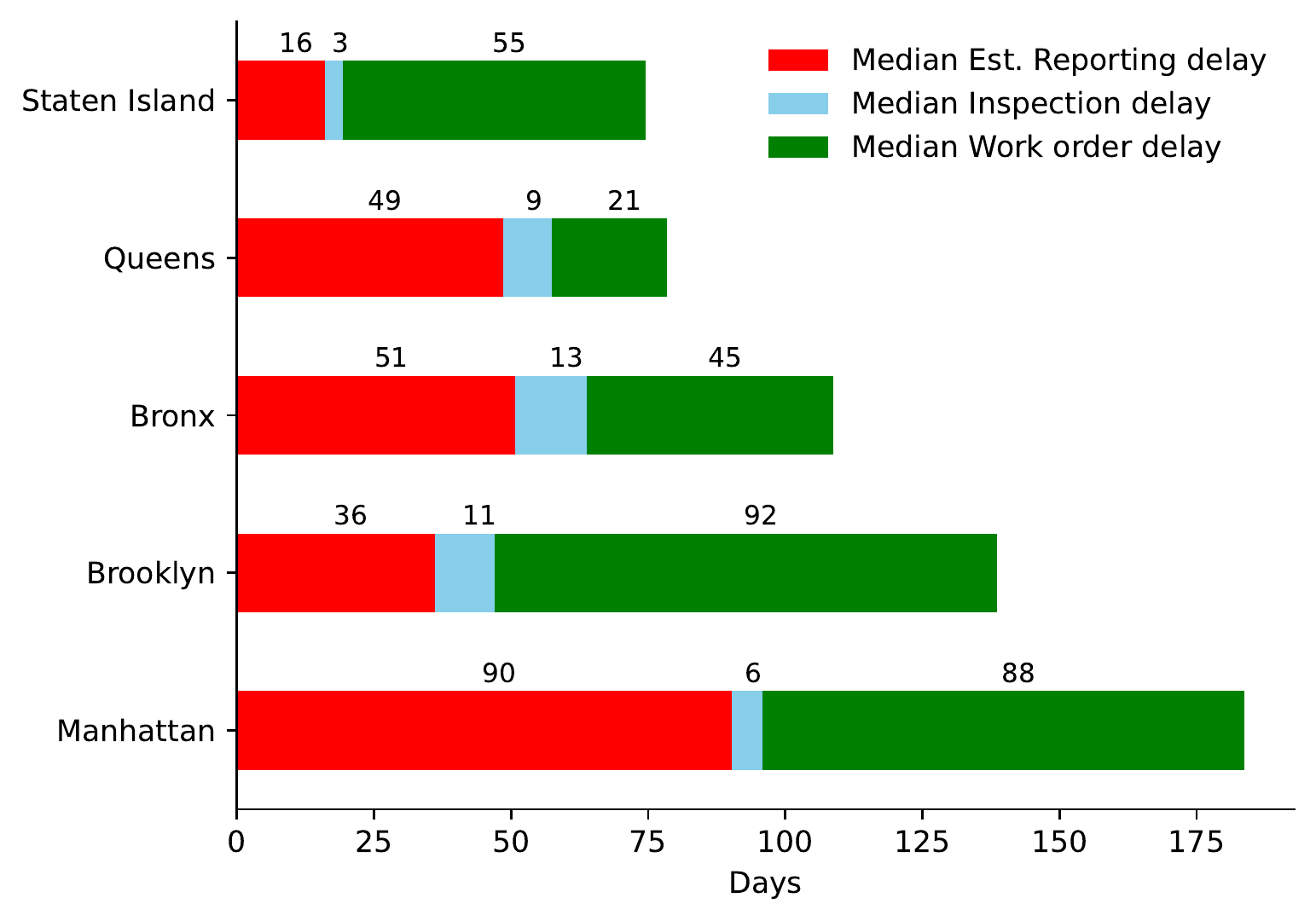}
   }
      \subfloat[][\textit{Prune} fraction addressed]{
 		\includegraphics[width=.3\textwidth]{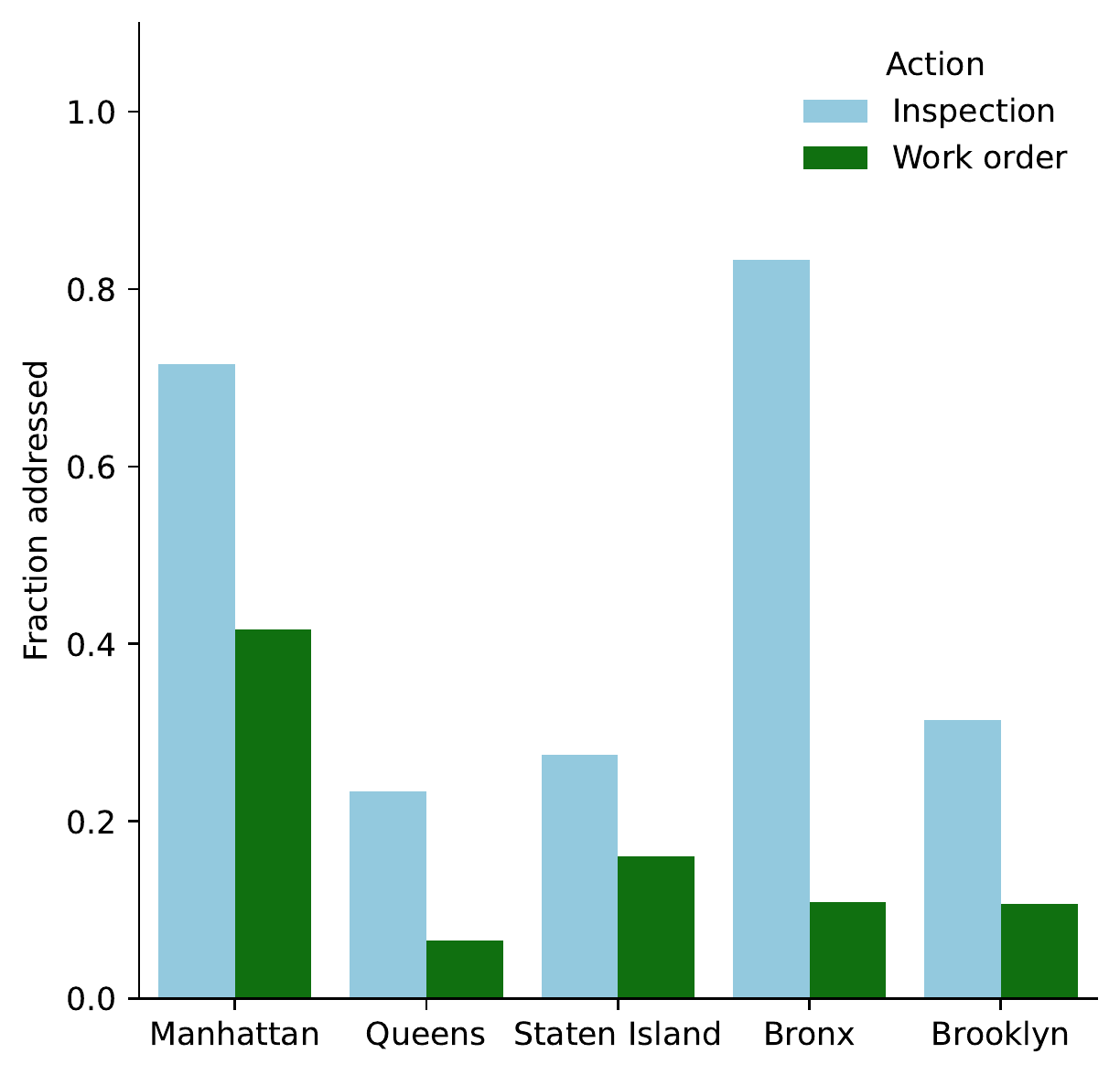}
   }

      \subfloat[][\textit{Remove Tree} delays]{
 		\includegraphics[width=.3\textwidth]{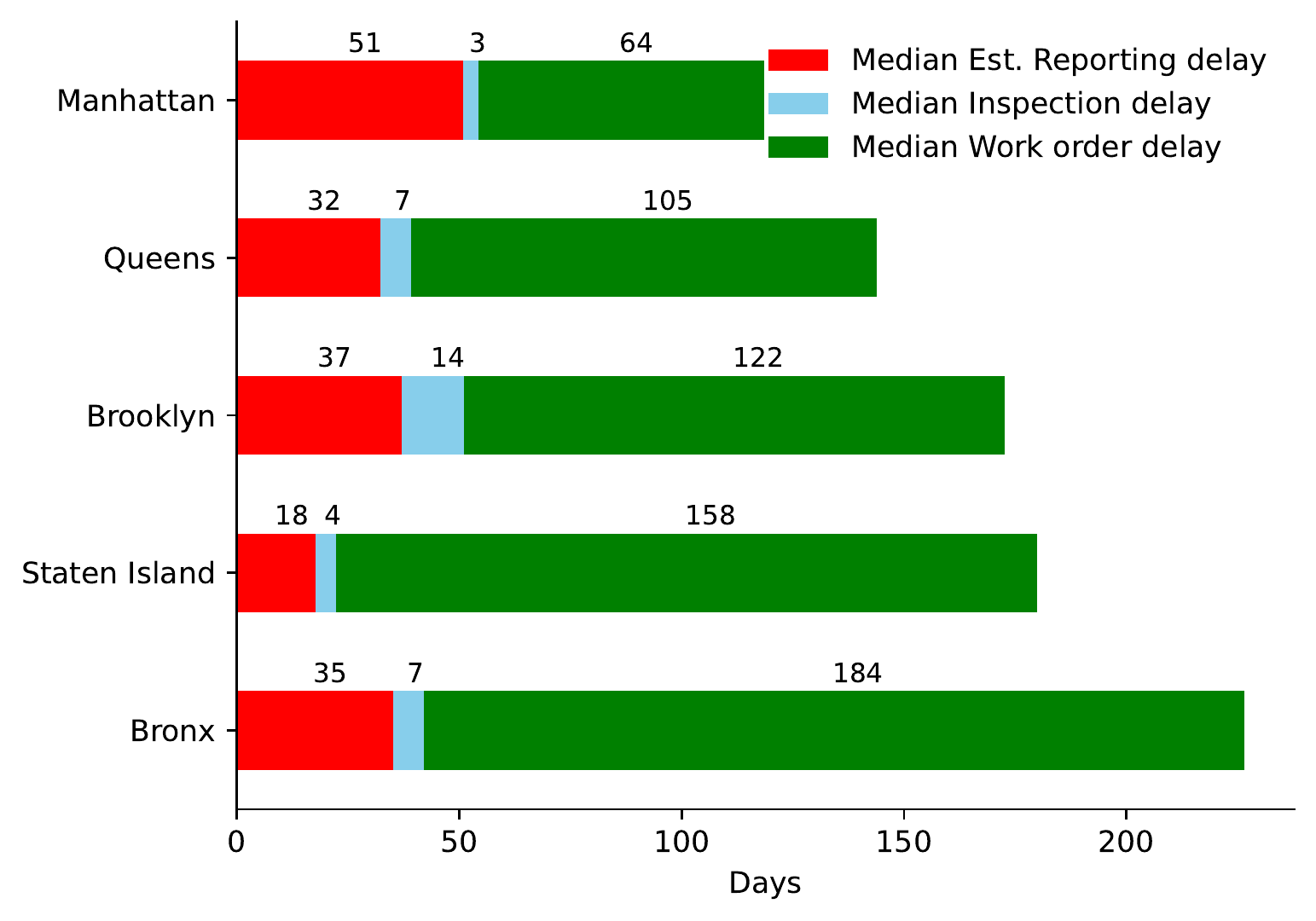}
   }
      \subfloat[][\textit{Remove Tree} fraction addressed]{
 		\includegraphics[width=.3\textwidth]{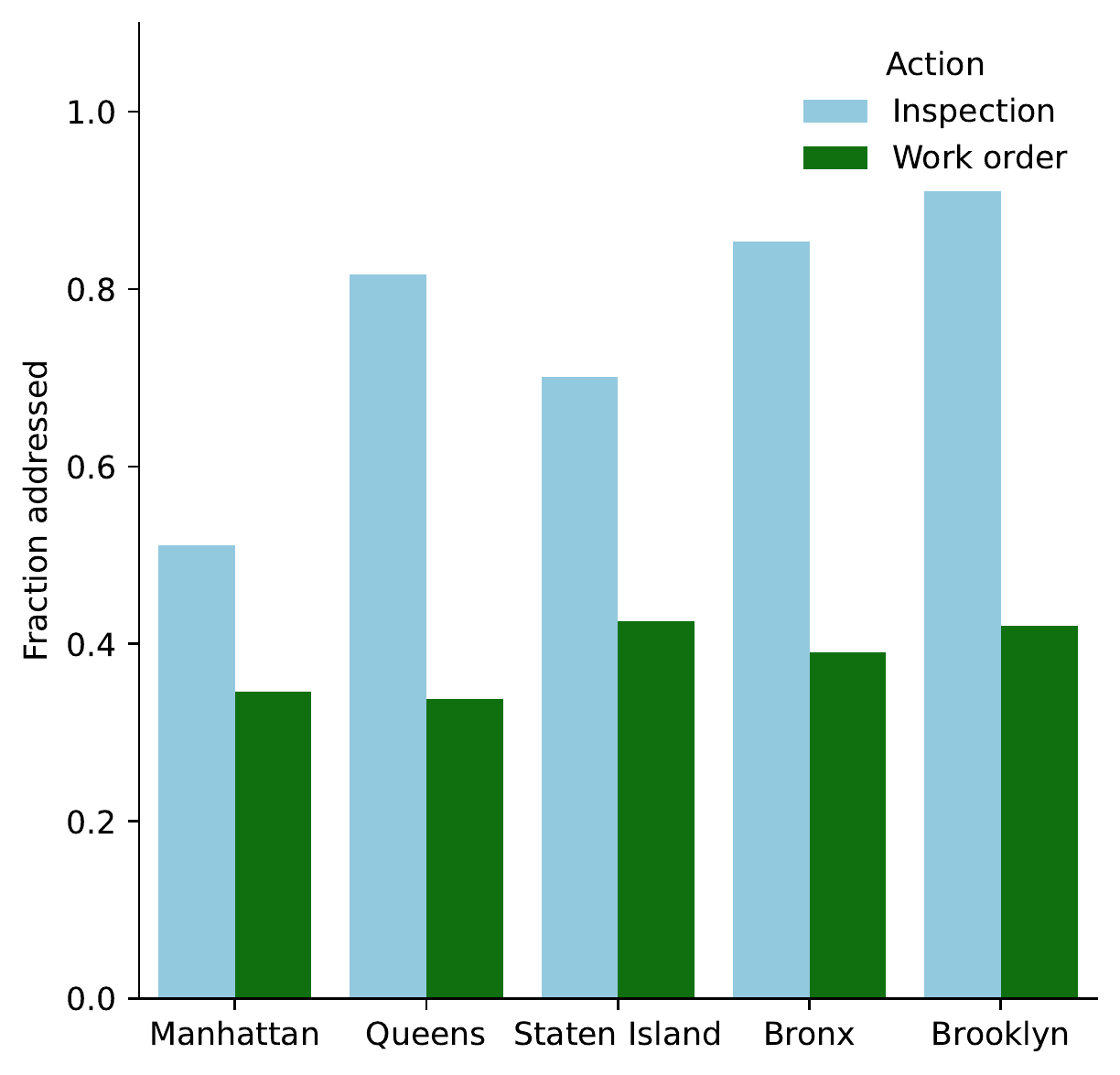}
   }

      \subfloat[][\textit{Root/Sewer/Sidewalk} delays]{
 		\includegraphics[width=.3\textwidth]{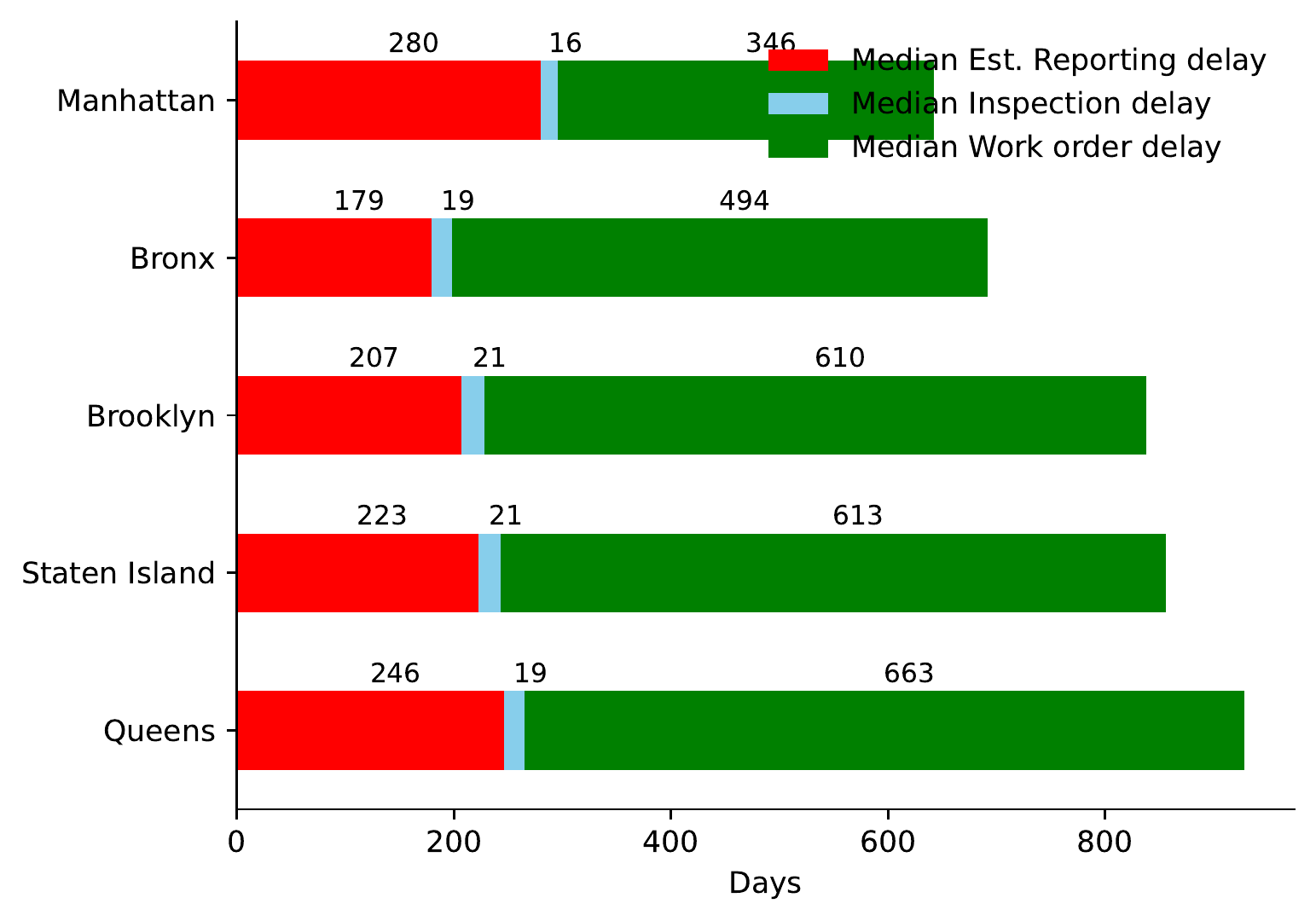}
   }
      \subfloat[][\textit{Root/Sewer/Sidewalk} fraction addressed]{
 		\includegraphics[width=.3\textwidth]{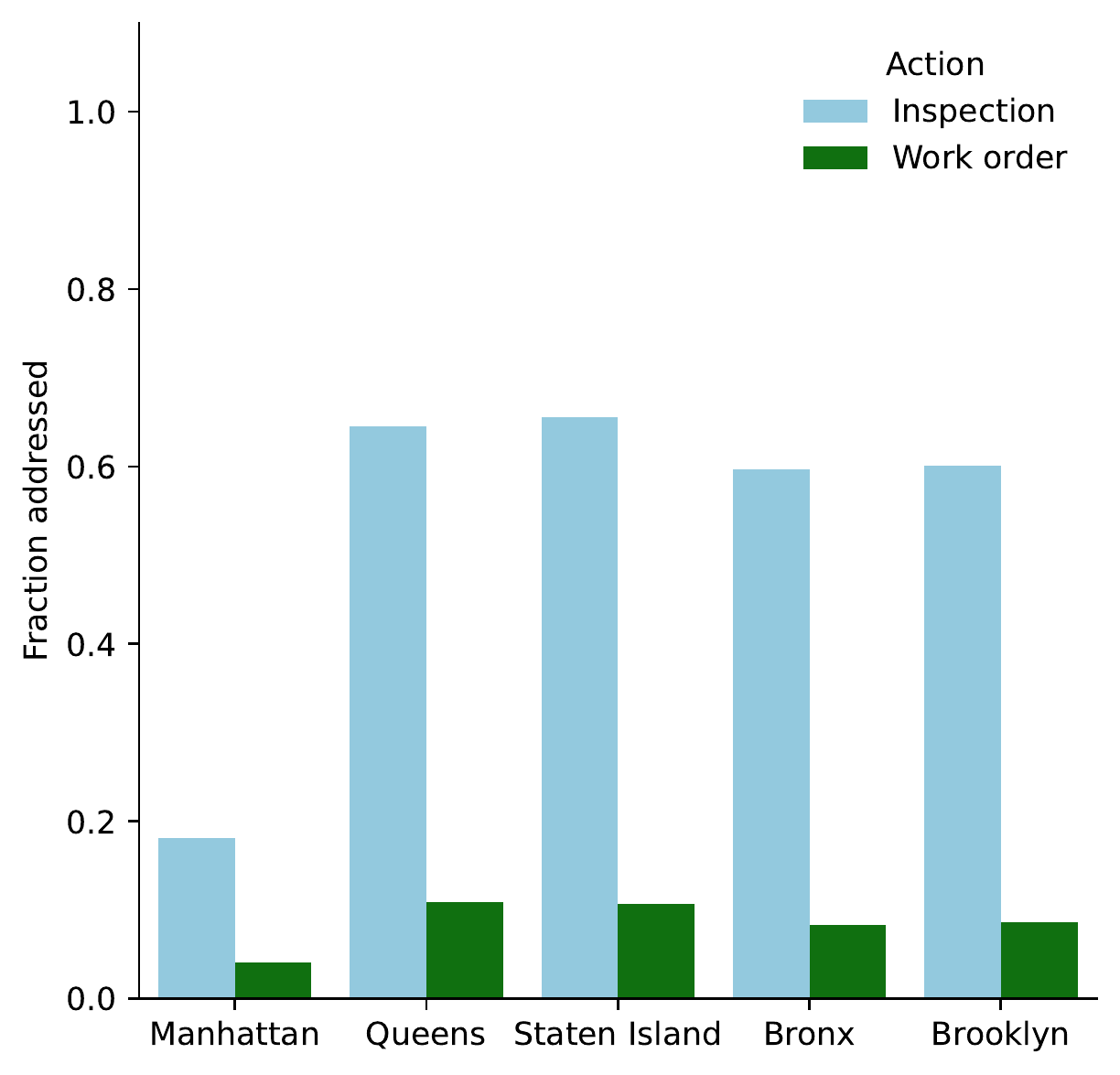}
   }
 	\caption{Same as \Cref{fig:appendixHazarddelays} with other \textit{Service Reqest Categories}.}
  \label{fig:appendixothercatsdelays}
 \end{figure}

\begin{figure}[bth]
  \vspace{-0.2cm}
 	\centering
   \subfloat[][Risk prioritization \textit{B}]{
 		\includegraphics[width=.45\textwidth]{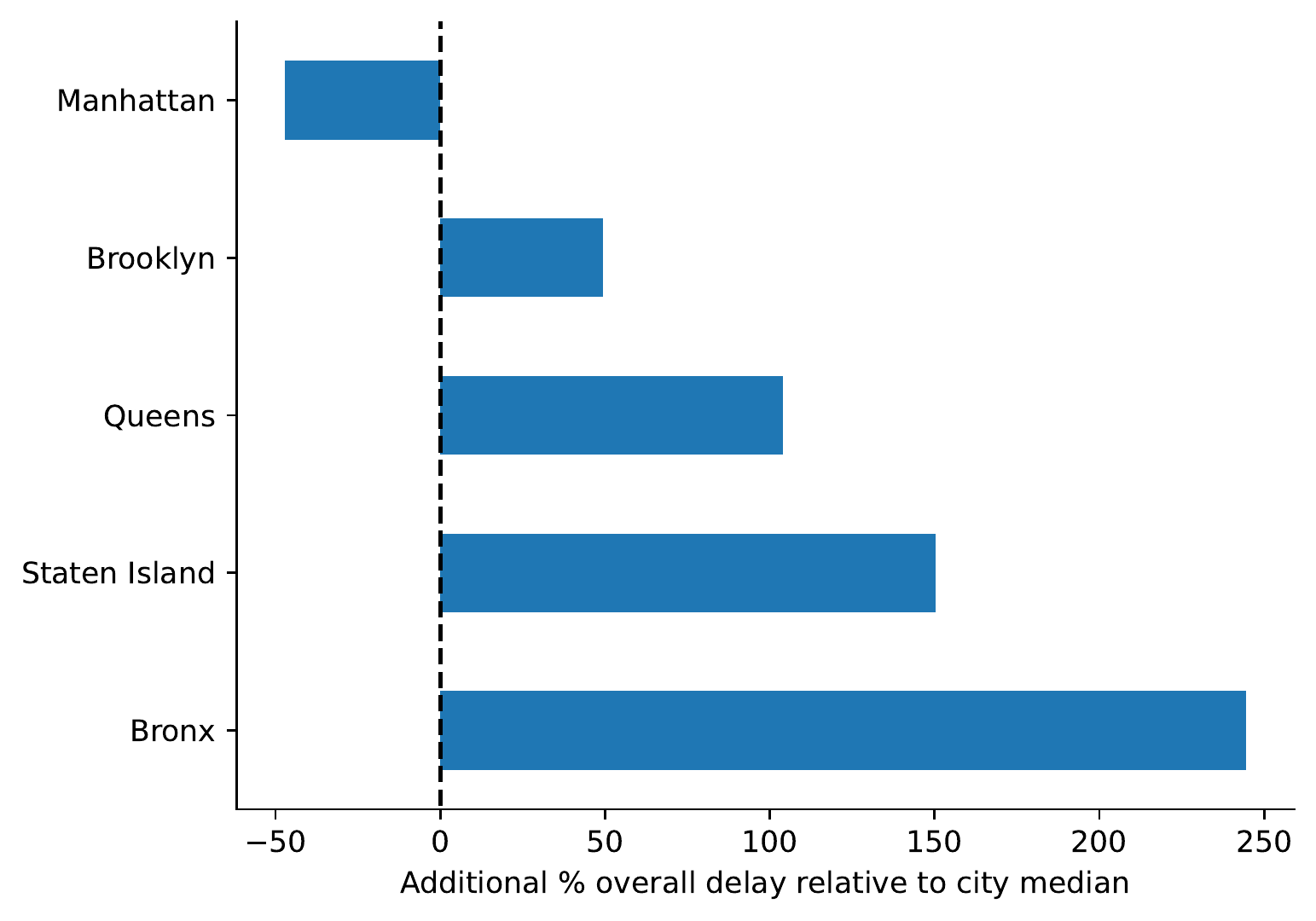}
   }
      \subfloat[][Risk prioritization \textit{C}]{
 		\includegraphics[width=.45\textwidth]{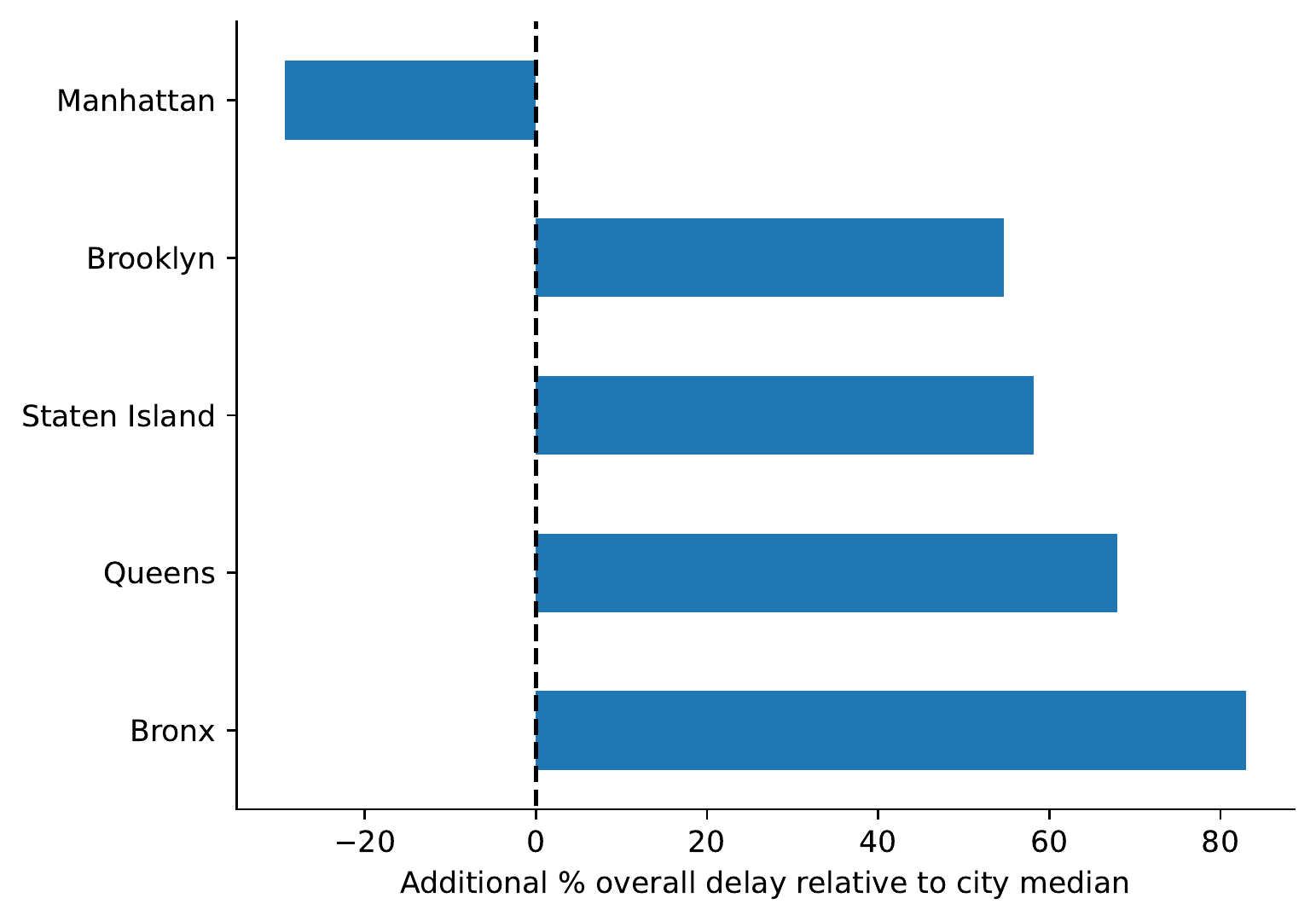}
   }
 	\caption{Same as \Cref{fig:delaysall_equity} except for other risk prioritization levels. Risk prioritization group D is omitted due to a low percentage of incidents that eventually got worked on.}
 \end{figure}

 \FloatBarrier
\FloatBarrier
\subsection{Preprocessing for Chicago dataset}\label{app:chicagopreprocessing}

In this section, we detail the preprocessing done on the Chicago dataset before model training.

For each request, we also have location and incident-level covariates. We further observe whether and when this service request is marked as `completed', `open' or `canceled', and whether this is a duplicate request; if the request is indeed a duplicate, CDOT and CDWM mark which other requests refer to the same incident. At most three timestamps are associated with each service request: for `open' service requests, we observe the created time and last modified time; for `completed' and `canceled' service requests, we further observe the closed time. One crucial difference in the Chicago dataset, compared with New York City, is that duplicate reports on ``open'' incidents are also marked. This requires different treatments in constructing observation intervals, as we detail below. 

\paragraph{Constructing an observation interval $(S_i, E_i]$} 
For Chicago reports, we make the following choice. Let $\tilde{t}_i$ be the first time an incident is reported, $t_i^{\text{CLOSED}}$ be the time an incident is marked closed, and $\bar{t}$ be the time that we retrieved the dataset (and that it was last updated), 7/4/2022 at 21:40 EDT. Then, $E_i$ of each incident $i$ is:
\begin{equation}
	E_i = \min \left\{100\text{ days}+\tilde{t}_i,\  t_i^{\text{CLOSED}}, \ \bar{t}\right\}. 	\label{eq:durationchicago}
\end{equation}

The choice of adding $\bar{t}$ ensures that for ``open'' incidents, the observation interval is not overly long, which could bias our estimates for the reporting rate downward. 

Though the Chicago dataset contains reports that were generated as early as 7/1/2018, it was not until 2/27/2019 that reports started to have a closed time associated with them\footnote{In private correspondence with the team responsible for the maintenance of this dataset, they pointed out that the current version of the 311 system went live on 12/18/2018, and that the loss of data is likely due to both migration of data from an older version of the system and not correctly logging data at the very beginning of the current version. However, they also confirmed that beyond 2/28/2019, the integrity of data should not be a concern.}. As a consequence, we filter out any reports made prior to 3/1/2019 and filter out ``E-scooter" and ``Vehicle Parked in Bike Lane Complaint" incidents, for which no duplicates are marked. After this, we are left with \num{949352} reports, which represent \num{698365} unique incidents. Appendix \Cref{tab:chicagosummary} lists some summary statistics of this dataset at this stage. We further filter out any incidents that have a negative or extremely short ($<0.01$ days) observation interval. These most likely represent human errors in logging the time. This leaves us with a total of \num{575882} unique incidents. 

\begin{table}[t]
\caption{Summary statistics from the Chicago dataset, after filtering out service requests prior to 2019/03/01 and categories with no duplicates. There are a total of 28 categories and we selected the top 15 in terms of number of reports. We note that the variation in median days to completion is large in this dataset, and some categories have extremely short completion time. The subsequent filtering of incidents with short duration addresses the concern for mislabeling incidents as completed.}
\label{tab:chicagosummary}
\centering
\resizebox{.9\textwidth}{!}{%

\begin{tabular}{r|r|rr|rrr}
\multicolumn{1}{c|}{\textbf{}}             & \multicolumn{1}{c|}{\textbf{\begin{tabular}[c]{@{}c@{}}Service \\ requests\end{tabular}}} & \multicolumn{2}{c|}{\textbf{Completions}}                                                                                                                          & \multicolumn{3}{c}{\textbf{\begin{tabular}[c]{@{}c@{}}Incidents\\ (from all reports)\end{tabular}}}                                                                                                                                             \\
\multicolumn{1}{l|}{}                      & \multicolumn{1}{l|}{}                                                                     & \multicolumn{1}{c}{\begin{tabular}[c]{@{}c@{}}Completed \\ SRs\end{tabular}} & \multicolumn{1}{c|}{\begin{tabular}[c]{@{}c@{}}Percentage\\ completed\end{tabular}} & \multicolumn{1}{c}{\begin{tabular}[c]{@{}c@{}}Unique\\ incidents\end{tabular}} & \multicolumn{1}{c}{\begin{tabular}[c]{@{}c@{}}Avg. reports\\per incident\end{tabular}} & \multicolumn{1}{c}{\begin{tabular}[c]{@{}c@{}}Median Days\\to Completion\end{tabular}} \\ \hline
\multicolumn{1}{l|}{\textbf{Total number}} &

   \num{949352} &     \num{853989} &                 0.9 &           \num{698365} &                  1.36 &                             6.676 \\
\multicolumn{1}{l|}{\textbf{By Owner Department}}   &                                                                                           &                                                                              &                                                                                     &                                                                                &                                                                                &                                                                                     \\
      \textit{CDOT} &    \num{789063} &     \num{720925} &                0.91 &           \num{559777} &                  1.41 &                             6.783 \\
\textit{CDWM}  &    \num{160289} &     \num{133064} &                0.83 &           \num{138969} &                  1.15 &                             5.704 \\
\multicolumn{1}{l|}{\textbf{By Top 15 Categories}}  &                                                                                           &                                                                              &                                                                                     &                                                                                &                                                                                &                                                                                     \\
\textit{Street Light Out Complaint} &    \num{225408} &     \num{209441} &                0.93 &           \num{122302} &                  1.84 &                             9.786 \\
                       \textit{Pothole in Street Complaint} &    \num{200132} &     \num{189879} &                0.95 &           \num{138832} &                  1.44 &                            10.450 \\
             \textit{Sign Repair Request - All Other Signs} &     \num{96019} &      \num{88775} &                0.92 &            \num{91295} &                  1.05 &                             0.002 \\
                      \textit{Traffic Signal Out Complaint} &     \num{75828} &      \num{69443} &                0.92 &            \num{58705} &                  1.29 &                             0.283 \\
                         \textit{Alley Light Out Complaint} &     \num{53010} &      \num{49329} &                0.93 &            \num{31831} &                  1.67 &                            65.671 \\
                 \textit{Sewer Cleaning Inspection Request} &     \num{39120} &      \num{34390} &                0.88 &            \num{36056} &                  1.08 &                            24.965 \\
                           \textit{Alley Pothole Complaint} &     \num{36333} &      \num{33981} &                0.94 &            \num{25528} &                  1.42 &                            28.973 \\
                         \textit{Water On Street Complaint} &     \num{29494} &      \num{25605} &                0.87 &            \num{26048} &                  1.13 &                             5.517 \\
                       \textit{Sidewalk Inspection Request} &     \num{25515} &       9,821 &                0.38 &            \num{21724} &                  1.17 &                           323.228 \\
                       \textit{Open Fire Hydrant Complaint} &     \num{25105} &      \num{21103} &                0.84 &            \num{15548} &                  1.61 &                             0.313 \\
                  \textit{Sewer Cave-In Inspection Request} &     \num{22308} &      \num{18301} &                0.82 &           \num{21300} &                  1.05 &                            58.093 \\
               \textit{Snow – Uncleared Sidewalk Complaint} &     \num{19532} &      \num{18519} &                0.95 &            \num{17338} &                  1.13 &                             6.810 \\
                       \textit{Water in Basement Complaint} &     \num{16952} &      \num{15563} &                0.92 &            \num{16292} &                  1.04 &                             0.821 \\
                   \textit{Sign Repair Request - Stop Sign} &     \num{16850} &      \num{16746} &                0.99 &            \num{16023} &                  1.05 &                             0.128 \\
                \textit{Street Light Pole Damage Complaint} &     \num{16171} &      \num{14393} &                0.89 &            \num{15194} &                  1.06 &                             1.204 \\
\end{tabular}%
}

\end{table}

\paragraph{Covariate selection and processing} Next, we select the covariates that compose type $\theta$. Similar to the NYC dataset, most service requests in Chicago come with latitude-longitude coordinates, using which we identified which of the over 2,000 census block groups in Chicago this incident is in, through the FCC API. The reason we used census block groups in Chicago instead of census tracts, which were used in NYC, is that the Chicago dataset contains far more incidents to allow finer-grained analyses; further, there are approximately as many census block groups in Chicago as there are census tracts in NYC. We then join this information with the 2020 Census Data from IPUMS NHGIS. Finally, we log transform several variables, filter out the incidents for which any of the covariates are missing, and standardize all data. During this process, we filter out \num{10452} incidents (1.80\% of the total number of incidents), due to either missing covariates or unable to match them to census tracts, and are left with \num{565430} unique incidents for our further analysis, which represent \num{794132} unique service requests. \Cref{fig:numreportshistchicago} shows the histogram of the number of reports per incident during the observation interval; Appendix \Cref{fig:durationhistchicago} shows the distribution of durations. Appendix \Cref{tab:chicagocovariates} lists the covariates we use. 

\begin{figure}[tb]
	\centering
	\subfloat[][Number of reports per incident.]{
		\includegraphics[width=.45\textwidth]{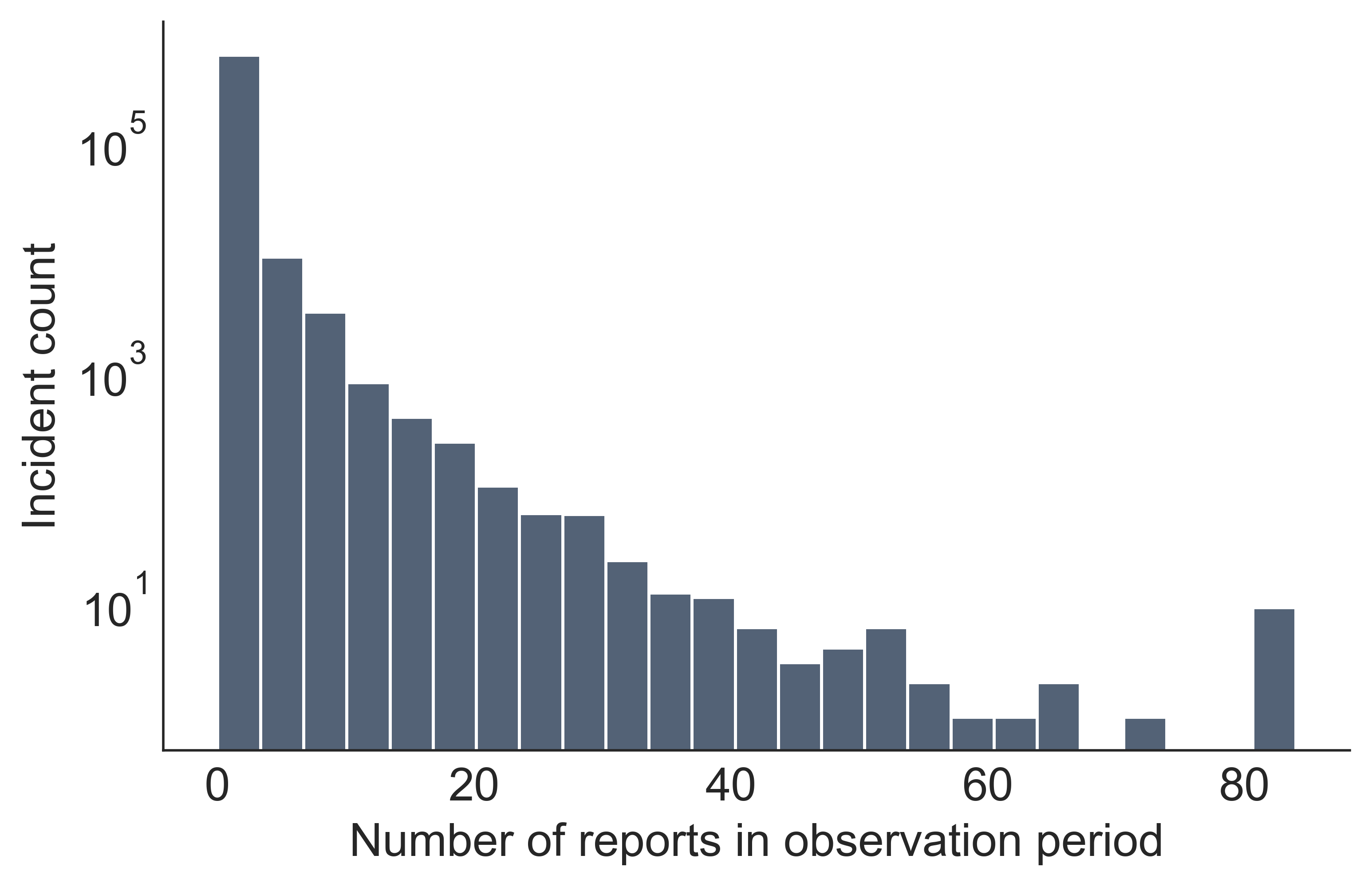}
		\label{fig:numreportshistchicago}
	}
	\hfill
	\subfloat[][Length of observation period.]{
		\includegraphics[width=.45\textwidth]{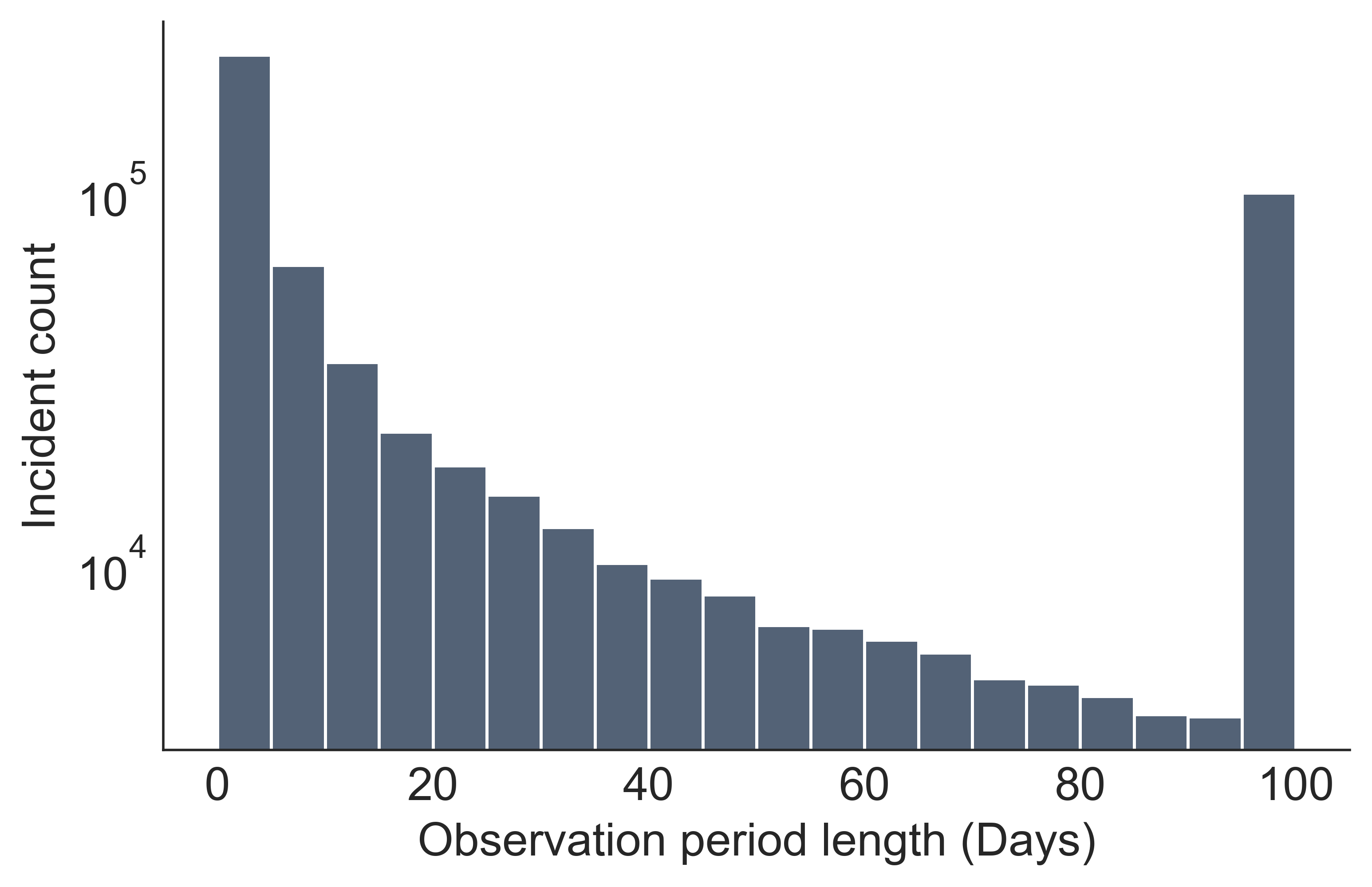}
		\label{fig:durationhistchicago}
	}
	\caption{Distribution of number of reports and length of observation for each unique incident in the Chicago aggregated dataset. For most incidents, there are no reports after the first report (at least not in the observation period). There is one incident not included in the histogram of number of reports that attracted 162 reports, which happened to be a malfunctioning traffic light at a busy junction. There is a peak at 100 days for the observation period, due to our configuration in \Cref{eq:durationchicago}.}
\end{figure}

\begin{table}[tbh]
    \centering
    \caption{Description of covariates in the Chicago aggregated dataset}
    \label{tab:chicagocovariates}
    \resizebox{\textwidth}{!}{%
    \begin{tabular}{l|l}
    \multicolumn{1}{c|}{\textbf{Covariate}} & \multicolumn{1}{c}{\textbf{Description}}                          \\ \hline
    \textbf{Incident Global ID}             & An identifier unique to each incident.                            \\ \hline
    \textbf{Duration}              & The observation duration as defined in \Cref{eq:durationchicago}   
    \\ \hline
    \textbf{Owner Department}           & Which department (CDOT or DWM) the service request is directed to.  
    \\ \hline
    \textbf{Service Request Type}           & The incident type as reported.                                    \\ \hline
    \textbf{Created Month}           & The month that the first report of each incident came in.                                    \\ \hline
    \textbf{Census Block Group}                   & Which census block group the incident occurred in as reported.          \\ \hline
\textbf{Median Age}                     & Median age in the census block group.                                                     \\ \hline
\textbf{Fraction Hispanic}              & Fraction of residents that identify as Hispanic in the census block group.                \\ \hline
\textbf{Fraction white}                 & Fraction of residents that identify as white in the census block group.                   \\ \hline
\textbf{Fraction Black}                 & Fraction of residents that identify as Black in the census block group.                   \\ \hline
\textbf{Fraction no high school degree}              & Fraction of residents that have not graduated from high school in the census block group. \\ \hline
\textbf{Fraction college degree}          & Fraction of residents that have graduated from college in the census block group.         \\ \hline
\textbf{Fraction poverty}            & Fraction of residents that are identified to be in poverty in the census block group.     \\ \hline
\textbf{Fraction renter}                & Fraction of residents that rent their current residence in the census block group.        \\ \hline
\textbf{Fraction family}           & Fraction of family household in the census block group.         \\ \hline
\textbf{Median household value}     & Median value of household in the census block group. \\ \hline
\textbf{Income per capita}                     & Income per capita of residents in the census block group.                                    \\ \hline
\textbf{Density}                        & Population density in the census block group.                        
    \end{tabular}%
    }
    \end{table}

\paragraph{Sampling the dataset for tractability} Due to computational constraints, when training the Stan models on Chicago data, we randomly sample \num{100000} incidents for each model run.

\subsection{Results from Chicago dataset}\label{app:chicagoresults}

In this section, we provide results from applying our methods to the Chicago dataset. Appendix \Cref{fig:chicagoppc} confirms the posterior samples from the zero-inflated model with Base variables reasonably match the observed distribution; Appendix \Cref{tab:chicagobasic} lists the coefficients for the set of Base variables; Appendix \Cref{fig:spatialmapchicago} illustrates the coefficients on census tract Spatial covariates; Appendix \Cref{tab:censuscoefficientschicago} and Appendix \Cref{tab:chicagomultidemo} show the coefficients on census tract Socioeconomic covariates.

\begin{figure}[tb]
	\centering
    \includegraphics[width=.38\textwidth]{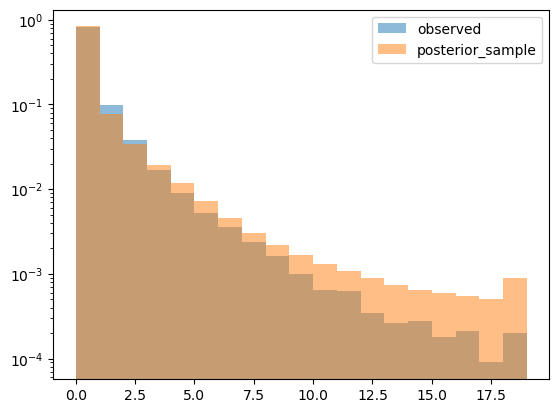}
    \includegraphics[width=.38\textwidth]{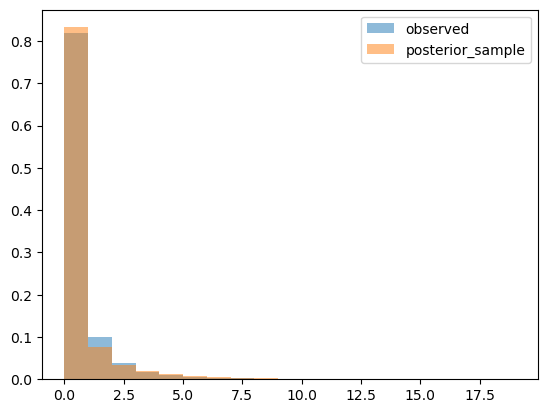}
    \hfill
	\caption{Comparison between posterior distributions sampled from the Zero-inflated Poisson regression model and the observed distribution in the data. The left-hand side plot is in log scale for the $y$ axis, while the right-hand side plot is in the natural scale. The posterior samples reasonably match the observed distribution. Note that due to sampling of the full dataset, the observed distribution is not entirely the same as in Appendix \Cref{fig:numreportshistchicago}.}\label{fig:chicagoppc}
\end{figure}

\begin{table}
\centering
\caption{Regression coefficients for Base variables in Chicago, for Max Duration 100 days}\label{tab:chicagobasic}
\begin{tabular}{lrrr}
\toprule
{} &   Mean &  StdDev &  R\_hat \\
\midrule
Intercept                                          & -4.053 &   0.061 &    1.0 \\
Zero Inflation fraction                            &  0.539 &   0.003 &    1.0 \\
Category[Sewer Cave-In Inspection Request]         & -1.645 &   0.094 &    1.0 \\
Category[Water On Street Complaint]                &  0.523 &   0.078 &    1.0 \\
Category[Water in Basement Complaint]              &  0.627 &   0.115 &    1.0 \\
Category[Alley Sewer Inspection Request]           & -0.908 &   0.127 &    1.0 \\
Category[Sewer Cleaning Inspection Request]        & -1.018 &   0.074 &    1.0 \\
Category[Pavement Cave-In Inspection Request]      & -2.996 &   1.122 &    1.0 \\
Category[Protected Bike Lane - Debris Removal]     & -0.308 &   0.285 &    1.0 \\
Category[Sidewalk Inspection Request]              & -2.246 &   0.106 &    1.0 \\
Category[Sign Repair Request - Stop Sign]          &  2.764 &   0.111 &    1.0 \\
Category[Sign Repair Request - One Way Sign]       &  2.147 &   0.212 &    1.0 \\
Category[Sign Repair Request - Do Not Enter Sign]  &  1.080 &   0.474 &    1.0 \\
Category[Sign Repair Request - All Other Signs]    & -1.413 &   0.077 &    1.0 \\
Category[Bicycle Request/Complaint]                & -0.878 &   0.184 &    1.0 \\
Category[Alley Pothole Complaint]                  &  0.265 &   0.067 &    1.0 \\
Category[Pothole in Street Complaint]              &  0.831 &   0.062 &    1.0 \\
Category[Alley Light Out Complaint]                &  0.383 &   0.064 &    1.0 \\
Category[Traffic Signal Out Complaint]             &  2.569 &   0.064 &    1.0 \\
Category[Viaduct Light Out Complaint]              & -0.491 &   0.163 &    1.0 \\
Category[Street Light Out Complaint]               &  1.767 &   0.061 &    1.0 \\
Category[Street Light Pole Damage Complaint]       &  0.680 &   0.117 &    1.0 \\
Category[Street Light On During Day Complaint]     & -0.267 &   0.106 &    1.0 \\
Category[Street Light Pole Door Missing Complaint] & -2.499 &   0.280 &    1.0 \\
Category[Snow Removal - Protected Bike Lane or ... & -2.457 &   0.813 &    1.0 \\
Category[Snow – Uncleared Sidewalk Complaint]      &  0.286 &   0.086 &    1.0 \\
Category[No Water Complaint]                       &  2.433 &   0.124 &    1.0 \\
Category[Low Water Pressure Complaint]             & -0.961 &   0.168 &    1.0 \\
Category[Open Fire Hydrant Complaint]              &  4.221 &   0.065 &    1.0 \\
Category[Water Quality Concern]                    & -2.491 &   0.397 &    1.0 \\
\bottomrule
\end{tabular}
\end{table}

\begin{table}
\centering
\caption{Regression coefficients for Base variables in Chicago, for Max Duration 30 days}\label{tab:chicagobasic30}
\begin{tabular}{lrrr}
\toprule
{} &   Mean &  StdDev &  R\_hat \\
\midrule
Intercept                                          & -3.308 &   0.048 &    1.0 \\
Zero Inflation fraction                            &  0.620 &   0.003 &    1.0 \\
Category[Sewer Cave-In Inspection Request]         & -1.356 &   0.089 &    1.0 \\
Category[Water On Street Complaint]                &  0.592 &   0.069 &    1.0 \\
Category[Water in Basement Complaint]              &  0.782 &   0.109 &    1.0 \\
Category[Alley Sewer Inspection Request]           & -0.373 &   0.116 &    1.0 \\
Category[Sewer Cleaning Inspection Request]        & -0.733 &   0.065 &    1.0 \\
Category[Pavement Cave-In Inspection Request]      & -1.494 &   0.664 &    1.0 \\
Category[Protected Bike Lane - Debris Removal]     & -0.219 &   0.344 &    1.0 \\
Category[Sidewalk Inspection Request]              & -2.156 &   0.120 &    1.0 \\
Category[Sign Repair Request - Stop Sign]          &  2.514 &   0.094 &    1.0 \\
Category[Sign Repair Request - One Way Sign]       &  1.666 &   0.222 &    1.0 \\
Category[Sign Repair Request - Do Not Enter Sign]  &  0.194 &   0.500 &    1.0 \\
Category[Sign Repair Request - All Other Signs]    & -1.189 &   0.068 &    1.0 \\
Category[Bicycle Request/Complaint]                & -0.317 &   0.166 &    1.0 \\
Category[Alley Pothole Complaint]                  &  0.086 &   0.060 &    1.0 \\
Category[Pothole in Street Complaint]              &  0.542 &   0.050 &    1.0 \\
Category[Alley Light Out Complaint]                &  0.070 &   0.054 &    1.0 \\
Category[Traffic Signal Out Complaint]             &  2.733 &   0.051 &    1.0 \\
Category[Viaduct Light Out Complaint]              & -0.639 &   0.183 &    1.0 \\
Category[Street Light Out Complaint]               &  1.490 &   0.049 &    1.0 \\
Category[Street Light Pole Damage Complaint]       &  0.190 &   0.118 &    1.0 \\
Category[Street Light On During Day Complaint]     & -0.968 &   0.124 &    1.0 \\
Category[Street Light Pole Door Missing Complaint] & -1.309 &   0.222 &    1.0 \\
Category[Snow Removal - Protected Bike Lane or ... & -1.584 &   0.562 &    1.0 \\
Category[Snow – Uncleared Sidewalk Complaint]      &  0.037 &   0.076 &    1.0 \\
Category[No Water Complaint]                       &  1.849 &   0.110 &    1.0 \\
Category[Low Water Pressure Complaint]             & -1.258 &   0.184 &    1.0 \\
Category[Open Fire Hydrant Complaint]              &  3.686 &   0.055 &    1.0 \\
Category[Water Quality Concern]                    & -2.835 &   0.597 &    1.0 \\
\bottomrule
\end{tabular}
\end{table}

\begin{table}
\centering
\caption{Regression coefficients for Base variables in Chicago, for Max Duration 200 days}\label{tab:chicagobasic200}
\begin{tabular}{lrrr}
\toprule
{} &   Mean &  StdDev &  R\_hat \\
\midrule
Intercept                                          & -4.448 &   0.082 &    1.0 \\
Zero Inflation fraction                            &  0.496 &   0.003 &    1.0 \\
Category[Sewer Cave-In Inspection Request]         & -1.537 &   0.107 &    1.0 \\
Category[Water On Street Complaint]                &  0.448 &   0.094 &    1.0 \\
Category[Water in Basement Complaint]              &  1.050 &   0.113 &    1.0 \\
Category[Alley Sewer Inspection Request]           & -0.823 &   0.140 &    1.0 \\
Category[Sewer Cleaning Inspection Request]        & -1.081 &   0.092 &    1.0 \\
Category[Pavement Cave-In Inspection Request]      & -2.813 &   1.231 &    1.0 \\
Category[Protected Bike Lane - Debris Removal]     &  0.070 &   0.340 &    1.0 \\
Category[Sidewalk Inspection Request]              & -2.287 &   0.111 &    1.0 \\
Category[Sign Repair Request - Stop Sign]          &  2.458 &   0.113 &    1.0 \\
Category[Sign Repair Request - One Way Sign]       &  2.830 &   0.202 &    1.0 \\
Category[Sign Repair Request - Do Not Enter Sign]  & -1.655 &   0.933 &    1.0 \\
Category[Sign Repair Request - All Other Signs]    & -1.418 &   0.094 &    1.0 \\
Category[Bicycle Request/Complaint]                & -1.372 &   0.189 &    1.0 \\
Category[Alley Pothole Complaint]                  &  0.378 &   0.085 &    1.0 \\
Category[Pothole in Street Complaint]              &  0.976 &   0.082 &    1.0 \\
Category[Alley Light Out Complaint]                &  0.515 &   0.084 &    1.0 \\
Category[Traffic Signal Out Complaint]             &  2.661 &   0.083 &    1.0 \\
Category[Viaduct Light Out Complaint]              & -0.298 &   0.152 &    1.0 \\
Category[Street Light Out Complaint]               &  2.026 &   0.082 &    1.0 \\
Category[Street Light Pole Damage Complaint]       &  0.987 &   0.124 &    1.0 \\
Category[Street Light On During Day Complaint]     & -0.668 &   0.118 &    1.0 \\
Category[Street Light Pole Door Missing Complaint] & -1.728 &   0.188 &    1.0 \\
Category[Snow Removal - Protected Bike Lane or ... & -2.747 &   1.161 &    1.0 \\
Category[Snow – Uncleared Sidewalk Complaint]      &  0.877 &   0.099 &    1.0 \\
Category[No Water Complaint]                       &  2.342 &   0.131 &    1.0 \\
Category[Low Water Pressure Complaint]             & -0.877 &   0.186 &    1.0 \\
Category[Open Fire Hydrant Complaint]              &  4.165 &   0.085 &    1.0 \\
Category[Water Quality Concern]                    & -2.479 &   0.426 &    1.0 \\
\bottomrule
\end{tabular}
\end{table}


\begin{table}
\centering
\caption{Census Block Group Socio-economic coefficients in Chicago, estimated alone in a regression alongside
the incident-specific covariates.}
\label{tab:censuscoefficientschicago}
\begin{tabular}{lrrrrr}
\toprule
                              &   Mean &  StdDev &   2.5\% &  97.5\% \\
\midrule
                    Median age & -0.017 &   0.005 & -0.028 & -0.007 \\
             Fraction Hispanic &  0.063 &   0.005 &  0.053 &  0.073 \\
                Fraction white &  0.000 &   0.006 & -0.011 &  0.011 \\
                Fraction Black & -0.022 &   0.005 & -0.033 & -0.012 \\
Fraction no high school degree &  0.051 &   0.005 &  0.040 &  0.060 \\
       Fraction college degree & -0.051 &   0.006 & -0.063 & -0.041 \\
              Fraction poverty &  0.011 &   0.005 & -0.001 &  0.020 \\
               Fraction renter & -0.008 &   0.006 & -0.020 &  0.003 \\
               Fraction family &  0.029 &   0.006 &  0.017 &  0.039 \\
       Log(Median house value) & -0.056 &   0.007 & -0.069 & -0.044 \\
        Log(Income per capita) & -0.051 &   0.005 & -0.063 & -0.041 \\
                  Log(Density) &  0.070 &   0.005 &  0.059 &  0.081 \\
\bottomrule
\end{tabular}
\end{table}

\begin{table}
\centering
\caption{Census Block Group Socio-economic coefficients in Chicago, estimated together in a regression alongside the incident-level covariates.}
\label{tab:chicagomultidemonodensity}
\begin{tabular}{lrrrrr}
\toprule
                         &   Mean &  StdDev &   2.5\% &  97.5\% \\
\midrule
             Median age &  -0.009 &   0.003 & -0.015 &  -0.004 \\
         Fraction white &  0.017 &   0.003 &  0.010 &  0.023 \\
Fraction college degree & -0.054 &   0.006 & -0.066 & -0.042 \\
        Fraction renter & -0.006 &   0.003 & -0.011 & -0.001 \\
 Log(Income per capita) & -0.003 &   0.006 & -0.015 &  0.009 \\
\bottomrule
\end{tabular}
\end{table}

\begin{table}
\centering
\caption{Census Block Group Socio-economic coefficients in Chicago, estimated together in a regression alongside the incident-level covariates. Compared to \Cref{tab:chicagomultidemonodensity}, we further control for log population density, which only affects the direction of association of median age.}
\label{tab:chicagomultidemo}
\begin{tabular}{lrrrrr}
\toprule
                         &   Mean &  StdDev &   2.5\% &  97.5\% \\
\midrule
             Median age &  0.008 &   0.006 & -0.004 &  0.019 \\
         Fraction white &  0.021 &   0.007 &  0.007 &  0.035 \\
Fraction college degree & -0.037 &   0.011 & -0.059 & -0.015 \\
        Fraction renter & -0.037 &   0.006 & -0.051 & -0.025 \\
 Log(Income per capita) & -0.056 &   0.012 & -0.081 & -0.035 \\
           Log(Density) &  0.091 &   0.006 &  0.078 &  0.101 \\
\bottomrule
\end{tabular}
\end{table}

\FloatBarrier
\subsection{Stan source code for the \textbf{Zero-inflated} \textbf{Spatial} Poisson regression model}\label{app:stancode}

\begin{lstlisting}[language = Stan]

functions {// Using reduce sum for within-chain parallel processing.
// function for calculating the log-likelihood
real partial_sum_lpmf(int[] y_slice,
                        int start, int end,
                        matrix X_total,
                        vector logduration,
                        vector beta_total, real theta_zeroinflation) {
    int Nloc = end - start;
    real localtarget = 0;
    // two cases for calculation reflecting zero-inflation
    for (n in 1:Nloc+1) {
      int ind = start + n - 1; 
        if (y_slice[n] == 0) {
          localtarget += 
          log_sum_exp(bernoulli_lpmf(1 | theta_zeroinflation),
              bernoulli_lpmf(0 | theta_zeroinflation)
            + poisson_log_glm_lpmf(y_slice[n] | X_total[ind:ind, :], 
                  logduration[ind:ind], beta_total)
                  );
        } else {
          localtarget += 
          bernoulli_lpmf(0 | theta_zeroinflation)
              + poisson_log_glm_lpmf(y_slice[n] | X_total[ind:ind, :], 
                ogduration[ind:ind], beta_total
                    ); 
        }
    }
    return localtarget;
  }
}

data {// Define variables in data
  int<lower=0> N_incidents;// number of observations
  int<lower=0> covariate_matrix_width; // covariate matrix width
  // design matrix for other covariates
  matrix[N_incidents, covariate_matrix_width] X;
  vector<lower=1,upper=1>[N_incidents] ones; // vector of ones
  int<lower=0> N_category; // number of categories
  int<lower=0> N_tract; // number of tracts 
  int<lower=0> N_edges; // tract adjacency matrix number of edges 
  // node1[i] adjacent to node2[i]
  int<lower=1, upper=N_tract> node1[N_edges];  
  // and node1[i] < node2[i]
  int<lower=1, upper=N_tract> node2[N_edges];  
  // design matrix for category
  matrix[N_incidents, N_category] X_category;  
  matrix[N_incidents, N_tract] X_tract;  // design matrix for tract
  vector<lower=0>[N_incidents] duration;// alive time for incident
  // count outcome - duplicates for the incident
  int<lower=0> y[N_incidents]; 
}

transformed data {// Transform for succinctness and better performance
  vector[N_incidents] logduration; // log of duration
  logduration = log(duration);
  // total design matrix
  matrix[N_incidents, 1 + N_tract +
            covariate_matrix_width + N_category] X_total; 
  X_total = append_col(append_col(append_col(ones, 
                X_tract), X_category), X);
}

parameters {// Define parameters to estimate
  vector[N_tract - 1] beta_tract_raw; // coefficents for tract
  vector[N_category-1] beta_category_raw; // coefficients for category
  // coefficents for other covariates in the model
  vector[covariate_matrix_width] beta; 
  real intercept; // intercept term in the model
  // zero inflation parameter
  real<lower=0, upper=1> theta_zeroinflation; 
}

transformed parameters  {// Transform parameters for better performance
  vector[N_tract] beta_tract; // coefficents for tract
  vector[N_category] beta_category; // coefficients for category
  // full coefficient vector
  vector[1 + N_tract + covariate_matrix_width + N_category] beta_total; 
  // zero centering coefficients
  beta_category = append_row(beta_category_raw, 
                      -sum(beta_category_raw)); 
  // zero centering coefficients
  beta_tract = append_row(beta_tract_raw, 
                  -sum(beta_tract_raw)); 
  // combine coefficient vectors to get full coefficient vector
  beta_total = append_row(append_row(append_row(intercept, 
    beta_tract), beta_category), beta); 
}

model {// Prior part of Bayesian inference 
  beta_tract ~ normal(0, 1); 
  beta_category ~ normal(0, 2*inv(sqrt(1 - inv(N_category)))); 
  beta ~ normal(0, 1); 
  intercept ~ normal(0, 5);
  // Use reduce sum to calculate the target likelihood
  int grainsize = 1;
  target += reduce_sum(
    partial_sum_lpmf, y, grainsize, X_total, 
    logduration, beta_total, theta_zeroinflation
    );
  // add adjacency priors to beta_tract
  target += -5 * dot_self(beta_tract[node1] - beta_tract[node2]);
}

\end{lstlisting}

\FloatBarrier


\end{document}